\newcommand{\thickhline}{%
    \noalign {\ifnum 0=`}\fi \hrule height 1.5pt
    \futurelet \reserved@a \@xhline
}
\definecolor{mygreen}{RGB}{20,120,60}
\title{Stochastic Vertex Cover with Few Queries}
\author{
Soheil Behnezhad\footnote{Part of the work was done while the first and the third authors were interns at TTIC.}\\
{\em Stanford University}
\and 
Avrim Blum\thanks{This work was supported by the National Science Foundation under grant CCF-1733556.} \\
{\em TTIC}
\and 
Mahsa Derakhshan\footnotemark[1]\\
{\em Princeton University}
}
\date{}
\newcommand{\E}[0]{\ensuremath{\mathbb{E}}}
\DeclareMathOperator{\Var}{Var}
\DeclareMathOperator{\Cov}{Cov}
\newcommand{\MM}[1]{\ensuremath{\mathsf{MM}(#1)}}
\newcommand{\cru}[1]{\ensuremath{G}}
\newcommand{\inm}[1]{\ensuremath{Z}}
\newcommand{\Stau}[0]{\ensuremath{\epsilon^2 p}}
\DeclareMathOperator*{\argmax}{arg\,max}
\renewcommand{\b}[1]{\ensuremath{\bm{\mathrm{#1}}}}
\DeclareMathOperator{\poly}{poly}
\DeclareMathOperator{\polylog}{polylog}
\DeclareMathOperator{\var}{Var}
\renewcommand{\epsilon}[0]{\ensuremath{\varepsilon}}
\let\originalleft\left
\let\originalright\right
\renewcommand{\left}{\mathopen{}\mathclose\bgroup\originalleft}
\renewcommand{\right}{\aftergroup\egroup\originalright}
\newtheorem{theorem}{Theorem}[section]
\newtheorem{result}{Result}
\newtheorem{question}{Question}
\newtheorem{lemma}{Lemma}[section]
\newtheorem{proposition}[lemma]{Proposition}
\newtheorem{corollary}[lemma]{Corollary}
\newtheorem{definition}[lemma]{Definition}
\newtheorem{claim}[lemma]{Claim}
\newtheorem{observation}[lemma]{Observation}
\def\thm@space@setup{%
  \thm@preskip= 0.2cm
  \thm@postskip=\thm@preskip 
}
\definecolor{mygreen}{RGB}{20,155,20}
\definecolor{myred}{RGB}{195,20,20}
\definecolor{linkcolor}{RGB}{0,0,230}
\definecolor{mylightgray}{RGB}{230,230,230}
\definecolor{verylightgray}{RGB}{240,240,240}
\definecolor{commentcolor}{RGB}{120,120,120}
\newcommand{\smparagraph}[1]{
\par\addvspace{0.2cm}
\noindent \textbf{#1}
}
\newcommand{\etal}[0]{\textit{et al.}}
\newcommand{\mc}[1]{\ensuremath{\mathcal{#1}}}
\newcounter{myalgctr}
\newenvironment{tbox}{
\par\addvspace{0.2cm}
\begin{tcolorbox}[width=\textwidth,
                  enhanced,
                  boxsep=2pt,
                  left=1pt,
                  right=1pt,
                  top=4pt,
                  boxrule=1pt,
                  arc=0pt,
                  colback=white,
                  colframe=black,
                  unbreakable
                  ]
}{
\end{tcolorbox}
}
\newenvironment{tboxh}{
\par\addvspace{0.2cm}
\begin{tcolorbox}[width=\textwidth,
                  enhanced,
                  boxsep=2pt,
                  left=1pt,
                  right=1pt,
                  top=4pt,
                  boxrule=1pt,
                  arc=0pt,
                  colback=white,
                  colframe=black,
                  unbreakable,
                  float=t
                  ]
}{
\end{tcolorbox}
}
\newenvironment{graytbox}{
\par\addvspace{0.1cm}
\begin{tcolorbox}[width=\textwidth,
                  enhanced,
                  frame hidden,
                  boxsep=5pt,
                  left=1pt,
                  right=1pt,
                  top=2pt,
                  bottom=2pt,
                  boxrule=1pt,
                  arc=0pt,
                  colback=mylightgray,
                  colframe=black,
                  breakable
                  ]
}{
\end{tcolorbox}
}
\newcommand{\tboxhrule}[0]{\vspace{0.1cm} \hrule \vspace{0.2cm}}
\newenvironment{titledtbox}[1]{\begin{tbox}#1 \tboxhrule}{\end{tbox}}
\newenvironment{titledtboxh}[1]{\begin{tboxh}#1 \tboxhrule}{\end{tboxh}}
\newenvironment{tboxalg2e}[1]{
\refstepcounter{myalgctr}
	\begin{titledtbox}{\textbf{Algorithm \themyalgctr.} #1}
	\vspace{-0.2cm}
}
{
	\vspace{-0.3cm}
	\end{titledtbox}
}
\newenvironment{highlighttechnical}[0]{
\vspace{0.1cm}
\begin{tcolorbox}[width=\textwidth,
                  enhanced,
                  boxsep=2pt,
                  left=1pt,
                  right=1pt,
                  top=4pt,
                  boxrule=0.8pt,
                  arc=0pt,
                  colback=mylightgray,
                  colframe=black,
                  unbreakable
                  ]
}{
\end{tcolorbox}
}
\begin{document}
\setlength{\parskip}{0.2em}

\maketitle

\begin{abstract}
\setlength{\parskip}{0.3em}

We study the minimum vertex cover problem in the following stochastic setting. Let $G$ be an arbitrary given graph, $p \in (0, 1]$ a parameter of the problem, and let $G_p$ be a random subgraph that includes each edge of $G$ independently with probability $p$. We are unaware of the realization $G_p$, but can learn if an edge $e$ exists in $G_p$ by querying it. The goal is to find an approximate minimum vertex cover (MVC) of $G_p$ by querying few edges of $G$ non-adaptively.

\medskip
This stochastic setting has been studied extensively for various problems such as minimum spanning trees, matroids, shortest paths, and matchings. To our knowledge, however, no non-trivial bound was known for MVC prior to our work. In this work, we present a:
\begin{itemize}
	\item $(2+\epsilon)$-approximation for \textbf{general} graphs which queries $O(\frac{1}{\epsilon^3 p})$ edges per vertex, and a
	\item $1.367$-approximation for \textbf{bipartite} graphs which queries $\poly(1/p)$ edges per vertex.
\end{itemize}

Additionally, we show that at the expense of a triple-exponential dependence on $p^{-1}$ in the number of queries, the approximation ratio can be improved down to $(1+\epsilon)$ for bipartite graphs.

\medskip
Our techniques also lead to improved bounds for bipartite stochastic matching. We obtain a $0.731$-approximation with nearly-linear in $1/p$ per-vertex queries. This is the first result to break the prevalent $(2/3 \sim 0.66)$-approximation barrier in the $\poly(1/p)$ query regime, improving algorithms of [Behnezhad~\etal{}, SODA'19] and [Assadi and Bernstein, SOSA'19].
\end{abstract}
\clearpage

{

\hypersetup{
     linkcolor= black
}

\tableofcontents{}
\clearpage
}

\clearpage

\section{Introduction}\label{sec:intro}

We study the following {\em stochastic vertex cover} problem. Let $G=(V, E)$ be a given $n$-vertex graph, $p \in (0, 1]$ a parameter of the problem, and let $G_p \subseteq G$ be a random subgraph that includes each edge in $E$ independently with probability $p$. We are unaware of the realization of $G_p$, but can learn if an edge $e \in E$ is realized in $G_p$ by {\em querying} it. The goal is to find an approximate minimum vertex cover (MVC) of $G_p$ by querying, non-adaptively, few edges in $G$.

This stochastic setting has been studied extensively over the last two decades for various problems such as minimum spanning trees and matroids \cite{DBLP:conf/soda/GoemansV04, DBLP:journals/rsa/GoemansV06}, packing problems \cite{YM18}, shortest paths \cite{DBLP:journals/rsa/Vondrak07}, and most relevant to our work, matchings \cite{blumetal,blumetalOR,AKL16,AKL17,BR18,YM18,sagt19,sosa19,soda19,stoc20,focs20}. There has also been quite a lot of related work on  ``network reliability'' in random subgraphs; see the book \cite{colbourn1987combinatorics} for some classic results of the 1980's as well as \cite{DBLP:conf/stoc/Karger20,DBLP:journals/siamcomp/GuoJ19} and the references therein for more recent works. While this is by no means a comprehensive list of all the related works, we are, to our knowledge, the first to consider a {\em covering} problem in the setting.

It would be useful to overview the known bounds for the matching problem. It was shown by Blum~\etal{} \cite{blumetal,blumetalOR} that a $(1/2 - \epsilon)$-approximate matching of $G_p$ can be found by querying $O_{p, \epsilon}(1)$ edges of each vertex in $G$, where the dependence on $p$ was exponential. Later, Assadi, Khanna, and Li \cite{AKL16} improved the dependence on $p$ and obtained the same approximation with $\poly(1/\epsilon p)$ queries. Numerous follow up works \cite{AKL17,YM18,soda19,sosa19,stoc20,focs20} then improved the approximation ratio. Particularly, the algorithm of Assadi and Bernstein \cite{sosa19} (see also \cite{soda19}) obtains a $(2/3 - \epsilon)$-approximation with $\poly(1/\epsilon p)$ queries. It was already observed in \cite{AKL16} that $2/3$-approximation is a barrier for the problem. Recently, Behnezhad, Derakhshan, and Hajiaghayi \cite{stoc20} broke this barrier and showed that one can obtain a $(1-\epsilon)$-approximation with $O_{\epsilon, p}(1)$ per-vertex queries, where the dependence on $\epsilon$ and $p$ is super-polynomial. Determining the best approximation achievable via $\poly(1/\epsilon p)$ queries remains an important open question for the stochastic matching problem.

In light of this progress on the approximate matching problem, it is natural to ask whether the same can also be achieved for the dual minimum vertex cover problem. Particularly,

\begin{question}\label{question}
	Can we find an approximate MVC of $G_p$ by querying few, preferrably $\poly(1/p)$, edges of each vertex in the base graph $G$?
\end{question}

Observe that a vertex cover of $G$ is also a valid vertex cover of $G_p$ since $G_p \subseteq G$. However, since some of the edges of $G$ may not belong to $G_p$, the MVC of $G_p$ might be smaller than that of $G$. In fact, the MVC of $G$ may be as large as $\sfrac{1}{p}$ times the MVC of $G_p$ in expectation --- an example is when $G$ is simply a matching. It turns out that to obtain any constant approximation (independent of $p$), $\Omega(\sfrac{n}{p})$ total queries are necessary; see Theorem~\ref{theorem:oirwfno3iw}. One may wonder if randomly querying the edges of $G$ may help. However, we show that the set of queried edges must be picked with much more care --- see Section~\ref{sec:randomqueries} for why random queries do not work.

Perhaps the simplest known constant approximation for the MVC problem is through {\em maximal} matchings. The set of endpoints of the edges in a maximal matching is well-known to form a $2$-approximate MVC. For this to hold, however, the {\em maximality} of the matching is essential, and even a $(1-\epsilon)$-approximate maximum matching that is not maximal is not useful. Unfortunately, none of the works above on the stochastic matching problem yield a maximal matching of $G_p$. In fact, we prove a separation (Theorem~\ref{theorem:ourfiuer}) via a simple lower bound, that, unlike approximate matchings, finding a maximal matching of $G_p$ requires $\Omega(n \log n)$ total queries. The situation seems even more complicated on the algorithmic side. In fact, we do not know if a maximal matching of $G_p$ can be found with $o(n^2)$ queries (note that the whole graph $G_p$ can be learned with $O(n^2)$ queries).

\subsection{Our Results}

In this work, we make progress on Question~\ref{question} on several fronts. 

Our main end results are in the $\poly(1/p)$ query regime and we prove the following two results for general and bipartite graphs, respectively.

\begin{graytbox}
	\begin{result}[see Theorem~\ref{thm:generalvc}]\label{res:general}
		For any graph $G$, any $\epsilon > 0$, and any $p \in (0, 1]$, there is a poly-time algorithm that finds a $(2+\epsilon)$-approximate MVC of $G_p$ via $O(\frac{1}{\epsilon^3 p})$ per-vertex queries.
	\end{result}
\end{graytbox}

\begin{graytbox}
	\begin{result}[see Theorem~\ref{thm:bipartitevc}]\label{res:bipartitemain}
		For any bipartite graph $G$, and any $p \in (0, 1]$, there is a poly-time algorithm that finds a $1.367 (\approx \frac{e+1}{e})$-approximate MVC of $G_p$ via $\poly(1/p)$ per-vertex queries.
	\end{result}
\end{graytbox}

As discussed, $\Omega(\sfrac{1}{p})$ per-vertex queries are necessary to obtain any constant approximation; see Theorem~\ref{theorem:oirwfno3iw}. Therefore, Result~\ref{res:general} is asymptotically query-optimal. Note, on the other hand, that for $p = 1$, the problem reduces to the non-stochastic MVC problem. This means that, restricting ourselves to polynomial-time algorithms, the approximation ratio achieved in Result~\ref{res:general} is also optimal for general graphs (up to an additive $\epsilon$) under the Unique Games Conjecture \cite{DBLP:journals/jcss/KhotR08}.

For bipartite graphs, however, the UGC based lower bound does not hold. Indeed, an optimal MVC can be found in polynomial time. Result~\ref{res:bipartitemain} asserts that in the stochastic setting, too, one can get around the $2$-approximation barrier with just $\poly(1/p)$ per-vertex queries. 

To prove Result~\ref{res:bipartitemain} we prove a number of tools (overviewed in Section~\ref{sec:techniques}) that, in a sense, give a better understanding of matchings in stochastic graphs. Using these tools, we also obtain the guarantee of Corollary~\ref{cor:matching} for stochastic matching in bipartite graphs, improving the previous close to $2/3$ approximations of \cite{soda19,sosa19} (which respectively obtain $0.656$ and $(2/3-\epsilon)$-approximations) in the $\poly(1/p)$ query regime. We note that Corollary~\ref{cor:matching}, importantly, is the first result to break the $2/3$-approximation barrier of \cite{AKL16} with just $\poly(1/p)$ queries. 

\begin{corollary}\label{cor:matching} 
	For any bipartite graph $G$, and any $p \in (0, 1]$, there is a poly-time algorithm that finds a $0.731 (\approx \frac{e}{e+1})$-approximate matching of $G_p$ in expectation via $O(\frac{\log 1/p}{p})$ per-vertex queries.
\end{corollary}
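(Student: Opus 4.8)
The plan is to reuse the matching-structural tools behind Theorem~\ref{thm:bipartitevc} directly, rather than invoking the vertex-cover algorithm as a black box (which would cost $\poly(1/p)$ queries instead of the claimed $O(\frac{\log 1/p}{p})$). First I would fix the algorithm, which is the natural one: non-adaptively query a subgraph $Q\subseteq G$ of maximum degree $O(\frac{\log 1/p}{p})$, then output a maximum matching of the realized subgraph $Q_p$ (computable in polynomial time). Since $Q_p\subseteq G_p$, it suffices to build $Q$ so that $\E[\mu(Q_p)]\ge \frac{e}{e+1}\E[\mu(G_p)]$, where $\mu(\cdot)$ denotes the size of a maximum matching. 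Two reductions make this cleaner. By K\"onig's theorem, in any bipartite graph a maximum matching and a minimum vertex cover have the same size, so the dual (cover) machinery transfers --- consistent with $\frac{e}{e+1}$ being exactly the reciprocal of the $\frac{e+1}{e}$ cover ratio of Theorem~\ref{thm:bipartitevc}. And since the bipartite fractional-matching polytope is integral, it is enough to exhibit a \emph{fractional} matching of $Q_p$ of the target expected size, which is what the analysis actually produces.

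Next I would construct $Q$ by an iterative procedure that builds a sparse subgraph of $G$ approximately preserving matchings (in the spirit of prior stochastic-matching algorithms), augmented with the new ingredients of Section~\ref{sec:techniques} --- for instance handling vertices that are heavy in the realization separately, and using a K\"onig certificate of the residual graph to select which edges to retain at each step. Running the procedure to depth $\Theta(\frac{\log 1/p}{p})$ gives both the per-vertex query bound and enough ``room'' for the recovery argument below: the $1/p$ factor makes the expected number of realized edges accumulated around each relevant vertex a growing function of $1/p$ (hence concentrated), and the extra logarithmic factor is what pushes the per-edge recovery probability up to the threshold $\frac{e}{e+1}$.

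The analysis proceeds by charging: fix a maximum matching $M^\star$ of $G_p$ and turn it into a (fractional) matching of $Q_p$ by keeping every $e\in M^\star$ already in $Q_p$ and otherwise rerouting $e$ along a short alternating path inside $Q_p$ to rematch its endpoints. The tools guarantee that, after conditioning on the realization of the edges outside $Q$, such a rerouting is available with a controlled probability, and a global (rather than edge-local) accounting bounds the total expected loss by $\frac{1}{e+1}\E[\mu(G_p)]$. I expect the constant $\frac{1}{e+1}=1-\frac{e}{e+1}$ to emerge from optimizing the trade-off on the worst-case instance --- a disjoint union of scaled hard gadgets whose failure probability is governed by a $(1-1/R)^{R}\to e^{-1}$ bound, so that the fraction kept is $\frac{1}{1+1/e}=\frac{e}{e+1}$.

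The hard part will be the tight constant. Plain matching-sparsifier constructions are provably stuck at the $\frac{2}{3}$ barrier of \cite{AKL16}, so breaking it genuinely requires the bipartite/K\"onig refinement together with the heavy-vertex bookkeeping and a globally balanced charging argument rather than an edge-by-edge one. Two technical points will demand care: ensuring the many per-vertex ``a good alternative survives'' events interact well enough to be controlled simultaneously (this is where the logarithmic factor in the query bound is spent), and handling vertices of large realized degree, where naive charging over-counts and a separate argument is needed.
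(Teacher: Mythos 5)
You have the right outer shell --- query a subgraph $Q$ of maximum degree $O(\frac{\log 1/p}{p})$, output a maximum matching of $Q_p$, and reduce to exhibiting a fractional matching of $Q_p$ of expected size $\frac{e}{e+1}\E[\mu(G_p)]$ --- but the core of the argument is missing, and the analysis you sketch points in a direction that is not known to work. The paper's algorithm is simply Algorithm~\ref{alg:montecarlo} of \cite{soda19}: $Q := \MM{G_1}\cup\cdots\cup\MM{G_R}$ for $R=O(\frac{\log 1/p}{p})$ independently simulated realizations $G_i$; there is no K\"onig-certificate selection or heavy-vertex bookkeeping in the construction. The entire novelty is in the analysis, which has two concrete ingredients you do not identify. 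First, the crucial/non-crucial decomposition of \cite{soda19} (Claim~\ref{cl:stochasticmatchingfractional}): a fractional matching $\b{q}$ on $Q$ of size $(1-\epsilon)\E[\mu(G_p)]$ whose values on non-crucial edges are tiny ($q_e\le\epsilon^5 p$) and whose values on crucial edges equal $\Pr[e\in\MM{G_p}]$. Second, Lemma~\ref{lem:appvil}, proved via the new bipartite vertex-independent matching lemma (Lemma~\ref{lem:vertexindependent}): a proposal scheme on the crucial part produces an integral matching recovering a $(1-\sfrac{1}{e})$ fraction of the crucial mass, the independence guarantee for non-adjacent pairs lets one stack the (scaled-up) non-crucial mass on the unmatched endpoints essentially without loss, and the final $\frac{e}{e+1}$ comes from balancing this bound against the trivial bound that $\E[\mu(Q_p)]$ is at least the crucial mass alone.

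Your proposed analysis --- rerouting each edge of a fixed maximum matching $M^\star$ of $G_p$ along a short alternating path inside $Q_p$ --- is exactly the style of argument used in \cite{stoc20} and in Theorem~\ref{thm:bipartite1-eps}, and it is the approach the paper is explicitly moving away from: controlling the joint survival of many short alternating paths is what forces super-polynomial-in-$1/p$ query bounds there, and edge-by-edge (or path-by-path) charging is what the $2/3$ barrier of \cite{AKL16} obstructs. You acknowledge these difficulties but do not say how to overcome them; ``a globally balanced charging argument'' is a placeholder, not a proof. Likewise, your guess that $\frac{e}{e+1}$ arises as the reciprocal of the cover ratio, or from a $(1-1/R)^R$ failure probability of hard gadgets optimized over worst-case instances, does not match the actual mechanism (a $(1-S/d)^d$ bound inside the proposal scheme, followed by a two-case balance between the crucial-only and the combined bounds). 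As written, the proposal does not establish the corollary.
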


Finally, we turn our attention to the regime where super-polynomial-in-$1/p$ queries per-vertex are allowed. We show that in this setting, the approximation guarantee of Result~\ref{res:bipartitemain} can be improved all the way to $(1+\epsilon)$.

\begin{result}[see Theorem~\ref{thm:bipartite1-eps}]\label{res:biparite1+eps}
	For any bipartite graph $G$, any $\epsilon > 0$, and any $p \in (0, 1]$, there is a poly-time algorithm that finds a $(1+\epsilon)$-approximate MVC of $G_p$ via $O_{\epsilon, p}(1)$ per-vertex queries.
\end{result}

We note that the dependence of the number of per-vertex queries in Result~\ref{res:biparite1+eps} on $p$ is in the order $\exp(\exp(\exp(\poly(1/p))))$. It remains an important open problem to determine whether a $(1+\epsilon)$-approximation for bipartite graphs is achievable via $\poly(\sfrac{1}{\epsilon p})$ queries. (The same is also open for stochastic matching as discussed.)

\section{Main Techniques}\label{sec:techniques}

All of our algorithms in this paper for the stochastic MVC problem return a subset $C \subseteq V$ which is with probability one a vertex cover of $G_p$. That is, all the edges of $G_p$ have at least one endpoint in $C$ at all times. Let $Q$ denote the subset of edges in $G$ that we query and let $S$ denote the rest of the edges.  Observe that since we are unaware of the realization of edges in $S$, we have to cover them all, no matter which ones are realized. Therefore, once we fix the subgraph $Q$ to be queried, the ``best algorithm'' is well-defined: Report a MVC of graph $H := Q_p \cup S$ which includes the realized edges in $Q$, but all the edges in $S$.\footnote{Since exact MVC is NP-hard for general graphs, we actually end up using a different algorithm for Result~\ref{res:general}.} Observe that since we require $Q$ to be sparse, the vast majority of edges will be in $S$ and are always assumed to be realized. The challenge is to ensure that the extra covering constraints imposed by these edges do not increase the size of our vertex cover by much, compared to the actual minimum vertex cover of $G_p$.

The discussion above actually unveils an interesting connection between the stochastic vertex cover problem and the stochastic matching problem particularly in bipartite graphs where by K\"onig's famous theorem MVC and maximum matching have the same size. On the one hand, the stochastic matching problem asks for a subgraph $Q$, such that if we \textbf{remove} all the rest of the edges $S := E \setminus Q$ from $G_p$, the size of maximum matching in graph $G_p \setminus S$ remains close to that of $G_p$. On the other hand, the stochastic vertex cover problem asks for a subgraph $Q$ such that if we \textbf{add} all the edges $S := E \setminus Q$ to $G_p$, the size of the maximum matching (which equals the size of MVC) in $G_p \cup S$ remains close to that of $G_p$.

Now let us describe how we actually pick subgraph $Q$ to query, and how we analyze the size of the minimum vertex cover achieved by querying this subgraph. 

\smparagraph{The Half-Stochastic Matching Lemma:} This lemma, which we prove in Section~\ref{sec:HSML}, is one of the main components of our paper. It mainly provides a partitioning $Q, S$ of the edge-set $E$ of $G$. Let $H := Q_p \cup S$ denote a ``half-stochastic'' graph which includes each edge of $Q$ independently with probability $p$ but includes all the edges of $S$ with probability 1. We use this partitioning in our algorithm for Result~\ref{res:bipartitemain} in particular. There, we query only the edges in $Q$ and report the MVC of graph $H$, as outlined before. As a result, the partitioning should clearly ensure $Q$ has a small maximum degree. In addition, the nice property of this partitioning is that the edges $e \in S$ have a relatively small probability $\leq \epsilon^2 p$ of being part of matching $\mc{M}(H)$, where $\mc{M}$ is a (near) maximum matching algorithm that is also provided by the lemma. Next, we describe some of the challenges that we face in proving this lemma and how we overcome them.

Let us start with the trivial partitioning $Q_0 = \emptyset, S_0 = E$ and let $\mc{M}$ be an arbitrary (possibly randomized) maximum matching algorithm. The problem with this solution is that an edge $e \in S_0$ may have a large probability of being part of $\mc{M}(H_0)$. This occurs if some edges in $H_0$ are crucial for the matching to be maximum. We can try to put these edges of $S_0$ violating the probability constraint in $Q_0$, and obtain a new partitioning $Q_1, S_1$. The problem, however, is that the corresponding graph $H_1$ has a different distribution than graph $H_0$. Thus, it could be that edges in $S_1$ that previously had a small probability of joining $\mc{M}(H_0)$ now become crucial for the maximum matching of $H_1$. This can in fact continue for a super-constant number of iterations, inevitably violating the maximum degree constraint of $Q$. 

Instead of using an arbitrary maximum matching algorithm, our first idea is to use a special matching algorithm $\mc{M}$ that maximizes the following objective:
$$
	\Phi := \sum_{e \in E} \left(\Pr[e \in \mc{M}(H)] - \epsilon \Pr[e \in \mc{M}(H)]^2\right).
$$
The first term in the sum, intuitively, ensures that the size of matchings produced by the algorithm is large. The second term, intuitively, is to ensure the edges  tend to have small probabilities of joining $\mc{M}(H)$. A nice ``averaging'' property of this objective, is that if some matching algorithm $\mc{M}_1$ guarantees an objective of $\Phi_1$ and another algorithm $\mc{M}_2$ guarantees $\Phi_2$, then the algorithm that with probability $1/2$ picks the output of $\mc{M}_1$ and with probability $1/2$ the output of $\mc{M}_2$, has objective strictly larger than $\frac{\Phi_1 + \Phi_2}{2}$, unless the vast majority of edges have the same probability of joining $\mc{M}_1$ and $\mc{M}_2$.  

We plug in this new matching algorithm in the aforementioned framework for obtaining a list of partitionings $(Q_0, S_0), \ldots, (Q_k, S_k)$. But now, we use the averaging property of our special matching algorithm to argue that we reach our desired partitioning for some $k = \poly(1/\epsilon p)$.

\smparagraph{A New Vertex-Independent Matching Algorithm:}
 Now suppose that we have the partitioning $(Q, S)$ provided by the lemma discussed above. How should we argue that the MVC of graph $H = Q_p \cup S$ approximates the MVC of the actual realization $G_p$? One of the key parts of our analysis, which we discuss thoroughly in Section~\ref{sec:vertexindependent}, is a new {\em vertex-independent matching} (VIM) lemma. VIMs were introduced recently in \cite{stoc20} (further refined in \cite{focs20}) and were shown to be extremely useful for stochastic matchings. Roughly speaking, given a stochastic graph $G_p$ and an arbitrary matching algorithm $\mc{A}$, a VIM algorithm $\mc{B}$ has the property that its matching $\mc{B}(G_p)$ approximates $\mc{A}(G_p)$ (in both the total size and marginal probabilities of edges/vertices joining the matching), but in addition, for ``most'' vertices $u$ and $v$ of the graph, whether or not they are matched in $\mc{B}(G_p)$ are independent events. In the works of \cite{stoc20,focs20} for example, this independence is satisfied for $u, v \in V$ if they are at distance at least $\polylog \Delta$ in $G$, where $\Delta$ is the maximum degree of $G$. This requirement on the distance to achieve independence is provably necessary for the approach taken in \cite{stoc20,focs20} which is through distributed local algorithms. In contrast, we use a completely different approach to achieve independence in this work. Our new VIM works for bipartite graphs, but unlike prior works, for any two vertices $u$ and $v$ (in different partitions) that are non-adjacent, we have independence. This independence in particular holds, even if $u$ and $v$ are connected via a path of length 3. This better guarantee on independence is the key, for example, to why we are allowed to break the RS-barrier for stochastic matchings via $\poly(1/p)$ queries, whereas the previous approaches \cite{stoc20,focs20} required a super-polynomial in $1/p$ queries. It is also used, crucially, in the analysis of the stochastic MVC algorithm we described above for Result~\ref{res:bipartitemain}.

\section{Preliminaries \& Paper Organization}

\smparagraph{Notation.} For any graph $G$, we use $\nu(G)$ to denote the size of the minimum vertex cover of $G$ and use $\mu(G)$ to denote the size of the maximum matching of $G$. A ``fractional matching'' $\b{x}$ of a graph $G=(V, E)$ is an assignment $\{x_e\}_{e \in E}$ to the edges, where $x_e \in [0, 1]$ and for each vertex $v \in V$, $x_v := \sum_{e \ni v} x_e \leq 1$. We use $|\b{x}| := \sum_e x_e$ to denote the size of a fractional matching and for any subset $E' \subseteq E$, use $\b{x}(E')$ to denote $\sum_{e \in E'} x_e$. For any integer $k$ we use $[k]$ to denote set $\{1, \ldots, k\}$. We say $S_1, \ldots, S_k$ ``partitions'' set $S$ if $S_1 \cup \ldots \cup S_k = S$ and $S_i \cap S_j = \emptyset$ for all $i, j \in [k]$.

As in the literature (see e.g. \cite{blumetal,stoc20}), we say a (random) matching $M$ provides an $0 < \alpha \leq 1$ approximation for the stochastic matching problem if $M \subseteq G_p$ and $\E|M| \geq \alpha \cdot \E[\mu(G_p)]$. We say a (random) subset $C \subseteq V$ is a $\beta \geq 1$ approximate stochastic minimum vertex cover, if any edge in $G_p$ has at least an endpoint in $C$ with probability 1, and $\E|C| \leq \beta \cdot \E[\nu(G_p)]$.

We use the following well-known propositions throughout the paper.

\begin{proposition}[K\"onig's Theorem]\label{prop:konig}
	In any bipartite graph $G$, $\nu(G) = \mu(G)$.
\end{proposition}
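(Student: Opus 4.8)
The plan is to prove the two inequalities $\mu(G) \le \nu(G)$ and $\nu(G) \le \mu(G)$ separately; the first is elementary and the second is the substantive direction. For $\mu(G) \le \nu(G)$, fix a maximum matching $M$ and any vertex cover $C$: every edge of $M$ has an endpoint in $C$, and since the edges of $M$ are pairwise vertex-disjoint these endpoints are distinct, so $|C| \ge |M|$; taking $C$ to be a minimum vertex cover gives $\nu(G) \ge \mu(G)$.

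For the reverse inequality I would give a constructive argument producing a vertex cover of size exactly $|M|$. Write $G = (A \cup B, E)$, fix a maximum matching $M$, let $U \subseteq A$ be the set of $A$-vertices left unmatched by $M$, and let $Z$ be the set of all vertices reachable from $U$ by $M$-alternating paths (including the trivial one-vertex path), where such a path starts at a vertex of $U$ along a non-matching edge and then alternates between matching and non-matching edges. Set $S := A \cap Z$ and $T := B \cap Z$, and put $C := (A \setminus S) \cup T$. The first key step is to check that $C$ is a vertex cover: for any edge $(a,b)$ with $a \in A$, $b \in B$, if $a \in A \setminus S$ then $a \in C$; otherwise $a \in S$, and I would argue $b \in T$. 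Indeed, if $(a,b) \notin M$, the alternating path reaching $a$ (which ends with a matching edge, or is the trivial path if $a \in U$) extends along the non-matching edge $(a,b)$ to reach $b$; and if $(a,b) \in M$, then $a \notin U$, so the path reaching $a$ must arrive via $a$'s unique matching edge $(a,b)$, placing $b$ on that path. Either way $b \in T \subseteq C$.

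The second key step is the size count $|C| \le |M|$. First, $C$ contains no $M$-unmatched vertex: unmatched $A$-vertices lie in $U \subseteq S$ (trivial paths), and an unmatched vertex of $T$ would yield an $M$-alternating path from $U$ that starts and ends unmatched, i.e.\ an $M$-augmenting path, contradicting maximality of $M$. Second, no edge of $M$ has both endpoints in $C$: if $(a,b) \in M$ with $b \in T$, then extending the alternating path reaching $b$ by the matching edge $(b,a)$ shows $a \in S$, hence $a \notin A \setminus S$. Therefore $C$ consists of $M$-matched vertices and meets each edge of $M$ at most once, so $|C| \le |M|$. Combining the two directions, $\nu(G) \le |C| \le |M| = \mu(G) \le \nu(G)$, forcing all quantities to be equal.

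I expect the only real subtlety to be the alternating-reachability bookkeeping in the ``$C$ is a vertex cover'' step — tracking which parity of edge a path arrives on at an $A$-vertex versus a $B$-vertex, and handling the degenerate case $a \in U$; the rest is routine. An alternative route would be to invoke total unimodularity of the bipartite incidence matrix together with LP duality between the fractional matching and fractional vertex-cover relaxations, but the augmenting-path argument above is self-contained and I would prefer it here.
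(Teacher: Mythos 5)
Your proof is correct; it is the standard alternating-path proof of K\"onig's theorem. Note that the paper itself offers no proof of this proposition --- it is stated as a classical, well-known result and simply invoked --- so there is no ``paper's approach'' to compare against. Your construction of the cover $C = (A \setminus S) \cup T$ from the set of vertices alternating-reachable from the unmatched $A$-vertices, the verification that $C$ covers every edge, and the injection from $C$ into $M$ via matched edges (using maximality of $M$ to exclude augmenting paths) are all handled correctly, including the degenerate case $a \in U$.
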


\begin{proposition}[Chebyshev's inequality]
	Let $X$ be a random variable with finite expected value $\E[X]$ and finite non-zero variance $\Var[X]$. For any $\lambda > 0$,  
	$$\Pr[|X-\E X| \geq \lambda] \leq \frac{\Var[X]}{\lambda^2}.$$
\end{proposition}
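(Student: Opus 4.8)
The plan is to obtain Chebyshev's inequality as an immediate consequence of Markov's inequality. First I would recall Markov's inequality: for any non-negative random variable $Y$ with finite expectation and any threshold $a > 0$, one has $\Pr[Y \geq a] \leq \E[Y]/a$. Since this is not among the facts stated earlier, I would include its one-line proof, namely $\E[Y] \geq \E[Y \cdot \mathbbm{1}(Y \geq a)] \geq a \cdot \Pr[Y \geq a]$, where the first inequality uses non-negativity of $Y$ and the second uses that $Y \geq a$ on the event in question.

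Next I would apply Markov's inequality to the non-negative random variable $Y := (X - \E X)^2$, which has finite expectation equal to $\Var[X]$ by hypothesis, with the threshold $a := \lambda^2 > 0$. The key (trivial) observation is that the events $\{|X - \E X| \geq \lambda\}$ and $\{(X - \E X)^2 \geq \lambda^2\}$ coincide, because squaring is monotone on the non-negative reals. Therefore
$$
\Pr[|X - \E X| \geq \lambda] \;=\; \Pr[(X - \E X)^2 \geq \lambda^2] \;\leq\; \frac{\E[(X - \E X)^2]}{\lambda^2} \;=\; \frac{\Var[X]}{\lambda^2},
$$
which is exactly the claimed bound.

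There is essentially no obstacle here, as the statement is entirely classical; the only points needing a word of care are (i) checking that the hypotheses make $\E[(X - \E X)^2] = \Var[X]$ finite and well-defined, so Markov's inequality legitimately applies, and (ii) the event equivalence, which is immediate. I note that the non-zero variance hypothesis plays no role in the argument — it only ensures the right-hand side is finite and the bound non-vacuous — so I would not invoke it.
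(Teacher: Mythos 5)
Your proof is correct and is the standard derivation of Chebyshev's inequality from Markov's inequality applied to $(X-\E X)^2$; the paper itself states this proposition as a well-known fact without proof, so there is nothing to compare against beyond noting that your argument is the classical one.
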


\smparagraph{Paper organization.} In Sections~\ref{sec:HSML} and \ref{sec:vertexindependent} we prove two of the main tools introduced in this paper, particularly the ``half-stochastic matching lemma'' and the ``new vertex-independent matching lemma''. In Section~\ref{sec:upperbounds} we present our algorithms for the stochastic MVC problem and also the  improved result for the stochastic matching problem. Finally, in Section~\ref{sec:lowerbounds} we prove several lower bounds for both the stochastic vertex cover problem and the stochastic matching problem. 

\section{Tool I: The Half-Stochastic Matching Lemma}\label{sec:HSML}

The Half-Stochastic Matching Lemma, constructively, gives a partitioning $(Q, S)$ of the edge-set $E$ of graph $G$. This partitioning is accompanied with a special near-maximum matching algorithm $\mc{M}$ that operates on the ``half-stochastic'' random subgraph $H$ of $G$ in which each edge of $Q$ is stochastic (i.e. realized with probability $p$) and each edge of $S$ appears with probability one.  

Although this lemma seems to be about matchings only, it actually plays an important role in the stochastic vertex cover algorithm for Theorem~\ref{thm:bipartitevc} in both the algorithm in deciding which edges to query, and the analysis of the approximation ratio achieved by this algorithm. 

\begin{highlighttechnical}
\begin{lemma}[Half-Stochastic Matching Lemma]\label{lem:vcpartition}
\parindent=15pt
	Let $G=(V, E)$ be a (possibly non-bipartite) graph and let $\epsilon \in [0, 1]$ and $p \in [0, 1]$ be two given parameters. There is a partitioning of $E$ into  subsets $Q$ and $S$ such that:
\begin{enumerate}[itemsep=2pt, topsep=10pt,label={$(\roman*)$}]
		\item The maximum degree in $Q$ is $O(\frac{1}{\epsilon^{11} p^6})$.\label{vp-prop:degreeQ}
\end{enumerate}
Moreover, $Q$ and $S$ are such that there exists a randomized matching algorithm $\mc{M}$, where by letting $H := Q_p \cup S$ denote a random graph which includes all edges in $S$, but includes each edge of $Q$ independently with probability $p$, we get:
\begin{enumerate}[itemsep=2pt, topsep=10pt, label={$(\roman*)$}]
\setcounter{enumi}{1}
		\item $\E|\mc{M}(H)| \geq (1-2\epsilon) \cdot \E[\mu(H)]$. That is, the matching algorithm $\mc{M}$ should find an approximate maximum matching of $H$ in expectation. \label{vp-prop:expmatching}
		\item For any edge $e \in S$, $\Pr[e \in \mc{M}(H)] \leq \Stau$.\label{vp-prop:prsmallS}
	\end{enumerate}
	
	We note that the probabilistic statements above are with respect to both the inherent randomization in graph $H$, and also the randomization used in algorithm $\mc{M}$.
\end{lemma}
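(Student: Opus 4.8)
The plan is to build the partition $(Q,S)$ iteratively, maintaining a sequence $(Q_0,S_0),(Q_1,S_1),\dots$ with $Q_0=\emptyset$, $S_0=E$, and at each step move into $Q$ every edge of $S$ that currently violates property~\ref{vp-prop:prsmallS}. The crucial choice — as the overview signals — is that at each iteration $i$ we do not use an arbitrary maximum matching algorithm on $H_i := (Q_i)_p \cup S_i$, but instead the algorithm $\mc{M}_i$ maximizing the concave objective
\[
\Phi(\mc{M}) \;=\; \sum_{e\in E}\Bigl(\Pr[e\in\mc{M}(H_i)] \;-\; \epsilon\,\Pr[e\in\mc{M}(H_i)]^2\Bigr),
\]
where the maximum is over all (randomized) matching algorithms; existence of a maximizer follows by a compactness argument on the (finite-dimensional, after conditioning on the realization of $H_i$) space of distributions over matchings. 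First I would record the two easy consequences of this choice: optimality for the linear part forces $\E|\mc{M}_i(H_i)| \ge \mu$-type bounds, and more precisely, since $\Phi$ penalizes the $\ell_2$ mass, the maximizer cannot be too far from maximum — I would show $\E|\mc{M}_i(H_i)| \ge (1-2\epsilon)\E[\mu(H_i)]$ by comparing $\Phi(\mc{M}_i)$ against $\Phi$ of an exact maximum matching algorithm and using that the quadratic correction term is at most $\epsilon\cdot\E|\mc{M}_i|$ (each edge has probability $\le 1$, so $\sum_e q_e^2 \le \sum_e q_e$). Since the $H_i$'s are nested in distribution in the sense that replacing a stochastic edge by an always-present edge only helps the matching, $\E[\mu(H_i)] \ge \E[\mu(G_p)]$ throughout, so \ref{vp-prop:expmatching} will hold for the final partition.

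The heart of the argument is bounding the number of iterations $k$ before the process stabilizes (i.e. before no edge of $S_k$ violates \ref{vp-prop:prsmallS}), because that bound controls the maximum degree of $Q$: each iteration can add at most one edge incident to a given vertex $v$ only if... actually more carefully, each iteration moves a set of edges, and I would track $\sum_v \deg_{Q}(v)$ via a potential. The key is the ``averaging'' property: if at iteration $i$ there are many edges ($\ge \epsilon^2 p$-probability mass, say more than $\epsilon\cdot\E|\mc{M}_i|$ worth) that change their marginal probability substantially between consecutive iterations, then mixing the old and new algorithms would strictly beat both objectives — but $\mc{M}_i$ was already optimal, contradiction. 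Quantitatively, I would show that $\Phi(\mc{M}_i)$ is monotonically related across iterations (moving violating edges from $S$ to $Q$ weakly increases the achievable $\Phi$ since $\mc{M}_{i-1}$ remains a valid algorithm on $H_i$ and the edge-set is unchanged), so $\Phi(\mc{M}_0) \le \Phi(\mc{M}_1) \le \cdots \le |E|$, and each iteration that is not the last must increase $\Phi$ by at least some fixed $\mathrm{poly}(\epsilon p)$ amount — because a violating edge $e\in S_i$ had $\Pr[e\in\mc{M}_i(H_i)] > \epsilon^2 p$, yet once $e$ moves to $Q$ it is present only with probability $p$, forcing a rearrangement of matching mass that, via the strict concavity of the objective, registers a measurable gain. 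Counting gives $k = \mathrm{poly}(1/\epsilon p)$, and then a degree-counting argument (at each of the $k$ iterations, the edges added at vertex $v$ are bounded because they were all in $\mc{M}_i(H_i)$-support with non-trivial probability, so at most $O(1/\epsilon^2 p)$ of them per vertex per iteration by a fractional-matching packing bound) yields $\Delta(Q) = O(1/(\epsilon^{11}p^6))$ after optimizing the exponents.

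I expect the main obstacle to be making the ``measurable gain per iteration'' step fully rigorous: one needs to argue that when the violating edges move from $S$ to $Q$, the optimal matching algorithm $\mc{M}_{i+1}$ on the new graph $H_{i+1}$ genuinely achieves a larger $\Phi$, and not merely that the old solution is still feasible (which only gives $\ge$). The clean way is to exhibit an explicit perturbation of $\mc{M}_i$ on $H_{i+1}$ — essentially, whenever a (now-stochastic, hence sometimes-absent) edge $e\in Q_{i+1}$ fails to realize, reroute along an augmenting path or simply drop $e$; because $e$ had probability $>\epsilon^2 p$ of being matched, this reroute changes marginals on a set of edges of total mass $\Omega(\epsilon^2 p)$, and plugging into the strictly concave $\Phi$ (or into the mixing/averaging inequality with the two relevant algorithms) produces a gain of $\Omega((\epsilon^2 p)^2/|\text{number of affected edges}|)$ — and I would need the affected set to be small, which is where the locality of augmenting paths and the bounded degree of $Q_i$ (maintained inductively) come in. Balancing all these polynomial factors is the bookkeeping that produces the stated exponents $\epsilon^{11}p^6$.
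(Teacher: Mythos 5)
Your setup matches the paper's: the iterative construction $(Q_0,S_0),(Q_1,S_1),\dots$ starting from $Q_0=\emptyset$, the quadratic objective $\Phi$, and the mixing/averaging idea are exactly the right ingredients, and your derivation of the near-optimality of the maximizer of $\Phi$ is correct. However, there are three concrete problems with the rest of the plan. First, the monotonicity of $\Phi$ runs the other way: moving edges from $S$ to $Q$ makes $H_{i+1}$ a (coupled) \emph{subgraph} of $H_i$, since those edges drop from being always present to being present only with probability $p$. Hence $\mc{M}_{i+1}$ is feasible on $H_i$ but not vice versa, and one gets the decreasing chain $\mu(G)\ge\Phi_0\ge\Phi_1\ge\cdots\ge(1-\epsilon)p\,\mu(G)$, not an increasing chain bounded by $|E|$. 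Second, and more seriously, your termination criterion --- iterate until no edge of $S_k$ violates Property~\ref{vp-prop:prsmallS} --- is not what is proved, and there is no argument that this state is ever reached within $\poly(1/\epsilon p)$ rounds (or at all). The correct stopping rule is to find an index $i$ where the violating set $D_i$ contributes little to the matching in expectation, namely $\E|D_i\cap\mc{M}_i(H_i)|<\epsilon p\,\mu(G)$, and to define the final algorithm as $\mc{M}_i(H_i)\setminus D_i$: discarding $D_i$ from the \emph{output} is what enforces Property~\ref{vp-prop:prsmallS}, and it is also the source of the second $\epsilon$ in the $(1-2\epsilon)$ guarantee, which your plan (terminating only when $D_k=\emptyset$) cannot account for.

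Third, the ``measurable gain per iteration'' step --- which you correctly flag as the main obstacle --- is not how the bound on $k$ is obtained, and the augmenting-path rerouting you sketch would be difficult to control. The argument is instead global: since the $\Phi_i$ decrease within a range of width at most $\mu(G)$, a pigeonhole gives an interval $I$ of $\Omega(\epsilon^6p^3 k)$ consecutive indices on which all $\Phi_i$ agree up to an additive $0.01\epsilon^6p^3\mu(G)$. Within that flat interval, the mixing argument (comparing $\mc{M}_i$ against the algorithm that returns $\mc{M}_i(H_i)$ or $\mc{M}_j(H_j)$ each with probability $1/2$) shows that if no index satisfies the stopping condition, then every earlier $D_i$ must keep contributing $\Omega(\epsilon^3p^2)\,\mu(G)$ to every later matching $\mc{M}_j(H_j)$. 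Since the sets $D_i$ are pairwise disjoint, summing over the interval forces a matching of size exceeding $\mu(G)$, a contradiction. So the quantitative progress is extracted from disjointness across a long flat stretch, not from a strict per-iteration change in $\Phi$; as written, your plan has a genuine gap at exactly this point.
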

\end{highlighttechnical}

\newcommand{\threshD}[0]{\ensuremath{\epsilon^4 p}}
\newcommand{\SM}[1]{\ensuremath{\mc{M}_{#1}}}

We now turn to present the proof of Lemma~\ref{lem:vcpartition}. To do so, we have to output a triplet $(Q, S, \mc{M})$ where $Q$ and $S$ form a partitioning of $E$, and $\mc{M}$ is a matching algorithm. 

For some $k = O_{\epsilon, p}(1)$ which we specify later, we construct a list $(Q_0, S_0), (Q_1, S_1), \ldots, (Q_k, S_k)$ of partitionings. We will argue that if $k$ is large enough, there is one of the partitions, $(Q_i, S_i)$, which satisfies all the properties required by Lemma~\ref{lem:vcpartition}. This will be our partitioning $(Q, S)$.

\smparagraph{The construction.} The base partitioning is simply $Q_0 = \emptyset, S_0 = E$ and for each $i \geq 0$,  $(Q_{i+1}, S_{i+1})$ is constructed from the previous partitioning $(Q_i, S_i)$. To describe the construction, let us define for each partitioning $i$ a random graph $H_i$ which includes each edge of $Q_i$ independently with probability $p$, and includes every edge of $S_i$ with probability 1. We will also soon formalize a special randomized matching algorithm $\SM{i}$ that we run on graph $H_i$. Having $\SM{i}$, for each $i \geq 0$ we define
\begin{equation}\label{eq:defD}
	D_i := \{ e \in S_i \mid \Pr[e \in \SM{i}(H_i)] > \Stau \},
\end{equation}
and construct $(Q_{i+1}, S_{i+1})$ in the following way:
\begin{equation}\label{eq:constr-partitn}
	Q_{i+1} \gets Q_i \cup D_i \qquad \text{and} \qquad S_{i+1} \gets S_i \setminus D_i.
\end{equation}

Observe that in the construction above, each partitioning $(Q_{i+1}, S_{i+1})$ is obtained from the previous partitioning $(Q_{i}, S_{i})$ by ``moving'' the edges of $D_i$ (which are all by definition in $S_i$) from $S_i$ to partition $Q_i$. For this reason, we have
$$
	Q_0 \subset Q_1 \subset \ldots \subset Q_k \qquad \text{and} \qquad S_0 \supset S_1 \supset \ldots \supset S_k.
$$

Let us emphasize that graphs $H_0, H_1, \ldots, H_k$ have \textbf{different} distributions. Intuitively, since $Q_i$ grows as $i$ increases, and graph $H_i$ includes only $p$ fraction of the edges in $Q_i$ but all the edges in $S_i$, random graph $H_i$ tends to get smaller and smaller by increasing $i$. In fact it would be useful to consider a coupling $(H_0, H_1, \ldots, H_k)$ as follows: On each edge $e$ we draw an independent $p$-Bernouli random variable $X_e$ and use this to define $H_i$ for all $i$ as:
\begin{equation}\label{eq:coupling}
	H_i = \{e \mid (e \in Q_i \text{ and } X_e = 1) \text{ or } (e \in S_i)\}.
\end{equation}
This coupling is useful because in each outcome of the joint distribution $(H_0, \ldots, H_k)$, each graph $H_i$ is a subgraph of the previous graph $H_{i-1}$.

Let us now finalize our construction by formalizing algorithm $\SM{i}$.

\smparagraph{The Matching $\SM{i}$:} For any $i$ and any (possibly randomized) matching algorithm $\mc{M}'$, we define
\begin{equation}\label{eq:objective}
	\Phi_i(\mc{M}') := \sum_{e \in E} \left(\Pr_{\mc{M}', H_i}[e \in \mc{M}'(H_i)] - \epsilon \Pr_{\mc{M}', H_i}[e \in \mc{M}'(H_i)]^2\right),
\end{equation}
where let us emphasize that the probabilities are taken over both the possible randomization in algorithm $\mc{M}'$, and the randomization in graph $H_i$ (regarding the realization of edges belonging to $Q_i$). Having this definition, we now simply let $\SM{i}$ be the algorithm maximizing $\Phi_i$, i.e.:
$$
	\SM{i} := \argmax_{\mc{M}'} \Phi_i(\mc{M}'),
$$
and we use $\Phi_i := \Phi_i(\SM{i})$ to simply denote the optimal objective value obtained by this algorithm.

Let us, for now, not concern ourselves with how the objective function (\ref{eq:objective}) can be maximized in polynomial time and assume that this algorithm $\SM{i}$ is simply given. We will later address this issue and obtain a polynomial-time algorithm once it becomes clear how we use $\SM{i}$.

\smparagraph{The intuition behind objective $(\ref{eq:objective})$.} To see the intuition behind why we define the objective function (\ref{eq:objective}) this way, first note by linearity of expectation that
\begin{equation}\label{eq:sn123908}
	\Phi_i = \E|\SM{i}(H_i)| - \epsilon \sum_{e \in E} \Pr[e \in \SM{i}(H_i)]^2.
\end{equation}
The intuition behind the first term is clear: We want the expected size of the matching to be large;  Observation~\ref{obs:matchingSMlarge} below formalizes this by showing that the matching algorithm $\SM{i}$ maximizing $\Phi_i$ must be a $(1-\epsilon)$-approximate matching. The second term, on the other hand, ensures that the marginal probabilities of edges appearing in the matching tend to be small. This is useful for Property~\ref{vp-prop:prsmallS} of Lemma~\ref{lem:vcpartition} which requires small marginals for all edges in $S$.

\begin{observation}\label{obs:EM>Phi}
	For any $i$, $\E|\SM{i}(H_i)| \geq \Phi_i$.
\end{observation}
\begin{proof}
	By Equation (\ref{eq:sn123908}), $\E|\SM{i}(H_i)| = \Phi_i + \epsilon \sum_{e \in E} \Pr[e \in \SM{i}(H_i)]^2 \geq \Phi_i$.
\end{proof}

\begin{observation}\label{obs:matchingSMlarge}
	$\E|\SM{i}(H_i)| \geq (1-\epsilon) \cdot \E[\mu(H_i)]$.
\end{observation}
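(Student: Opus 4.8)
The plan is to exhibit a specific matching algorithm whose $\Phi_i$-value is large, and then use optimality of $\SM{i}$ together with Observation~\ref{obs:EM>Phi} to transfer this bound to $\E|\SM{i}(H_i)|$. Concretely, let $\mc{M}^\star$ be the (randomized) algorithm that on input $H_i$ returns an exact maximum matching of $H_i$, so that $\E|\mc{M}^\star(H_i)| = \E[\mu(H_i)]$. Since $\SM{i}$ maximizes $\Phi_i$, we have $\Phi_i = \Phi_i(\SM{i}) \geq \Phi_i(\mc{M}^\star)$. The first step is therefore to lower-bound $\Phi_i(\mc{M}^\star)$.

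For that, I would write, using the definition~\eqref{eq:objective} and linearity of expectation,
\[
\Phi_i(\mc{M}^\star) = \E|\mc{M}^\star(H_i)| - \epsilon \sum_{e \in E} \Pr[e \in \mc{M}^\star(H_i)]^2.
\]
The key estimate is that the quadratic penalty term is at most $\epsilon \cdot \E|\mc{M}^\star(H_i)|$. This follows because $\Pr[e \in \mc{M}^\star(H_i)]^2 \le \Pr[e \in \mc{M}^\star(H_i)]$ for every edge (as each probability lies in $[0,1]$), and hence $\sum_{e} \Pr[e \in \mc{M}^\star(H_i)]^2 \le \sum_{e} \Pr[e \in \mc{M}^\star(H_i)] = \E|\mc{M}^\star(H_i)|$, the last equality again by linearity of expectation. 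Therefore $\Phi_i(\mc{M}^\star) \ge (1-\epsilon)\,\E|\mc{M}^\star(H_i)| = (1-\epsilon)\,\E[\mu(H_i)]$.

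Combining the pieces: $\E|\SM{i}(H_i)| \ge \Phi_i$ by Observation~\ref{obs:EM>Phi}, and $\Phi_i \ge \Phi_i(\mc{M}^\star) \ge (1-\epsilon)\,\E[\mu(H_i)]$ by the two steps above, which gives the claim. There is no real obstacle here — the only point requiring a sentence of care is that the quadratic term is bounded using $x^2 \le x$ on $[0,1]$ rather than anything sharper, since we only need a $(1-\epsilon)$ factor and not the strict improvement mentioned in the introduction's "averaging" discussion (that stronger property is for a different argument). One should also note the algorithm $\mc{M}^\star$ need not be polynomial-time, which is fine since at this stage $\SM{i}$ is only assumed to exist, not to be efficient.
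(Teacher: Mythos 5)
Your proposal is correct and matches the paper's own proof essentially verbatim: both exhibit a maximum-matching algorithm $\mc{M}'$, bound the quadratic penalty via $x^2 \le x$ on $[0,1]$ to get $\Phi_i(\mc{M}') \ge (1-\epsilon)\E[\mu(H_i)]$, and then chain this with the optimality of $\SM{i}$ and Observation~\ref{obs:EM>Phi}. No differences worth noting.
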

\begin{proof}
	Consider a deterministic algorithm $\mc{M}'$ that picks an arbitrary maximum matching $M$ of its random input $H_i$. Since $\E|M| = \E[\mu(H_i)]$, we have
	$$
	\Phi_i(\mc{M}') = \E[\mu(H_i)] - \epsilon \sum_{e \in E} \Pr[e \in M]^2 \geq \E[\mu(H_i)] - \epsilon \sum_{e \in E} \Pr[e \in M] = (1-\epsilon) \cdot \E[\mu(H_i)].
	$$
	Since $\SM{i}$ maximizes $\Phi_i = \Phi_i(\SM{i})$, we get $\Phi_i \geq \Phi_i(\mc{M}') \geq (1-\epsilon) \cdot \E[\mu(H_i)]$. Combined with $\E|\SM{i}(H_i)| \geq \Phi_i$ due to Observation~\ref{obs:EM>Phi}, this implies $\E|\SM{i}(H_i)| \geq (1-\epsilon) \cdot \E[\mu(H_i)]$.
\end{proof}

We now turn to prove that one of the partitionings $(Q_i, S_i)$ must satisfy the properties required by Lemma~\ref{lem:vcpartition}. The next set of claims are used for this purpose.

\begin{claim}\label{cl:matchingsdecreasing}
	It holds that
	$$
		\mu(G) \geq \Phi_0 \geq \Phi_1 \geq \ldots \geq \Phi_k \geq (1-\epsilon) \cdot p \cdot \mu(G).
	$$
\end{claim}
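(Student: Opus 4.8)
The plan is to prove the chain of inequalities $\mu(G) \geq \Phi_0 \geq \Phi_1 \geq \ldots \geq \Phi_k \geq (1-\epsilon) p \, \mu(G)$ by establishing the three pieces separately: the left endpoint $\mu(G) \geq \Phi_0$, the monotonicity $\Phi_i \geq \Phi_{i+1}$, and the right endpoint $\Phi_k \geq (1-\epsilon) p\, \mu(G)$.

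For the left endpoint, note $H_0 = S_0 = E$ deterministically, so $\mu(H_0) = \mu(G)$, and since any matching algorithm $\mc{M}'$ satisfies $\Phi_0(\mc{M}') = \E|\mc{M}'(H_0)| - \epsilon\sum_e \Pr[e \in \mc{M}'(H_0)]^2 \leq \E|\mc{M}'(H_0)| \leq \mu(G)$, taking the max over $\mc{M}'$ gives $\Phi_0 \leq \mu(G)$. For the right endpoint, I would lower-bound $\E[\mu(H_k)]$: since the coupling (\ref{eq:coupling}) includes each edge of $Q_k$ with probability $p$ and each edge of $S_k$ with probability $1$, the graph $H_k$ contains a random subgraph in which \emph{every} edge of $G$ is present with probability at least $p$; hence $\E[\mu(H_k)] \geq p\,\mu(G)$ (fix a maximum matching $M^\star$ of $G$; each of its edges survives in $H_k$ independently with probability $\geq p$, and surviving edges of a matching always form a matching, so $\E[\mu(H_k)] \geq \E|M^\star \cap H_k| \geq p|M^\star| = p\,\mu(G)$). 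Combining with Observation~\ref{obs:matchingSMlarge}, which already gives $\E|\SM{k}(H_k)| \geq (1-\epsilon)\E[\mu(H_k)]$, and with $\Phi_k \geq (1-\epsilon)\E[\mu(H_k)]$ (shown inside the proof of that observation), we get $\Phi_k \geq (1-\epsilon)p\,\mu(G)$.

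The heart of the argument is the monotonicity $\Phi_i \geq \Phi_{i+1}$, and this is where I expect the main obstacle. The idea: take $\SM{i}$, the optimal algorithm for $H_i$, and run it on $H_{i+1}$ instead. Because $H_{i+1}$ is obtained from $H_i$ by making the edges of $D_i$ stochastic (present with probability $p$ rather than $1$), under the coupling (\ref{eq:coupling}) we have $H_{i+1} \subseteq H_i$ in every outcome. So $\SM{i}$ run on $H_{i+1}$ still produces a valid matching, and I want to argue $\Phi_{i+1}(\SM{i}) \leq \Phi_i(\SM{i}) = \Phi_i$; since $\Phi_{i+1} = \Phi_{i+1}(\SM{i+1}) \geq \Phi_{i+1}(\SM{i})$ is false in general — optimality goes the other way — I actually need $\Phi_{i+1} \leq \Phi_i$ via: $\Phi_{i+1} = \Phi_{i+1}(\SM{i+1})$, and I'd compare $\Phi_{i+1}(\SM{i+1})$ against $\Phi_i$ by running $\SM{i+1}$ (designed for $H_{i+1}$) on $H_i$. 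The cleanest route is probably: define an algorithm on $H_i$ that first samples which edges of $D_i$ to ``delete'' (each with probability $1-p$) to simulate $H_{i+1}$, then runs $\SM{i+1}$; this algorithm on $H_i$ has objective exactly $\Phi_{i+1}$, and by optimality of $\SM{i}$ on $H_i$ we get $\Phi_i \geq \Phi_{i+1}$. The subtlety to get right is that the function $x \mapsto x - \epsilon x^2$ is \emph{not} monotone on all of $[0,1]$, so one cannot naively say "fewer edges in the graph ⇒ smaller marginals ⇒ smaller objective"; the simulation argument sidesteps this by exhibiting $\Phi_{i+1}$ as the objective of an honest (randomized) algorithm on $H_i$, making the comparison purely a consequence of $\SM{i}$'s optimality rather than of any monotonicity of the per-edge function. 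I'd need to verify carefully that the simulated marginals match: $\Pr_{H_i, \text{sim}}[e \in \text{output}] = \Pr_{H_{i+1}, \SM{i+1}}[e \in \SM{i+1}(H_{i+1})]$ for every $e \in E$, which holds because the simulation reproduces exactly the distribution of $H_{i+1}$.
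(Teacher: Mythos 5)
Your proposal is correct and follows essentially the same route as the paper: the left endpoint via $\Phi_0 \leq \E|\mathcal{M}'(H_0)| \leq \mu(G)$, the right endpoint via a surviving maximum matching of $G$ giving $\E[\mu(H_k)] \geq p\,\mu(G)$ combined with the $(1-\epsilon)$ bound from Observation~\ref{obs:matchingSMlarge}, and monotonicity via the coupling (\ref{eq:coupling}) making $\SM{i+1}$ applicable to $H_i$. Your explicit simulation argument for the monotonicity step (subsampling $D_i$ to reproduce the distribution of $H_{i+1}$, then invoking optimality of $\SM{i}$) is exactly what the paper's terse "$\SM{i}$ is also applicable on graph $H_{i-1}$" is implicitly relying on, and your remark that the non-monotonicity of $x \mapsto x - \epsilon x^2$ forces this route is a valid and worthwhile clarification.
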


\begin{claim}\label{cl:interval}
	There is an interval $I = \{s, \ldots, s+k'\}$ in $[k]$ with $|I| \geq \frac{1}{300} \epsilon^6 p^3 k$ such that 
	\begin{flalign*}
		|\Phi_i - \Phi_j| \leq 0.01 \epsilon^6 p^3 \mu(G) \qquad\qquad \text{for all $i < j$ in $I$.}
	\end{flalign*}
\end{claim}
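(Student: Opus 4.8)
The plan is a clean pigeonhole / averaging argument built on the monotonicity from Claim~\ref{cl:matchingsdecreasing}: a bounded, non-increasing real sequence cannot decrease by much on most of a fine partition into blocks, so one block is a long ``plateau''. By Claim~\ref{cl:matchingsdecreasing} the sequence $\Phi_0 \geq \Phi_1 \geq \ldots \geq \Phi_k$ is non-increasing and every term lies in $[(1-\epsilon)p\,\mu(G),\, \mu(G)]$; hence the total decrease satisfies $\Phi_0 - \Phi_k \leq \mu(G) - (1-\epsilon)p\,\mu(G) \leq \mu(G)$. (If $\mu(G)=0$ then all $\Phi_i=0$ and the statement is vacuous, so assume $\mu(G)>0$.)

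First I would partition the index set $\{1,\ldots,k\}$ into $L := \lceil 200\,\epsilon^{-6} p^{-3}\rceil$ consecutive blocks $I_1,\ldots,I_L$, each of length $\lfloor k/L\rfloor$ or $\lceil k/L\rceil$. Writing $I_t = \{a_t,\ldots,b_t\}$, define the \emph{drop} of block $t$ as $\delta_t := \Phi_{a_t} - \Phi_{b_t} \geq 0$. Since the blocks are consecutive and $\Phi$ is non-increasing, the drops telescope: $\sum_{t=1}^{L} \delta_t \leq \Phi_1 - \Phi_k \leq \Phi_0 - \Phi_k \leq \mu(G)$.

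Next I would call a block \emph{bad} if $\delta_t > 0.01\,\epsilon^6 p^3\,\mu(G)$. By the telescoping bound the number of bad blocks is at most $\mu(G)\big/\big(0.01\,\epsilon^6 p^3\,\mu(G)\big) = 100\,\epsilon^{-6}p^{-3} \leq L/2$, so at least $L/2 \geq 1$ blocks are \emph{good}, i.e.\ have $\delta_t \leq 0.01\,\epsilon^6 p^3 \mu(G)$. Fix any good block $I = \{s,\ldots,s+k'\}$. For any $i<j$ in $I$, monotonicity gives $|\Phi_i - \Phi_j| = \Phi_i - \Phi_j \leq \Phi_{a_t}-\Phi_{b_t} = \delta_t \leq 0.01\,\epsilon^6 p^3 \mu(G)$, which is the required gap bound. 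Finally $|I| \geq \lfloor k/L\rfloor \geq k/L - 1 \geq \tfrac{1}{300}\epsilon^6 p^3 k$, where the last inequality holds once $k$ is large enough (say $k \geq 700\,\epsilon^{-6}p^{-3}$), which we may assume since $k$ is chosen as a sufficiently large $\poly(1/\epsilon p)$ later in the proof.

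The only real ``obstacle'' here is bookkeeping rather than any idea: one must fix the number of blocks $L$ and check that the three constants — the $1/300$ in the length bound, the $0.01$ in the gap bound, and the implicit factor $200$ in $L$ — are mutually consistent after accounting for the floor/ceiling rounding of block lengths. There is ample slack to absorb the rounding provided $k$ is taken large as a function of $1/\epsilon p$, so no delicate calculation is needed.
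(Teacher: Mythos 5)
Your proof is correct, and it establishes the claim by a decomposition that is dual to the paper's. The paper partitions the \emph{value range} $[(1-\epsilon)p\,\mu(G), \mu(G)]$ into $\lceil 100/\epsilon^6p^3\rceil$ levels of width $0.01\,\epsilon^6p^3\mu(G)$; monotonicity of $\Phi_0 \geq \ldots \geq \Phi_k$ (Claim~\ref{cl:matchingsdecreasing}) makes each level's preimage a consecutive interval of indices, the gap bound holds within each level by construction, and pigeonhole on $\sum_j |I_j| = k$ yields one interval of the required length. You instead partition the \emph{index set} into $L = \lceil 200\,\epsilon^{-6}p^{-3}\rceil$ equal-length blocks and apply a Markov/averaging argument to the telescoping total drop $\sum_t \delta_t \leq \mu(G)$ to find a block whose drop is at most $0.01\,\epsilon^6p^3\mu(G)$; here the length bound holds by construction and the gap bound is what needs the pigeonhole. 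Both are one-step pigeonhole arguments resting on the same two facts (monotonicity and bounded range), and your constants check out: with the paper's eventual choice $k = 1500\,\epsilon^{-9}p^{-5}$ the rounding loss of $1$ in $\lfloor k/L\rfloor$ is absorbed, since $k \geq 700\,\epsilon^{-6}p^{-3}$. The one cosmetic difference is that your good block need not coincide with a level set of $\Phi$, but the claim only asks for \emph{some} interval with the two properties, so this is immaterial.
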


\begin{claim}\label{cl:Dicontinuestocontribute}
	Let $I$ be as defined in Claim~\ref{cl:interval}. Either there is some $i \in I$ where $\E|D_i \cap \SM{i}(H_i)| < \epsilon p \mu(G)$, or otherwise for any $i, j \in I$ with $i < j$, it holds that
	$$
		\E|D_i \cap \SM{j}(H_j)| \geq 0.25 \epsilon^3 p^2 \cdot  \mu(G).
	$$
\end{claim}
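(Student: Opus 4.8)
The plan is to argue by contradiction: assume we are in the ``otherwise'' regime, so $\E|D_\ell\cap\SM{\ell}(H_\ell)| \ge \epsilon p\mu(G)$ for every $\ell\in I$, fix $i<j$ in $I$, and suppose for contradiction that $\E|D_i\cap\SM{j}(H_j)| < 0.25\epsilon^3p^2\mu(G)$. The goal is then to exhibit an alternative matching algorithm for $H_j$ that, together with the ``averaging''/strict-concavity property of the objective $\Phi_j$, either beats $\SM{j}$ on $\Phi_j$ (impossible, by optimality) or forces $\SM{j}$ to already use that much of $D_i$.

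First I would pass from $\SM{i}(H_i)$ to a single integral matching inside $D_i$. The marginals $y_e := \Pr[e\in\SM{i}(H_i)]$, restricted to $e\in D_i$, form a fractional matching of $G$ of total weight $\beta := \E|D_i\cap\SM{i}(H_i)| \ge \epsilon p\mu(G)$, so (by K\"onig's theorem if $G$ is bipartite, and otherwise losing a factor $3/2$) there is a fixed integral matching $M^*\subseteq D_i$ with $|M^*| = \Omega(\epsilon p\mu(G))$. Since $M^*\subseteq D_i\subseteq Q_j$, the realized sub-matching $M^*\cap H_j$ is itself a matching of $H_j$ that is a function of $H_j$, with $\E|M^*\cap H_j| = p|M^*| = \Omega(\epsilon p^2\mu(G))$. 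I also record two facts I expect to use: every edge of $D_i$ has $\SM{i}$-marginal $>\Stau$, so $|D_i| \le \mu(G)/(\epsilon^2 p)$; and for $e\in M^*\subseteq Q_j$ one always has $\Pr[e\in\mc{M}'(H_j)]\le p$ for any matching algorithm $\mc{M}'$, since such $e$ is present in $H_j$ only with probability $p$.

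Next I would build a competitor $\mc{A}$ on $H_j$ by merging $\SM{j}(H_j)$ with $M^*\cap H_j$: the union $\SM{j}(H_j)\cup(M^*\cap H_j)$ has maximum degree two, so its components are alternating paths and even cycles, and within each component one keeps the realized $M^*$-edges (completing to a matching), losing at most one $\SM{j}$-edge per component that is an odd path ending in $\SM{j}$-edges on both sides. A first, cheap sanity check is that no realized $M^*$-edge can have both endpoints unmatched in $\SM{j}(H_j)$: otherwise adding all such edges to $\SM{j}(H_j)$ would strictly increase $\Phi_j$ — the size gain dominates the penalty increase, because those edges have marginal at most $p$ — contradicting optimality of $\SM{j}$. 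The real work is to show that $\Phi_j(\mc{A})$ stays within a lower-order term of $\Phi_j$, i.e.\ that the alternating-path repair displaces only a negligible amount of $\SM{j}$'s matching (or only edges of small marginal). Here I would combine the near-optimality guarantees of $\SM{i}$ and $\SM{j}$ (Observations~\ref{obs:EM>Phi} and~\ref{obs:matchingSMlarge}) with the bound $|\Phi_i-\Phi_j|\le 0.01\epsilon^6p^3\mu(G)$ from Claim~\ref{cl:interval}, which is exactly what limits how much the near-maximum matching size can move between the two nested graphs $H_i\supseteq H_j$.

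Finally, writing $a_e=\Pr[e\in\SM{j}(H_j)]$ and $b_e=\Pr[e\in\mc{A}(H_j)]$, I would invoke the strict-concavity property of $\Phi_j$: since $\SM{j}$ maximizes $\Phi_j$, the equal mixture of $\SM{j}$ and $\mc{A}$ cannot do better, and expanding $\Phi_j(\tfrac12\SM{j}\oplus\tfrac12\mc{A})$ gives $\sum_e (a_e-b_e)^2 \le \tfrac{2}{\epsilon}\big(\Phi_j - \Phi_j(\mc{A})\big)$. Combined with Cauchy--Schwarz over $D_i$ and $|D_i|\le\mu(G)/(\epsilon^2p)$, this yields $\sum_{e\in D_i}a_e \ge \sum_{e\in D_i}b_e - \sqrt{|D_i|\cdot\tfrac{2}{\epsilon}(\Phi_j-\Phi_j(\mc{A}))}$, where the first term is $\Omega(\epsilon p^2\mu(G))$ and — once $\Phi_j-\Phi_j(\mc{A})$ is shown to be $O(\epsilon^5p^5\mu(G))$ — the subtracted term is smaller than $\tfrac34\epsilon p^2\mu(G)$, contradicting the assumption $\E|D_i\cap\SM{j}(H_j)| < 0.25\epsilon^3p^2\mu(G)$. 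The factor-$\epsilon^2$ gap between $\epsilon p^2$ and $\epsilon^3p^2$ in the conclusion is precisely what the Cauchy--Schwarz loss costs. The main obstacle is the middle step: designing the competitor so that it simultaneously carries $\Omega(\epsilon p^2\mu(G))$ mass on $D_i$ and is essentially free in the objective; the averaging argument and the arithmetic around it are routine by comparison.
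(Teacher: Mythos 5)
You have correctly identified the algebraic engine of the argument --- the strict-concavity identity $\Phi(\tfrac12\mc{M}_1\oplus\tfrac12\mc{M}_2)=\tfrac{\Phi(\mc{M}_1)+\Phi(\mc{M}_2)}{2}+\epsilon\sum_e\big(\tfrac{p_1(e)-p_2(e)}{2}\big)^2$ --- but you apply it to the wrong pair of algorithms, and this leaves a gap that your sketch does not close and that I do not believe can be closed along the route you propose. The paper never builds a bespoke competitor: under the coupling $H_j\subseteq H_i$, it simply mixes $\SM{i}(H_i)$ with $\SM{j}(H_j)$ and evaluates the mixture as an algorithm \emph{for $H_i$}, contradicting the optimality of $\SM{i}$. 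The two inputs to the mixture are then both optimal for their respective objectives, so Claim~\ref{cl:interval} immediately gives $\tfrac{\Phi_i+\Phi_j}{2}\geq\Phi_i-0.01\epsilon^6p^3\mu(G)$, and the concavity bonus is harvested directly from the fact that every $e\in D_i$ has $p_i(e)>\epsilon^2p$ by definition of $D_i$ while, under the contradiction hypothesis, at least $0.5\epsilon p\mu(G)$ of those edges must have $p_j(e)<0.5\epsilon^2p$. No augmenting-path surgery and no Cauchy--Schwarz are needed.

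Your version instead mixes $\SM{j}$ with a hand-built $\mc{A}$ on $H_j$ and needs $\Phi_j-\Phi_j(\mc{A})=O(\epsilon^5p^5\mu(G))$. This is the step you flag as ``the main obstacle,'' and it is where the proof breaks: the alternating-path repair of $\SM{j}(H_j)\cup(M^*\cap H_j)$ loses up to one $\SM{j}$-edge per component that is a path flanked by $\SM{j}$-edges, and the number of such components can be of order $|M^*\cap H_j|=\Theta(\epsilon p^2\mu(G))$ in expectation. That first-order loss is larger than the $O(\epsilon^5p^5\mu(G))$ budget by a factor of $\Theta(\epsilon^{-4}p^{-3})$, and neither Observation~\ref{obs:matchingSMlarge} nor Claim~\ref{cl:interval} controls it, since those bound objective values across the graphs $H_i,H_j$, not the cost of locally grafting a specific matching into another. (Two smaller issues: rounding the marginals on $D_i$ to an integral $M^*$ via K\"onig is unavailable as stated because Lemma~\ref{lem:vcpartition} allows non-bipartite $G$; and the quantity you ultimately lower-bound, $\sum_{e\in D_i}b_e\approx p|M^*|=\Theta(\epsilon p^2\mu(G))$, already carries an extra factor of $p$ relative to the paper's $D_i'$-counting, which is part of why you need Cauchy--Schwarz and the resulting tighter requirement on $\Phi_j-\Phi_j(\mc{A})$.) I would recommend discarding the competitor construction and instead running the mixture in the other direction, on $H_i$, against $\SM{i}$.
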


Claims~\ref{cl:matchingsdecreasing} and \ref{cl:interval} are proved in Section~\ref{sec:proofs-uh123} and Claim~\ref{cl:Dicontinuestocontribute} is proved in Section~\ref{sec:proof-dgl1298371923}. Claim~\ref{cl:Dicontinuestocontribute} is, in particular, the key part of the proof. It is proved by showing that if the condition of Claim~\ref{cl:Dicontinuestocontribute} is not satisfied, then the randomized matching algorithm that with probability $0.5$ picks the output of $\SM{j}(H_j)$ and otherwise the output of $\SM{i}(H_i)$, should obtain a larger objective than $\Phi_i$ which we show is a contradiction. 

Having proved these claims, we now turn to prove Lemma~\ref{lem:vcpartition}. 

\begin{proof}[Proof of Lemma~\ref{lem:vcpartition}]
First, we set $k = 5 \cdot \frac{300}{\epsilon^9 p^5}$. Since for any $i$, each edge in $D_i$ has probability at least $\epsilon^2 p$ of being in matching $\SM{i}(H_i)$, and that the probabilities around each vertex sum up to at most one, there are at most $\sfrac{1}{\epsilon^2 p}$ edges connected to each vertex in $D_i$. This implies that for any $j \in [k]$, $Q_j$ has maximum degree at most $k \cdot \frac{1}{\epsilon^2 p} = O(\frac{1}{\epsilon^{11} p^6})$, satisfying Property~\ref{vp-prop:degreeQ}. Now we prove there exists some $(Q_i, S_i)$ satisfying Properties~\ref{vp-prop:expmatching} and \ref{vp-prop:prsmallS} as well.

Let $I$ be as provided by Claim~\ref{cl:interval}. There are two possible cases:

	\smparagraph{Case 1 ---} There is some $i \in I$ where $\E|D_i \cap \SM{i}(H_i)| < \epsilon p \mu(G)$\textbf{:} 
	
	\noindent In this case, we can let $Q \gets Q_i, S \gets S_i,$ which implies graph $H$ of Lemma~\ref{lem:vcpartition} has the same distribution as $H_i$. We now let matching algorithm $\mc{M}$, required by Lemma~\ref{lem:vcpartition}, to be the same as matching algorithm $\SM{i}$, except that we exclude the edges of $D_i$ from the matching. That is, we let $\mc{M}(H) = \SM{i}(H_i) \setminus D_i$.

	Since we exclude the edges in $D_i$ from the matching,  we get that for all edges $e \in S$, $\Pr[e \in \mc{M}(H)] \leq \Stau$ satisfying Property~\ref{vp-prop:prsmallS}. On the other hand, 
	\begin{flalign*}
		\E|\mc{M}(H)| &= \E[ |\SM{i}(H_i) \setminus  D_i| ] = \E|\SM{i}(H_i)| - \E|D_i \cap \SM{i}(H_i)|\\
		&> \E|\SM{i}(H_i)| - \epsilon p \mu(G) \tag{By the assumption of Case 1.}\\
		&\geq (1-\epsilon) \E[\mu(H_i)] - \epsilon p \mu(G) \tag{By Observation~\ref{obs:matchingSMlarge}.}\\
		&\geq (1-2\epsilon) \E[\mu(H_i)] \tag{Since $\E[\mu(H_i)] \geq p\mu(G)$.}\\
		&= (1-2\epsilon) \E[\mu(H)]. \tag{Since $H$ and $H_i$ have the same distribution.}
	\end{flalign*}
	This proves $\mc{M}$ is a $(1-2\epsilon)$-approximate matching algorithm, satisfying Property~\ref{vp-prop:expmatching}.

	\smparagraph{Case 2 ---} For all $i \in I$, $\E|D_i \cap \SM{i}(H_i)| \geq \epsilon p \mu(G)$\textbf{:}
	
	\noindent In this case, by Claim~\ref{cl:Dicontinuestocontribute}, we have $\E|D_i \cap \SM{j}(H_j)| \geq 0.25 \epsilon^3 p^2 \cdot  \mu(G)$ for all $i < j$ in $I$. Let us denote $I = \{a_1, a_2, \ldots, a_{\ell}\}$ where $a_1 < \ldots < a_{\ell}$. Letting $j = a_{\ell}$, we thus get
	$$
		\E|D_{a_i} \cap \SM{a_\ell}(H_{a_\ell})| \geq 0.25 \epsilon^3 p^2 \cdot  \mu(G) \qquad\text{for all $i \in \{1, \ldots, \ell - 1\}$}.
	$$
	On the other hand, observe from construction (\ref{eq:constr-partitn}) that sets $D_{a_1}, \ldots, D_{a_{\ell - 1}}$ are all pairwise disjoint. This implies
	\begin{equation}\label{eq:bheug123981273}
		\E|\SM{a_\ell}(H_{a_\ell})| \geq \sum_{i=1}^{\ell - 1} \E|D_{a_i} \cap \SM{a_\ell}(H_{a_\ell})| \geq (\ell - 1) \cdot 0.25\epsilon^3 p^2 \cdot \mu(G).
	\end{equation}
	Recall from Claim~\ref{cl:interval} that $\ell = |I| \geq \frac{1}{300} \epsilon^6 p^3 k$. Since we set  $k = 5 \cdot \frac{300}{\epsilon^9 p^5}$, we get $\ell \geq \frac{5}{\epsilon^3 p^2}$ which combined with (\ref{eq:bheug123981273}) implies $\E|\SM{a_\ell}(H_{a_\ell})| > \mu(G)$ which is a contradiction since $H_{a_\ell}$ is a subgraph of $G$ and cannot have a larger matching than $\mu(G)$. This contradiction implies that if we set $k$ large enough, this second case essentially does not happen. As a result, we always end up at Case 1, which we just showed how it proves Lemma~\ref{lem:vcpartition}.

The proof of Lemma~\ref{lem:vcpartition} is thus complete.
\end{proof}

Finally, we remark that our techniques also lead to a partitioning with the same guarantee as in  Lemma~\ref{lem:vcpartition} that can be found in polynomial time. We defer the details of this polynomial-time implementation to Appnedix~\ref{apx:polytime}.

\subsection{Proofs of Claims~\ref{cl:matchingsdecreasing} and \ref{cl:interval}}\label{sec:proofs-uh123}
\begin{proof}[Proof of Claim~\ref{cl:matchingsdecreasing}]
	As discussed, in the coupling of Eq~\ref{eq:coupling}, $H_i$ is always a subgraph of $H_{i-1}$. As a result, matching algorithm $\SM{i}$ is also applicable on graph $H_{i-1}$, implying that $\Phi_{i-1} \geq \Phi_i$ for all $i$.
	
	To see why $\mu(G) \geq \Phi_0$, note from Observation~\ref{obs:EM>Phi} that $\E|\SM{0}(H_0)| \geq \Phi_0$. On the other hand, no matter what matching algorithm we use for $\SM{0}$, we have $\E|\SM{0}(H_0)| \leq \mu(G)$ as the output must be a matching in $H_0$ and thus $G$. Combining the two bounds gives $\mu(G) \geq \Phi_0$.
	
	Finally, to see why $\Phi_k \geq (1-\epsilon) \cdot p\cdot \mu(G)$, fix a maximum matching $M$ of $G$ which has to have size $\mu(G)$. Every edge $e \in M$ either is in $Q_k$ or $S_k$; in either case, $e \in H_k$ with probability at least $p$. We thus have $\E[\mu(H_k)] \geq \E|M \cap H_k| \geq p |M| = p \cdot \mu(G)$. Now consider a choice for $\SM{k}$ which deterministically picks a maximum matching $M_k$ of $H_k$. This proves
	\begin{flalign*}
		\Phi_k &\geq \sum_{e \in E} \Pr[e \in M_k] - \epsilon \Pr[e \in M_k]^2 \geq \sum_{e \in E} \Pr[e \in M_k] - \epsilon \Pr[e \in M_k]\\
		&= (1-\epsilon) \sum_{e \in E} \Pr[e \in M_k] 
		\geq (1-\epsilon) \cdot \E|M_k|
		= (1-\epsilon) \cdot \E[\mu(H_k)] 
		\geq (1-\epsilon) \cdot p \cdot \mu(G),
	\end{flalign*}
	completing the proof.
\end{proof}

\begin{proof}[Proof of Claim~\ref{cl:interval}]
	Let us define $I_j$ for any $j \in \{1, \ldots, \lceil 100/\epsilon^6 p^3\rceil\}$ as follows
	$$
		I_j := \{ i : (1- 0.01 j \epsilon^6 p^3) \mu(G) < \Phi_i \leq (1- 0.01 (j-1) \epsilon^6 p^3) \mu(G) \}.
	$$
	Recall from Claim~\ref{cl:matchingsdecreasing} that 
	$
		\mu(G) = \Phi_0 \geq \Phi_1 \geq \ldots \geq \Phi_k \geq (1-\epsilon)p\mu(G).
	$ 
	Thus, $I_1, I_2, \ldots$ partition $[k]$ into consecutive intervals where for all elements $i, j$ in the same interval, $|\Phi_i - \Phi_j| \leq 0.01 \epsilon^6 p^3 \mu(G)$.
	
	Since there are only $\lceil 100/ \epsilon^6 p^3 \rceil$ intervals and $\sum_j |I_j| = k$ (as every $i \in [k]$ belongs to exactly one of the intervals) there is at least one interval $I$ with $|I| \geq \frac{k}{\lceil 100/ \epsilon^6  p^3 \rceil + 1} > \frac{1}{300} \cdot \epsilon^6 p^3 k$. This interval $I$ satisfies the required property of the claim by its definition, and has the desired size.
\end{proof}

\subsection{Proof of Claim~\ref{cl:Dicontinuestocontribute}}\label{sec:proof-dgl1298371923}

\begin{proof}[Proof of Claim~\ref{cl:Dicontinuestocontribute}]
	Suppose for the sake of contradiction that 
	\begin{equation}\label{eq:gfc12839}
		\E|D_i \cap \SM{i}(H_i)| \geq \epsilon p \mu(G) \qquad \qquad \text{for all $i \in I$},
	\end{equation}
	and that there are $i, j \in I$ such that $i < j$ and 
	\begin{equation}\label{eq:bue129387}
		\E|D_i \cap \SM{j}(H_j)| <  0.25 \epsilon^3 p^2 \cdot \mu(G).
	\end{equation}
	Consider a matching $\SM{i, j}(H_i)$ which with probability $1/2$ returns the output of $\SM{i}(H_i)$ and with probability $1/2$ returns the output of $\SM{j}(H_j)$. Since $i < j$, by the coupling (\ref{eq:coupling}), $H_j$ is a subgraph of $H_i$ and thus any matching in $H_j$ is a matching in $H_i$. As a result, $\SM{i, j}$ is a valid matching algorithm for graph $H_i$. We prove that under (\ref{eq:gfc12839}) and (\ref{eq:bue129387}), algorithm $\SM{i, j}$ should satisfy $\Phi_i(\SM{i, j}) > \Phi_i(\SM{i})$ which contradicts the assumption that $\SM{i}$ maximizes $\Phi_i(\SM{i})$. 
	
	From the definition of objective $\Phi_i(\SM{i, j})$ we have
	\begin{flalign*}
		\Phi_i(\SM{i, j}) &= \sum_{e \in E} \Pr[e \in \SM{i, j}(H_i)] - \epsilon \Pr[e \in \SM{i, j}(H_i)]^2\\
		&= \sum_{e \in E} \left(\frac{\Pr[e \in \SM{i}(H_i)]+\Pr[e \in \SM{j}(H_j)]}{2}\right) - \epsilon \left(\frac{\Pr[e \in \SM{i}(H_i)]+\Pr[e \in \SM{j}(H_j)]}{2}\right)^2.
	\end{flalign*}
	Let us for simplicity of notation use $p_i(e) := \Pr[e \in \SM{i}(H_i)]$ and $p_j(e) := \Pr[e \in \SM{j}(H_j)]$. The equality above therefore can be expressed as
	\begin{equation}\label{eq:ce219837}
		\Phi_i(\SM{i, j}) = \sum_{e \in E} \left(\frac{p_i(e) + p_j(e)}{2} \right) - \epsilon \left(\frac{p_i(e) + p_j(e)}{2}\right)^2.
	\end{equation}
	Basic mathematical calculations give that for any $e$,
	\begin{flalign}
		\nonumber \left(\frac{p_i(e) + p_j(e)}{2}\right)^2 &= \frac{p_i(e)^2 + p_j(e)^2 +2p_i(e)p_j(e)}{4}\\
		\nonumber &= \frac{p_i(e)^2}{2} + \frac{p_j(e)^2}{2} - \frac{p_i(e)^2}{4} - \frac{p_j(e)^2}{4} + \frac{2p_i(e)p_j(e)}{4}\\
		&= \frac{p_i(e)^2}{2} + \frac{p_j(e)^2}{2} - \left( \frac{p_i(e) - p_j(e)}{2} \right)^2.\label{eq:gc128937123}
	\end{flalign}
	Replacing (\ref{eq:gc128937123}) back into (\ref{eq:ce219837}) gives
	\begin{flalign*}
		\Phi_i(\SM{i, j}) &= \sum_{e \in E} \left(\frac{p_i(e) + p_j(e)}{2} \right) - \epsilon \left( \frac{p_i(e)^2}{2} + \frac{p_j(e)^2}{2} - \left( \frac{p_i(e) - p_j(e)}{2} \right)^2 \right) \\
		&= \sum_{e \in E} \left(\frac{p_i(e) - \epsilon p_i(e)^2}{2}  \right) + \left(\frac{p_j(e) - \epsilon p_j(e)^2}{2}  \right) + \epsilon \left( \frac{p_i(e) - p_j(e)}{2} \right)^2.
		\tag{By simply moving the terms in the previous line.}\\
		&= \frac{\Phi_i}{2} + \frac{\Phi_j}{2} + \epsilon \sum_{e \in E} \left( \frac{p_i(e) - p_j(e)}{2} \right)^2 \tag{See below.}\\
		&\geq \Phi_i - 0.01\epsilon^6 p^3 \mu(G) + \epsilon \sum_{e \in E} \left( \frac{p_i(e) - p_j(e)}{2} \right)^2 \tag{Since $|\Phi_i - \Phi_j| \leq 0.01 \epsilon^6 p^3 \mu(G)$ by Claim~\ref{cl:interval}.}
	\end{flalign*}
	The third equality above, simply comes from the definition (\ref{eq:objective}) for $\Phi_i$, which implies $\Phi_i = \sum_e \Pr[e \in \SM{i}(H_i)] - \epsilon \Pr[e \in \SM{i}(H_i)]^2 = \sum_e p_i(e) - \epsilon p_i(e)^2$, and from the same bound applied on $\Phi_j$.
	
	Now define subset $D'_i := \{ e \in D_i \mid p_j(e) < 0.5 \epsilon^2 p \}$ of $D_i$. Using this subset only instead of the set $E$ of edges in the inequality above gives
	\begin{flalign*}
		\Phi_i(\SM{i, j}) &\geq \Phi_i - 0.01\epsilon^6 p^3 \mu(G)  + \epsilon \sum_{e \in D'_i} \left( \frac{p_i(e) - p_j(e)}{2} \right)^2 \\
		&\geq \Phi_i - 0.01\epsilon^6 p^3\mu(G) + \epsilon \sum_{e \in D'_i} \left(\frac{\epsilon^2 p - 0.5 \epsilon^2 p}{2}\right)^2 \tag{Since for any $e \in D'_i$, $p_j(e) < 0.5\epsilon^2 p$ by definition of $D'_i$ and $p_i(e) \geq \epsilon^2 p$ by definition of $D_i$.}\\
		&= \Phi_i - 0.01\epsilon^6 p^3 \mu(G) + \epsilon \sum_{e \in D'_i} \frac{\epsilon^4 p^2}{16}\\
		&= \Phi_i - 0.01\epsilon^6 p^3 \mu(G)  + \frac{\epsilon^5 p^2}{16} |D'_i|.
	\end{flalign*}
	To obtain the claimed contradiction, we will  prove that 
	\begin{equation}\label{eq:cgy182379123}
		|D'_i| \geq 0.5 \epsilon p \mu(G),
	\end{equation}
	which combined by inequality above proves
	\begin{equation}\label{eq:htcr1823973}
		\Phi_i(\SM{i, j}) \geq \Phi_i - 0.01\epsilon^6 p^3 \mu(G) + \frac{\epsilon^6 p^3}{32} \mu(G) >  \Phi_i + 0.01 \epsilon^6 p^3 \mu(G),
	\end{equation}
	which contradicts $\Phi_i$ being the maximum objective achievable.
	
	To complete the proof, it thus only remains to prove (\ref{eq:cgy182379123}). We have
	$$
		\E|D_i \cap \SM{j}(H_j)| = \sum_{e \in D_i} p_j(e)  = \sum_{e \in D_i \setminus D'_i} p_j(e) + \sum_{e \in D'_i} p_j(e) \geq \sum_{e \in D_i \setminus D'_i} p_j(e) \geq (|D_i| - |D'_i|) 0.5 \epsilon^2 p.
	$$
	Also note that 
	$$
		\epsilon p \mu(G) \stackrel{(\ref{eq:gfc12839})}{\leq} \E|D_i \cap \SM{i}(H_i)| = \sum_{e \in D_i} p_i(e) \leq \sum_{e \in D_i} 1 = |D_i|.
	$$
	Combining the two bounds above gives
	$$
		\E|D_i \cap \SM{j}(H_j)| \geq (\epsilon p \mu(G) - |D'_i|) 0.5 \epsilon^2 p = 0.5 \epsilon^3 p^2 \mu(G) - 0.5 \epsilon^2 p|D'_i|,
	$$
	which combined with the bound $\E|D_i \cap \SM{j}(H_j)| < 0.25 \epsilon^3 p^2 \cdot \mu(G)$ of (\ref{eq:bue129387}) gives
	$$
		0.25\epsilon^3 p^2 \mu(G) > 0.5 \epsilon^3 p^2 \mu(G) - 0.5 \epsilon^2 p |D'_i|.
	$$
	By moving the terms, we get
	$$
		|D'_i| > \frac{0.25 \epsilon^3 p^2 \mu(G)}{0.5 \epsilon^2 p} = 0.5 \epsilon p \mu(G).
	$$
	This is the desired bound of inequality (\ref{eq:cgy182379123}), which as discussed above completes the proof.
\end{proof}
 
\section{Tool II: A New Vertex-Independent Matching Lemma}\label{sec:vertexindependent}

The notion of ``vertex-independent matchings'' for stochastic graphs was introduced first in \cite{stoc20}. In this section, we present a new vertex-independent matching lemma for bipartite graphs, that unlike the previous ones \cite{stoc20,focs20}, which required the vertices to be far apart in the graph to have independence, guarantees independence for any pair of non-adjacent nodes, even if there is a short path of length 3 between them. This stronger guarantee on the independence is the key to improve per-vertex queries from $O_p(1)$ down to $\poly(1/p)$.

We use the vertex-independent lemma to prove the following which use for our bipartite MVC approximate algorithm.

\begin{lemma}\label{lem:appvil}
Let $G=(V, E)$ be a bipartite graph, let realization $G_p=(V, E_p)$ be a random subgraph of $G$ that includes each of its edges independently with probability $p$. Let $(Q, S)$ be a partitioning of $E$ and denote $Q_p := Q \cap E_p$ and $S_p := S \cap E_p$. Suppose also that we are given a (possibly randomized) matching algorithm $\mc{M}$, and a fractional matching $\b{q}$ on $E$ such that:
\begin{enumerate}
	\item For any edge $e \in Q$, $q_e = \Pr_{Q_p, \mc{M}}[e \in \mc{M}(Q_p)]$.
	\item For any edge $e \in S$, $q_e \leq \epsilon^5 p$ for some $\epsilon > 0$.
\end{enumerate}
Then $\E[\mu(G_p)] \geq (1-6\epsilon)\frac{e}{e+1}|\b{q}|$.
\end{lemma}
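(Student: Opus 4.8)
The goal is to lower-bound $\E[\mu(G_p)]$, the expected maximum matching size of the true realization, in terms of $|\b{q}|$, where $\b{q}$ is a fractional matching whose mass on $Q$-edges matches the marginals of $\mc{M}$ run on $Q_p$, and whose mass on $S$-edges is tiny ($\le \epsilon^5 p$). The natural strategy is to build an \emph{explicit randomized matching} in $G_p$ and bound its expected size from below. The candidate is: run $\mc{M}$ on $Q_p$ to get a matching $M_Q := \mc{M}(Q_p)$, then try to augment it greedily using the realized $S$-edges $S_p$. Since $M_Q \subseteq G_p$ and $S_p \subseteq G_p$, any matching we extract from $M_Q \cup S_p$ is a valid matching in $G_p$, so $\E[\mu(G_p)] \ge \E[\text{size of this combined matching}]$. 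The term $\frac{e}{e+1}$ strongly suggests a ranking/online-type argument: this is exactly the competitive ratio that shows up when a fractional matching of value $|\b{x}|$ can be rounded online (or via a random-permutation greedy) to an integral matching of expected size $\ge \frac{e}{e+1}|\b{x}|$ — more precisely, the bound $1 - \prod(1-x_e/k)^{\dots} \to 1-1/e$ per vertex, which globally gives the $\frac{e}{e+1}$ factor for bipartite fractional matchings via a water-filling / independent-rounding analysis. So I expect the proof to invoke a lemma of the form: "given a bipartite fractional matching $\b{q}$ with $q_e \le \delta$ for all $e$ (small max edge value), one can round it to an integral matching of size $\ge (1-1/e-o(1))|\b{q}|$" — but here the rounding must respect the fact that $Q$-edges are only realized with probability $p$ (already baked into $q_e$ being the marginal), while $S$-edges are realized with probability $p$ independently.

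**Key steps, in order.** First, I would set up the rounding of the fractional matching $\b{q}$. The intuition: $q_e$ on a $Q$-edge already equals $\Pr[e \in M_Q]$, so the events "$e \in M_Q$" across edges behave (after conditioning) roughly like an independent rounding with these marginals, except they are constrained to form a matching — this is even better than independent rounding, since $M_Q$ is genuinely a matching. For $S$-edges, I would do the following: scale down, i.e. for each $S$-edge $e$ include it in a "proposal set" with probability $q_e/(\epsilon^5 p) \cdot (\text{something}) \cdot \mathbb{1}[e \in S_p]$ — wait, more carefully: since $e \in S_p$ with probability exactly $p$ independently, and I want each $S$-edge to be "offered" with marginal probability proportional to $q_e$, I offer $e$ with probability $q_e/p$ given $e \in S_p$ (valid since $q_e \le \epsilon^5 p \le p$), so that unconditionally $e$ is offered with probability $q_e$. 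Second, I would now have a random edge set: $M_Q$ (a matching) plus a random subset $S'$ of $S$-edges where each $e \in S$ is in $S'$ with marginal $q_e$, independently of everything. The union $M_Q \cup S'$ has expected total size $= |\b{q}_Q| + |\b{q}_S| \ge |\b{q}| - (\text{tiny correction})$, but it need not be a matching. Third — the crux — I extract a large matching from $M_Q \cup S'$: process the $S'$-edges in a random order and add $e=(u,v)$ to the matching only if both $u,v$ are still unmatched (they might be matched by $M_Q$ or by an earlier $S'$-edge). I need to show that few edges are "lost" to collisions. Here is where the \emph{vertex-independence} promised in Section~\ref{sec:vertexindependent} must enter: for an $S$-edge $e=(u,v)$, the probability $u$ is matched by $M_Q$ is $q_u := \sum_{e'\ni u} q_{e'} \le 1$, similarly $v$; and crucially these two events should be (nearly) independent because $u,v$ are non-adjacent endpoints — this is precisely the new VIM guarantee ("independence for any pair of non-adjacent nodes, even connected by a path of length 3"). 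The fraction of $u$'s fractional mass coming from $S$ is at most $|E| \cdot \epsilon^5 p / 1$... no — rather, since each $S$-edge at $u$ has $q$-value $\le \epsilon^5 p$ and the degree in $Q$ is bounded, I'd argue the $S$-contribution to each vertex's congestion is controllable, and run the standard "online bipartite with fractional arrivals" analysis vertex-by-vertex: a vertex $v$ with total incoming fractional mass $q_v$ gets matched with probability $\ge 1 - e^{-q_v} \cdot(1\pm\epsilon)$-ish, and summing $\sum_v (1-e^{-q_v})$ against $\sum_v q_v = 2|\b{q}|$ with concavity yields $\ge \frac{1}{2}\cdot 2|\b{q}|\cdot\frac{e-1}{e}$... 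I need to be careful to land exactly on $\frac{e}{e+1}$ rather than $\frac{e-1}{e}$, which means the right inequality is likely $1 - e^{-x} \ge \frac{e}{e+1}\cdot \frac{2x}{1+x}$-type or a direct RS-decomposition / fractional-matching-to-integral rounding bound; the factor $\frac{e}{e+1}$ is the known tight bound for rounding a bipartite fractional matching with small edge values, so I'd cite or reprove that clean statement and feed $\b{q}$ into it, having argued that the $Q$-part's matching structure only helps and the $S$-part is independent enough to behave like fresh randomness.

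**Main obstacle.** The hard part is handling the interaction between $M_Q$ (whose internal correlations are \emph{not} product-form — it's a matching output by an arbitrary algorithm, so I only control its first moments $q_e = \Pr[e\in M_Q]$) and the augmenting $S'$-edges, while getting the constant exactly $\frac{e}{e+1}$ with only a $(1-6\epsilon)$ loss. Two specific difficulties: (1) I know only marginals of $M_Q$, not joint distribution, so to argue a vertex $u$ is matched by $M_Q$ "independently" of whether a neighboring structure is matched, I must lean entirely on Lemma~\ref{lem:appvil}'s hypotheses plus the VIM lemma from Section~\ref{sec:vertexindependent} — in particular the claim that for non-adjacent $u,v$, the events $\{u \text{ matched in } M_Q\}$ and $\{v \text{ matched in } M_Q\}$ are independent; without that, the per-vertex online-rounding analysis of the $S$-augmentation breaks because the "arriving" $S$-edge $(u,v)$ sees correlated occupancy at its two endpoints. (2) Ensuring the $\epsilon^5 p$ bound on $S$-edges is exactly what's needed to make the scaling $q_e/p \le \epsilon^4$ small enough that the $S$-edges' contribution to vertex congestion and to pairwise collisions among themselves is $O(\epsilon)$-negligible, so that the effective fractional matching being rounded is within $(1-O(\epsilon))$ of $\b{q}$ and the rounding guarantee $\frac{e}{e+1}$ applies cleanly. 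I would structure the proof as: (a) state/prove the clean rounding lemma for bipartite fractional matchings with tiny edge weights giving factor $\frac{e}{e+1}$; (b) invoke the VIM lemma to realize $\b{q}$ as (close to) the marginal vector of a genuine distribution over matchings in $G_p$ where the requisite pairwise independences hold; (c) combine, absorbing all lower-order terms into the $6\epsilon$.
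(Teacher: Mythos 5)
There is a genuine gap: your plan does not actually reach the constant $\frac{e}{e+1}$, and you acknowledge as much (``I need to be careful to land exactly on $\frac{e}{e+1}$ rather than $\frac{e-1}{e}$'') without supplying the missing idea. The per-vertex analysis you sketch --- each vertex matched with probability about $1-e^{-q_v}$, summed with concavity --- bottoms out at $\frac{e-1}{e}|\b{q}|$, and there is no generic ``round a bipartite fractional matching with tiny edge values at ratio $\frac{e}{e+1}$'' lemma to cite: in bipartite graphs the fractional matching polytope is integral, so the difficulty is never rounding; it is that $\b{q}$ lives on $G$ while the matching must live in $G_p$. The paper gets $\frac{e}{e+1}$ from a case analysis on how $|\b{q}|$ splits between $Q$ and $S$: it proves two separate bounds, $\E[\mu(G_p)]\geq \b{q}(Q)$ (trivially, from $\mc{M}(Q_p)$ itself) and $\E[\mu(G_p)]\geq(1-6\epsilon)\left(\frac{e-1}{e}\b{q}(Q)+\b{q}(S)\right)$, and the minimum of the larger of the two over all splits is attained at $\b{q}(Q)=\frac{e}{e+1}|\b{q}|$, which is exactly where $\frac{e}{e+1}$ comes from. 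Your single-pass argument has no mechanism for recovering the full $\b{q}(Q)$ in the regime where $Q$ carries most of the mass, so it caps out at $\frac{e-1}{e}$.

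A second, related gap is in how you use vertex independence. You want to argue that for an $S$-edge $(u,v)$ the occupancy events of $u$ and $v$ under $M_Q=\mc{M}(Q_p)$ are independent, but the VIM lemma gives no such guarantee about $\mc{M}(Q_p)$ --- $\mc{M}$ is arbitrary and its joint distribution is uncontrolled. The lemma instead constructs a \emph{different} matching $M_\mc{B}$ (a one-round proposal scheme) that approximates $\mc{M}(Q_p)$ only up to a $(1-\sfrac{1}{e})$ factor but has the independence property; this $\sfrac{1}{e}$ loss on the $Q$-side is precisely the $\frac{e-1}{e}$ in the second bound above. Moreover, your greedy augmentation of $M_Q$ by sampled $S$-edges can lose essentially all of $\b{q}(S)$ when the endpoints of $S$-edges are saturated by $M_Q$ with probability near $1$; the paper avoids this by assigning each surviving $S_p$-edge the \emph{boosted} fractional value $q_e/\left(p\Pr[\overline{v\text{ prop}}]\Pr[u\notin M_\mc{B}]\right)$ conditioned on both endpoints being free --- the independence of ``$v$ proposes'' and ``$u\in M_\mc{B}$'' for non-adjacent $u,v$ is used exactly to show this contributes $q_e$ in expectation --- and then handles the resulting overflow ($y_v>1+\epsilon$) by a Chebyshev argument, which is where $q_e\leq\epsilon^5 p$ enters. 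Your outline contains the right ingredients (VIM, small $S$-marginals, combining $Q_p$ with $S_p$) but not the two-bound decomposition or the conditional boosting, and without them the claimed constant is not reached.
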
 

We first present the vertex-independent matching algorithm in Section~\ref{sec:vialg} and use it to prove Lemma~\ref{lem:appvil} in Section~\ref{sec:proofofappvil}.

\subsection{The Vertex-Independent Matching Algorithm}\label{sec:vialg}

In this section we present our vertex-independent matching algorithm which satisfies the following:

\begin{lemma}[\textbf{Bipartite Vertex-Independent Matching Lemma}]\label{lem:vertexindependent}
Let $\Gamma(A, B, E_\Gamma)$ be a bipartite graph, let $\Gamma_p$ be a random subgraph of $\Gamma$ that contains any of its edges independently with probability $p$, let $\mc{A}$ be an \underline{arbitrary} matching algorithm, possibly randomized, and let $M_\mc{A}$ be the matching obtained by running $\mc{A}$ on $\Gamma_p$. There is a randomized algorithm (Algorithm~\ref{alg:B}) for constructing a matching $M_\mc{B}$ of $\Gamma_p$ such that:
\begin{enumerate}[label=$(\roman*)$]
\item $\E|M_{\mc{B}}| \geq (1-\frac{1}{e}) \cdot \E|M_{\mc{A}}|$.\label{viprop:expsize}
\item For any vertex $v \in A$, $\Pr[v \in M_\mc{B}] \leq \Pr[v \text{ proposes}] = \Pr[v \in M_\mc{A}]$.\\(See Algorithm~\ref{alg:B} for how the vertices on the $A$ side ``propose''.)\label{viprop:Asideprob}
\item For any vertex $u \in B$, $\Pr[u \in M_{\mc{B}}] \leq \Pr[u \in M_{\mc{A}}]$.\label{viprop:Bsideprob}
\item \label{viprop:independence} For any non-adjacent $v \in A, u \in B$ (i.e. $(u, v) \not\in E$), whether $v$ proposes (see Algorithm~\ref{alg:B}) is independent of event $u \in M_\mc{B}$.
\end{enumerate}
We emphasize that $M_{\mc{A}}$ (resp. $M_{\mc{B}}$) has two sources of randomization, one in the randomization of graph $\Gamma_p$, and one the possible randomization in algorithm $\mc{A}$ (resp. $\mc{B}$). The probabilistic statements above are with regards to both.
\end{lemma}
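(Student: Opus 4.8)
The plan is to build Algorithm~\ref{alg:B} as a one‑shot propose/accept scheme in which every $A$‑vertex picks a single vertex to propose to \emph{using only the $\Gamma_p$-status of the edges incident to it plus fresh private randomness} --- this locality is exactly what will yield property~\ref{viprop:independence}. For each $v\in A$ I would form a coupled realization $\Gamma_p^{(v)}$ of $\Gamma$ that agrees with $\Gamma_p$ on every edge incident to $v$ but re‑samples every other edge of $\Gamma$ with an independent Bernoulli$(p)$ coin; I then run $\mc{A}$ on $\Gamma_p^{(v)}$ with fresh internal coins and let $T_v$ be the partner of $v$ in the resulting matching (or $T_v=\bot$ if $v$ is unmatched). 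We declare that $v$ \emph{proposes} iff $T_v\neq\bot$, in which case it proposes to $T_v$. The sanity check is that $T_v=u$ forces $(v,u)\in\Gamma_p^{(v)}$, and since $\Gamma_p^{(v)}$ and $\Gamma_p$ coincide on the edges at $v$ this gives $(v,u)\in\Gamma_p$, so proposals always travel along realized edges. On the $B$-side each $u$ accepts one (say, uniformly random) of its proposers; $M_\mc{B}$ is the set of accepted edges, which is visibly a matching of $\Gamma_p$.

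I would then record three distributional facts. (a) Since $\Gamma_p^{(v)}$ is again a $p$-random subgraph of $\Gamma$ and $\mc{A}$'s coins are re‑drawn i.i.d., the matching $\mc{A}(\Gamma_p^{(v)})$ has the same law as $M_\mc{A}$; hence $\Pr[v\text{ proposes}]=\Pr[v\in M_\mc{A}]$ and, for each neighbor $u$ of $v$, $\Pr[T_v=u]=\Pr[(u,v)\in M_\mc{A}]$. (b) The family $\{T_v\}_{v\in A}$ is mutually independent: each $T_v$ is a deterministic function of the $\Gamma_p$-status of the edges at $v$ and of a block of private randomness attached to $v$ (the fresh re‑sampling of all other edges and of $\mc{A}$'s coins), the edge‑sets incident to distinct $A$-vertices are disjoint by bipartiteness, and the private blocks are independent. (c) A vertex $u\in B$ lies in $M_\mc{B}$ exactly when some neighbor proposes to it, so by (a) and (b)
\begin{equation*}
\Pr[u\in M_\mc{B}] \;=\; 1-\prod_{v\sim u}\bigl(1-\Pr[(u,v)\in M_\mc{A}]\bigr).
\end{equation*}

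Everything now follows. Property~\ref{viprop:Asideprob}: the equality is fact (a), and $\Pr[v\in M_\mc{B}]\le\Pr[v\text{ proposes}]$ because $v$ can only be matched if it proposed. Property~\ref{viprop:Bsideprob}: a union bound on identity (c) gives $\Pr[u\in M_\mc{B}]\le\sum_{v\sim u}\Pr[(u,v)\in M_\mc{A}]=\Pr[u\in M_\mc{A}]$, the last step since $M_\mc{A}$ is a matching. Property~\ref{viprop:expsize}: apply $1-t\le e^{-t}$ to identity (c), then the elementary inequality $1-e^{-x}\ge(1-\frac{1}{e})x$ for $x\in[0,1]$ (concavity of $1-e^{-x}$) with $x=\Pr[u\in M_\mc{A}]=\sum_{v\sim u}\Pr[(u,v)\in M_\mc{A}]\le 1$, to get $\Pr[u\in M_\mc{B}]\ge(1-\frac{1}{e})\Pr[u\in M_\mc{A}]$; sum over $u\in B$ and use that both $|M_\mc{A}|$ and $|M_\mc{B}|$ equal the number of matched $B$-vertices. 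Property~\ref{viprop:independence}: for non‑adjacent $v\in A$, $u\in B$, the event ``$u\in M_\mc{B}$'' depends only on $\{T_{v'}:v'\sim u\}$ (the tie‑break affects \emph{which} edge is taken, not whether $u$ is matched), the event ``$v$ proposes'' depends only on $T_v$, and since $v\notin\{v':v'\sim u\}$ these are functions of disjoint, hence independent, members of the family $\{T_w\}_{w\in A}$ from fact (b).

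The main obstacle is getting $\Pr[v\text{ proposes}]$ to equal $\Pr[v\in M_\mc{A}]$ \emph{exactly} while keeping $v$'s proposal both local (needed for independence) and supported on realized edges. The naive fixes fail by a factor of $p$: if $v$ samples a target edge independently of $\Gamma_p$ and proposes only when that edge turns out to be realized, or if $v$ re‑runs $\mc{A}$ on a fully independent copy, then $\Pr[v\text{ proposes}]$ drops to $p\cdot\Pr[v\in M_\mc{A}]$. Coupling $\Gamma_p^{(v)}$ to $\Gamma_p$ \emph{precisely on the edges incident to $v$} is what simultaneously pins down the marginal, makes the chosen edge automatically realized, and leaves the across‑$v$ independence intact. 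A secondary point to get right is that ``$u\in M_\mc{B}$'' depends only on the proposal targets $\{T_{v'}:v'\sim u\}$ and on none of $u$'s tie‑breaking randomness, which is what makes the independence argument go through cleanly.
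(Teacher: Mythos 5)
Your proposal is correct and is essentially the paper's proof: your coupled resampling of $\Gamma_p^{(v)}$ is exactly an operational implementation of the paper's proposal rule, in which $v$ proposes to $u$ with probability $\Pr[(u,v)\in M_\mc{A}\mid R_v]$ where $R_v$ is the realization status of $v$'s edges, and the four properties are then derived by the same locality and product-formula arguments (your $1-e^{-x}\geq(1-\sfrac{1}{e})x$ step is a slightly slicker route to the same $(1-\sfrac{1}{e})$ bound the paper gets by minimizing $1-\prod(1-\E[Y_i])$ at equal marginals).
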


\begin{proof}
	We start by describing the algorithm for constructing matching $M_\mc{B}$. 
	
	For any vertex $v \in A$, let us use $R_v$ to denote the realization status of edges connected to $v$ in graph $\Gamma_p$. That is, $R_v$ reveals which edges connected to $v$ are realized and which ones are not realized, but crucially does not reveal any information about the realization of the rest of the edges. Using this information, for any edge $e = (v, u)$ with $v \in A, u \in B$, we define
	$
		p_e := \Pr[e \in M_\mc{A} \mid R_v].
	$
	That is, in defining $p_e$ for any edge $e = (v, u)$ we only need to know which edges connected to $v$ are realized in $\Gamma_p$, and are essentially unaware of realization of the rest of the edges in $\Gamma_p$. Similarly, for any vertex $v \in A$ we denote
	$
		p_v := \sum_{e \ni v} p_e = \Pr[v \in M_\mc{A} \mid R_v].
	$
	
	Observe that for any $v \in A$, $0 \leq p_v \leq 1$ since $p_v$ corresponds to the probability that $v$ is matched in $M_\mc{A}$ conditioned on the realization of its edges. Importantly, however, this does not hold for vertices of the other partition, and $\sum_{e \ni u} p_e$ may, in fact, exceed one for $u \in B$.
	
	Having defined $p_e$ and $p_v$ as above, the claimed Algorithm~\ref{alg:B} in Lemma~\ref{lem:vertexindependent} can be formalized:
	
\begin{tboxalg2e}{The algorithm for constructing matching $M_\mc{B}$ on  $\Gamma_p$.}
\begin{algorithm}[H]
	\DontPrintSemicolon
	\SetAlgoSkip{bigskip}
	\SetAlgoInsideSkip{}
	
	\label{alg:B}

	\For{any vertex $v\in A$}{
		Vertex $v$ either proposes to exactly one neighbor $u$, or does not propose at all. This is decided by a random procedure, where each neighbor $u$ has probability exactly $p_{(u, v)}$ of being proposed to by $v$, and there is a probability $1-p_v$ that $v$ does not propose.
	}

	\For{any vertex $u\in B$}{
	Among the vertices who sent proposals to vertex $u$, if any, $u$ chooses an arbitrary winner $v$ and we add $(u, v)$ to matching $M'_{\mc{B}}$.
	}
	
	\Return $M_\mc{B}$.
\end{algorithm}
\end{tboxalg2e}

We now prove the properties of Lemma~\ref{lem:vertexindependent}.

\smparagraph{Property~\ref{viprop:expsize}.}  Fix an arbitrary vertex $u \in B$, and let $\{v_1, \ldots, v_d\}$ be its neighbors in $A$. Let $Y_i$ be the indicator random variable for the event that $v_i$ proposes to $u$. First, observe that 
$$
	\E[Y_i] = \E[p_{(v_i, u)}] = \E[\Pr[(u, v_i) \in \mc{A}(\Gamma_p) \mid R_v] ] = \Pr[(u, v_i) \in \mc{A}(\Gamma_p) ].
$$
As a result,
\begin{equation}\label{eq:gr1238912388}
	\sum_{i} \E[Y_i] = \sum_{i} \Pr[(u, v_i) \in \mc{A}(\Gamma_p)] = \Pr[u \in \mc{A}(\Gamma_p)].
\end{equation}
Moreover, observe that $R_{v_1}, \ldots, R_{v_d}$ are mutually independent since the edges of $v_1, \ldots, v_d$ are all disjoint. Thus, the proposals of $v_1, \ldots, v_d$ are also mutually independent and so are random variables $Y_{1}, \ldots, Y_{d}$. Now observe that $u$ remains unmatched in $M_\mc{B}$ if and only if none of its neighbors proposes to it; combined with the independence discussed, this implies
$$
	\Pr[u \in M_\mc{B}] = 1 - \prod_{i} \Pr[Y_{i} = 0] = 1 - \prod_i (1-\E[Y_i]).
$$
Fixing the sum $\sum_i \E[Y_i]$ to be $S$, $1 - \prod_i (1-\E[Y_i])$ is minimized for $\E[Y_1] = \ldots = \E[Y_d] = \frac{S}{d}$. Thus
\begin{equation}\label{eq:ddgrl1283192837921}
\Pr[u \in M_\mc{B}] \geq 1 - \prod_{i=1}^d \left(1 - \frac{S}{d} \right) = 1 - \left(1-\frac{S}{d}\right)^d \geq \left( 1 - \sfrac{1}{e}\right) S \stackrel{(\ref{eq:gr1238912388})}{=} \left(1-\sfrac{1}{e}\right) \Pr[u \in \mc{A}(\Gamma_p)].
\end{equation}
Now, by linearity of expectation over all choices of $u \in B$, we get
$$
\E|M_\mc{B}| = \sum_{u \in B} \Pr[u \in M_\mc{B}] \stackrel{(\ref{eq:ddgrl1283192837921})}{\geq} \left( 1- \sfrac{1}{e} \right) \sum_{u \in B} \Pr[u \in \mc{A}(\Gamma_p)] = \left( 1- \sfrac{1}{e} \right) \E|M_\mc{A}|,
$$
completing the proof.

\smparagraph{Property~\ref{viprop:Asideprob}.} If a vertex $v \in A$ does not propose, it remains unmatched in $M_\mc{B}$. Thus:
$$
\Pr[v \in M_\mc{B}] \leq \Pr[v \text{ proposes}] = \E[p_v] = \Pr[v \in M_\mc{A}].
$$

\smparagraph{Property~\ref{viprop:Bsideprob}.} For any vertex $u\in B$,
$$
	\Pr[u \in M_\mc{B}] \leq \Pr[u \text{ receives a proposal}] \leq \sum_{v \in N(u)} \E[p_{(v, u)}] = \Pr[u \in M_\mc{A}].
$$

\smparagraph{Property~\ref{viprop:independence}.} Observe that in order to determine $u \in M_\mc{B}$, it suffices to reveal the proposals of its neighbors in $A$. If one of them proposes to $u$ then $u$ is matched and otherwise it is not. Now since $(u, v) \not \in E$ as assumed by Property~\ref{viprop:independence}, the proposals of $v$ remains completely unknown and independent of $u \in M_\mc{B}$.
\end{proof}

\subsection{Proving Lemma~\ref{lem:appvil} via the Vertex-Independent Lemma~\ref{lem:vertexindependent}}\label{sec:proofofappvil}

To prove Lemma~\ref{lem:appvil}, we prove two different bounds on the expected size of $\mu(G_p)$. The first one is easy to prove and is as follows: 

\begin{claim}\label{cl:matchingQ}
	$\E[\mu(G_p)] \geq \b{q}(Q).$
\end{claim}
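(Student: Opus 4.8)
The goal is to show $\E[\mu(G_p)] \geq \b{q}(Q)$, i.e., that the realization $G_p$ has a matching whose expected size is at least the total $\b{q}$-weight restricted to the queried edges $Q$. The key observation is that Lemma~\ref{lem:appvil} tells us exactly what $q_e$ means on $Q$: for every $e \in Q$, we have $q_e = \Pr_{Q_p, \mc{M}}[e \in \mc{M}(Q_p)]$. So $\b{q}(Q) = \sum_{e \in Q} \Pr[e \in \mc{M}(Q_p)] = \E|\mc{M}(Q_p)|$ by linearity of expectation.

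From here the plan is immediate. Since $\mc{M}(Q_p)$ is a matching contained in $Q_p \subseteq E_p$, it is in particular a matching of $G_p = (V, E_p)$, and hence $|\mc{M}(Q_p)| \leq \mu(G_p)$ pointwise (for every realization of $Q_p$ and every internal coin flip of $\mc{M}$). Taking expectations — over the randomness of $Q_p$ (equivalently, the randomness of $G_p$ restricted to $Q$) and the randomness of $\mc{M}$ — gives $\E|\mc{M}(Q_p)| \leq \E[\mu(G_p)]$. One should note that $\mu(G_p)$ depends only on $E_p$, and $Q_p$ is determined by $E_p$, so the expectation is well-defined with respect to the common probability space where each edge of $E$ is included independently with probability $p$; the pointwise inequality survives taking expectation in the usual way. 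Chaining the two facts yields $\b{q}(Q) = \E|\mc{M}(Q_p)| \leq \E[\mu(G_p)]$.

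I don't anticipate any real obstacle here — this is the "easy" bound as the text itself advertises. The only thing to be slightly careful about is making the probability spaces line up: the expectation defining $q_e$ is over $(Q_p, \mc{M})$, while $\E[\mu(G_p)]$ is over $E_p$; these agree because $Q \subseteq E$ and $Q_p$ is just the restriction of the independent edge-sampling to $Q$, so one can view everything as living on the product space of independent $p$-coins on all of $E$ together with $\mc{M}$'s coins, with $\mu(G_p)$ being independent of $\mc{M}$'s coins. With that remark in place the proof is a two-line computation:
\begin{equation*}
\b{q}(Q) = \sum_{e \in Q} q_e = \sum_{e \in Q} \Pr[e \in \mc{M}(Q_p)] = \E|\mc{M}(Q_p)| \leq \E[\mu(G_p)],
\end{equation*}
where the last inequality holds because $\mc{M}(Q_p)$ is always a matching of the subgraph $Q_p \subseteq G_p$.
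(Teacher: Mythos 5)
Your proof is correct and matches the paper's argument exactly: both use linearity of expectation to identify $\b{q}(Q)$ with $\E|\mc{M}(Q_p)|$ and then observe that $\mc{M}(Q_p)$ is a matching of $Q_p \subseteq G_p$, so its expected size lower-bounds $\E[\mu(G_p)]$. The extra remark about aligning the probability spaces is a harmless (and reasonable) elaboration of what the paper leaves implicit.
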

\begin{proof}
	As assumed in Lemma~\ref{lem:appvil}, $q_e = \Pr[e \in \mc{M}(Q_p)]$ for any $e \in Q$. By linearity of expectation, this implies $
	\E|\mc{M}(Q_p)| = \b{q}(Q)$. Since $Q_p \subseteq E_p$, any edge in $\mc{M}(Q_p)$ appears in $G_p$ and thus the same lower bound also holds for $\E[\mu(G_p)]$ completing the proof.
\end{proof}

The second bound is the main part of the proof, and reads as follows:

\begin{claim}\label{cl:matchingQandS}
	$\E[\mu(G_p)] \geq (1-6\epsilon) \Big( \frac{e-1}{e} \cdot \b{q}(Q) + \b{q}(S) \Big).$
\end{claim}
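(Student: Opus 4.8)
The plan is to produce, for the random bipartite graph $G_p$, a random fractional matching $\b{f}$ supported on $E_p$ with $\E|\b{f}| \geq (1-6\epsilon)\bigl(\tfrac{e-1}{e}\b{q}(Q)+\b{q}(S)\bigr)$; since $G_p$ is bipartite, $\mu(G_p)$ equals its maximum fractional matching value (via K\"onig, Proposition~\ref{prop:konig}), so $\E[\mu(G_p)]\geq\E|\b{f}|$ gives the claim. I would build $\b{f}$ from two pieces living on $Q_p$ and $S_p$ respectively: an integral matching $M_\mc{B}$ of $Q_p$ obtained from the vertex-independent matching machinery, and a ``scaled-and-rounded'' copy of $\tfrac1p\b{q}$ restricted to the realized $S$-edges.

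For the $Q$-piece I would fix a bipartition $V=A\cup B$ and apply Lemma~\ref{lem:vertexindependent} to $\Gamma:=(A,B,Q)$ with $\mc{A}:=\mc{M}$, so that $M_\mc{A}=\mc{M}(Q_p)$ and, by hypothesis~1 of Lemma~\ref{lem:appvil}, $\Pr[e\in M_\mc{A}]=q_e$ for $e\in Q$ and $\E|M_\mc{A}|=\b{q}(Q)$. Property~\ref{viprop:expsize} then yields $\E|M_\mc{B}|\geq(1-\tfrac1e)\b{q}(Q)=\tfrac{e-1}{e}\b{q}(Q)$, which is exactly the first term. The two facts I will really need about $M_\mc{B}$ are: (a) by Property~\ref{viprop:Asideprob}, each $v\in A$ ``proposes'' with probability $q_v^{(Q)}:=\sum_{e\ni v,\,e\in Q}q_e$, and these events are mutually independent over $v\in A$ since the proposal of $v$ is a function of the realization of $v$'s own $Q$-edges only; and (b) by Property~\ref{viprop:independence}, for every $S$-edge $(u,v)$ --- which is non-adjacent in $\Gamma$ because $(u,v)\notin Q$ --- the event ``$v$ proposes'' is independent of the event ``$u\in M_\mc{B}$''. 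Thus, declaring $v\in A$ \emph{free} iff it does not propose and $u\in B$ \emph{free} iff $u\notin M_\mc{B}$, the two endpoint-freeness events of any fixed $S$-edge are independent, with $\Pr[v\text{ free}]=1-q_v^{(Q)}$ and $\Pr[u\text{ free}]\geq1-q_u^{(Q)}$ (Property~\ref{viprop:Bsideprob}).

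For the $S$-piece I would set $f_e:=\mathbbm{1}[e\in M_\mc{B}]$ on $Q$ and, on each realized $S$-edge $e=(u,v)$, put an amount proportional to $\tfrac{q_e}{p}$ whenever its endpoints are free, scaled down by $(1-\epsilon)$ and capped by a greedy left-to-right rule that never lets a vertex load exceed $1$; this makes $\b{f}$ a legal fractional matching of $G_p$ by construction. The point of hypothesis~2 of Lemma~\ref{lem:appvil} is that each increment $\tfrac{q_e}{p}\leq\epsilon^5$ is tiny, so --- using that $S$-edges are realized independently and that the two freeness events of an $S$-edge are independent (part (b) above) --- Chebyshev's inequality shows that at any given vertex the expected $S$-load is at most $1$ and its variance is $O(\epsilon^5)$, hence the cap is hit with probability $O(\epsilon)$ and the greedy rule discards only an $\epsilon$-fraction of the $S$-mass in expectation. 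Summing the per-edge expectations $\E[f_e]$ and combining with the $Q$-piece then gives the bound. I expect the genuinely delicate step to be bounding the \emph{interaction loss} at vertices used by both pieces so that it is $O(\epsilon)(\b{q}(Q)+\b{q}(S))$ rather than a constant fraction of $\b{q}(Q)$: the structural lever here is that $\b{q}$ is a fractional matching, so $q_v^{(Q)}+q_v^{(S)}\leq1$, meaning any vertex that $M_\mc{B}$ occupies with non-trivial probability carries correspondingly little $S$-mass, so discarding (or mildly up-scaling to compensate) the $S$-edges at such vertices is cheap; quantifying this trade-off carefully --- splitting vertices according to the size of $q_v^{(Q)}$ --- is where the bulk of the work, and the real use of hypothesis~2, will go. Once $\b{f}$ is assembled, taking expectations and invoking $\mu(G_p)\geq|\b{f}|$ completes the proof, the factor $(1-6\epsilon)$ absorbing the accumulated $\epsilon$-losses.
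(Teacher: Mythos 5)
Your overall architecture is the same as the paper's: an integral matching $M_\mc{B}$ on $Q_p$ from the vertex-independent lemma contributes the $\frac{e-1}{e}\b{q}(Q)$ term, fractional mass is placed on realized $S$-edges whose endpoints are ``free'' (left endpoint did not propose, right endpoint unmatched in $M_\mc{B}$), independence of the two freeness events for an $S$-edge comes from Property~\ref{viprop:independence}, and Chebyshev controls the truncation loss. However, there is a genuine gap in your $S$-piece. You assign (roughly) $q_e/p$ to a realized $S$-edge \emph{on the event that both endpoints are free}, so its expected contribution is $q_e\cdot\Pr[v\text{ free}]\cdot\Pr[u\text{ free}]\approx q_e(1-q_v^{(Q)})(1-q_u^{(Q)})$, not $q_e$. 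Summed over $S$, the shortfall is on the order of $\sum_v q_v^{(Q)}q_v^{(S)}$, which is \emph{not} $O(\epsilon)|\b{q}|$: with $q_v^{(Q)}=q_v^{(S)}=\tfrac12$ at every vertex you lose essentially all of $\b{q}(S)$, and your ``structural lever'' $q_v^{(Q)}+q_v^{(S)}\leq 1$ does not rescue this, since a vertex can carry constant mass on both sides simultaneously. The missing idea is the paper's normalization: set $y_e = q_e/\bigl(p\,\Pr[\notvprop]\,\Pr[u\notin M_\mc{B}]\bigr)$ on the event that $e$ is realized and both endpoints are free, so that (by the three-way independence) $\E[y_e]=q_e$ \emph{exactly}, with no interaction loss at all. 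Feasibility is then not a triviality but the heart of the matter, recovered from the identity
$$\E[y_u\mid u\notin M_\mc{B}] \;=\; \frac{q_u^{(S)}}{\Pr[u\notin M_\mc{B}]} \;\leq\; \frac{q_u^{(S)}}{1-q_u^{(Q)}} \;\leq\; q_u \;\leq\; 1,$$
which is exactly where the fractional-matching constraint on $\b{q}$ is used (Claims~\ref{cl:u-frac-exp} and \ref{cl:lllcccgg-h1230}).

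Two secondary points. First, once you normalize, the per-edge increment becomes $q_e/(p\Pr[\notvprop]\Pr[u\notin M_\mc{B}])$, so your Chebyshev step needs a \emph{lower} bound on the freeness probabilities; the paper secures this by down-sampling $\mc{M}(Q_p)$ by a factor $(1-\epsilon)$ before feeding it to Lemma~\ref{lem:vertexindependent}, guaranteeing $\Pr[\text{free}]\geq\epsilon$ and hence $\tau\leq q_e/(p\epsilon^2)\leq\epsilon^3$ via hypothesis~2. You apply the lemma to $\mc{M}$ directly, so this bound is unavailable. Second, capping the load at $1$ when its conditional expectation can be as large as $q_u\leq 1$ gives no Chebyshev gap; the paper instead allows loads up to $1+\epsilon$ in the intermediate vector $\b{y}$ and then rescales by $(1+\epsilon)^{-1}$, which is the device you need in place of your greedy cap.
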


Let us first see how the combination of Claims~\ref{cl:matchingQ} and \ref{cl:matchingQandS} proves Lemma~\ref{lem:appvil}. Observe that since each edge of $G$ is either in $Q$ or $S$, $|\b{q}| = \b{q}(Q) + \b{q}(S)$. Now consider two cases:

\smparagraph{Case 1 --- $\b{q}(Q) \geq \frac{e}{e+1} |\b{q}|$:} In this case, Claim~\ref{cl:matchingQ} already implies Lemma~\ref{lem:appvil}.

\smparagraph{Case 2 --- $\b{q}(Q) < \frac{e}{e+1} |\b{q}|$:} In this case, by Claim~\ref{cl:matchingQandS}, we have
$$
\E[\mu(G_p)] \geq (1-6\epsilon) \left(\frac{e-1}{e} \cdot \b{q}(Q) + \b{q}(S) \right).
$$
The right hand side is minimized, when as much of the weight of $\b{q}$ comes from $Q$ instead of $S$. But, by the assumption of Case 2, $\b{q}(Q) < \frac{e}{e+1} |\b{q}|$. Thus:
$$
\E_{G_p}[\mu(G_p)] \geq (1-6\epsilon) \left( \frac{e-1}{e} \cdot \frac{e}{e+1} \cdot |\b{q}| + \frac{1}{e+1}\cdot |\b{q}|\right) = (1-6\epsilon) \frac{e}{e+1} |\b{q}|,
$$
which is the desired bound of Lemma~\ref{lem:appvil}.

\subsubsection{Proof of Claim~\ref{cl:matchingQandS}}

In order to argue that $G_p$ has a matching of our desired expected size, we construct a fractional matching $\b{x}$ on it. Since the graph is bipartite, any fractional matching can be turned into an integral matching of at least the same size. As a result, it suffices to argue that fractional matching $\b{x}$ has our desired size in expectation. 

To construct fractional matching $\b{x}$, we first use the vertex-independent matching Algorithm~\ref{alg:B} of Lemma~\ref{lem:vertexindependent} to construct an integral matching $M_\mc{B}$ on $Q_p$ (the parameters that we feed into Lemma~\ref{lem:vertexindependent} are formalized below). We then use $M_\mc{B}$ to define $\b{y}: E \to \mathbb{R}_+$ which will be very close to our final fractional matching $\b{x}$, except that for a small fraction of vertices, the value of $y_v := \sum_{e \ni v} y_e$ may exceed one due to deviations in our random process. We then scale down $\b{y}$ to obtain our fractional matching $\b{x}$ and finally argue that $\b{x}$ is large enough.

\newcommand{\vprop}[0]{\ensuremath{v \textsf{ prop}}}
\newcommand{\notvprop}[0]{\ensuremath{\overline{v \textsf{ prop}}}}
\newcommand{\notxprop}[1]{\ensuremath{\overline{#1 \textsf{ prop}}}}

For brevity, we use $(\vprop)$ and $(\notvprop)$ to indicate respectively the events that a vertex $v \in A$ proposes and does not propose in Algorithm~\ref{alg:B} for constructing $M_\mc{B}$.

The formal definition of $\b{y}$, given matching $M_\mc{B}$ is given below:
\begin{equation}\label{eq:def-y}
	y_e \gets \begin{cases}
		1 & \text{if $e \in M_\mc{B}$ (this implies $e \in Q_p$)}\\
		q_e /(p \Pr[\notvprop] \Pr[u \not\in M_\mc{B}]) & \text{if $e \in S_p$, $u \not\in M_\mc{B}$, and $\notvprop$}\\
		0 & \text{otherwise,}
	\end{cases}
	\qquad \parbox{3.2cm}{
	$\forall e=(u, v) \in E$,\\
	$v \in A$, $u\in B$.
	}
\end{equation}

As discussed, $\b{y}$ is not necessarily a valid fractional matching since for some vertices $v$, $y_v$ may be larger than 1 due to some low probability (but still likely to occur) events. To resolve this, we define the final, always valid, fractional matching $\b{x}$ as follows:
\begin{equation}
	x_e \gets \begin{cases}
		y_e / (1+\epsilon) & \text{if $y_v \leq 1+\epsilon$ and $y_u \leq 1 + \epsilon$,}\\
		0 & \text{otherwise,}
	\end{cases}
	\qquad \forall e = (u, v) \in E.
\end{equation}


\smparagraph{The Matching $M_\mc{B}$:} As discussed, we use Lemma~\ref{lem:vertexindependent} to construct matching $M_\mc{B}$. In order to use this lemma, we have to specify: $(i)$ what graph we run the matching algorithm of this lemma on, and $(ii)$ what algorithm $\mc{A}$ we feed into the lemma. For the first question, the graph $\Gamma$ on which we run the lemma is simply the subgraph $Q$ of $G$, and we let $\Gamma_p$ correspond to the realized edges $Q_p$. This ensures that the reported matching $M_\mc{B}$ of Lemma~\ref{lem:vertexindependent} satisfies $M_\mc{B} \subseteq Q_p$. For the second question, we first define an auxiliary matching $M'_\mc{A} := \mc{M}(Q_p)$ and let $M_\mc{A}$ include each edge of $M'_\mc{A}$ independently with probability $1-\epsilon$.  This  down sampling step is rather technical and is there to just ensure $\Pr[w \in M_\mc{A}] \leq 1-\epsilon$ for any vertex $w$. 

Defining $M_\mc{A}$ this way, as a corollary of Lemma~\ref{lem:vertexindependent} we get:
	
\begin{corollary}\label{cor:dchuaoe18279312873}
	The matching $M_\mc{B}$ constructed as above on subgraph $Q_p$ satisfies:
	\begin{enumerate}[label=$(\roman*)$]
\item $\E|M_{\mc{B}}| \geq (1-\frac{1}{e}) \E|M_\mc{A}| = (1-\epsilon)(1-\frac{1}{e}) \cdot \b{q}(Q)$.
\item For any vertex $v \in A$, $\Pr[v \in M_\mc{B}] \leq \Pr[\vprop] = \Pr[v \in M_\mc{A}] = (1-\epsilon) \Pr[v \in \mc{M}(Q_p)]$.
\item For any vertex $u \in B$, $\Pr[u \in M_{\mc{B}}] \leq \Pr[u \in M_\mc{A}] = (1-\epsilon) \Pr[u \in \mc{M}(Q_p)]$.
\item For any $v \in A, u \in B$ with $(u, v) \in S$, events $(\vprop)$ and $(u \in M_\mc{B})$ are independent.
\end{enumerate}
\end{corollary}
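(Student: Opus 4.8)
Since the corollary is simply an instantiation of the Bipartite Vertex-Independent Matching Lemma, I would dispatch it quickly and then devote the real work to Claim~\ref{cl:matchingQandS}, which it serves. Apply Lemma~\ref{lem:vertexindependent} with $\Gamma := Q$ (so $\Gamma_p = Q_p$), the bipartition $A,B$ inherited from $G$, and input algorithm $\mc{A}$ equal to the two-stage procedure fixed above: compute $M'_\mc{A} = \mc{M}(Q_p)$, then retain each of its edges independently with probability $1-\epsilon$ to obtain $M_\mc{A}$. Items $(i)$--$(iv)$ of the corollary are then items $(i)$--$(iv)$ of the lemma after three elementary rewritings: $\E|M_\mc{A}| = (1-\epsilon)\E|M'_\mc{A}| = (1-\epsilon)\sum_{e\in Q} q_e = (1-\epsilon)\b{q}(Q)$ by linearity and Assumption~1 of Lemma~\ref{lem:appvil}; $\Pr[w\in M_\mc{A}] = (1-\epsilon)\Pr[w\in\mc{M}(Q_p)]$ for every vertex $w$, since $w$ is matched in $M_\mc{A}$ iff its unique $M'_\mc{A}$-edge is retained; and $(u,v)\in S$ forces $(u,v)\notin Q$, i.e. $u,v$ are non-adjacent in $\Gamma=Q$, so part $(iv)$ of the lemma applies verbatim. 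The only design choice, the $(1-\epsilon)$-downsampling, exists solely to force $\Pr[w\in M_\mc{A}]\le 1-\epsilon$ for all $w$; via parts $(ii)$--$(iii)$ this makes $\Pr[\notvprop]\ge\epsilon$ for every $v\in A$ and $\Pr[u\notin M_\mc{B}]\ge\epsilon$ for every $u\in B$, which keeps Eq.~(\ref{eq:def-y}) well-defined and, using $q_e\le\epsilon^5 p$ on $S$, keeps every $S$-coordinate tiny: $y_e\le q_e/(p\cdot\epsilon\cdot\epsilon)\le\epsilon^3$.

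Using the corollary, here is the plan for Claim~\ref{cl:matchingQandS}. It suffices to build a fractional matching $\b{x}$ of $G_p$ with $\E|\b{x}|\ge(1-6\epsilon)\big(\tfrac{e-1}{e}\b{q}(Q)+\b{q}(S)\big)$, since $G_p$ is bipartite and so $\mu(G_p)\ge|\b{x}|$ pointwise. The first step is to estimate $\E|\b{y}|$ by linearity over the two edge classes. On $Q$, $y_e=\mathbbm{1}[e\in M_\mc{B}]$, so $\sum_{e\in Q}\E[y_e]=\E|M_\mc{B}|\ge(1-\epsilon)(1-\tfrac1e)\b{q}(Q)$ by corollary part $(i)$. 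On $S$, the three indicators defining $y_e$ for $e=(u,v)$ factor: $\mathbbm{1}[e\in S_p]$ is independent of $Q_p$ and of $\mc{B}$'s internal coins (as $e\notin Q$), while $\mathbbm{1}[u\notin M_\mc{B}]$ and $\mathbbm{1}[\notvprop]$ are independent by corollary part $(iv)$; the normalization in Eq.~(\ref{eq:def-y}) was chosen precisely so that this gives $\E[y_e]=q_e$, hence $\sum_{e\in S}\E[y_e]=\b{q}(S)$. Therefore $\E|\b{y}|\ge(1-\epsilon)\big(\tfrac{e-1}{e}\b{q}(Q)+\b{q}(S)\big)$.

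The second step passes from $\b{y}$ (not always a valid fractional matching) to $\b{x}$, which zeroes out every edge at an ``overloaded'' vertex $w$ (one with $y_w>1+\epsilon$) and divides the rest by $1+\epsilon$; the loss is controlled once I show $\sum_w\E\big[y_w\mathbbm{1}[y_w>1+\epsilon]\big]\le 4\epsilon\,\E|\b{y}|$, since then $\E|\b{x}|\ge\tfrac{1-4\epsilon}{1+\epsilon}\E|\b{y}|\ge(1-5\epsilon)\E|\b{y}|$, which combined with the previous step gives the claim. For $v\in A$ I would condition on whether $v$ proposes: if $v$ proposes, all $S$-edges at $v$ vanish and $y_v=\mathbbm{1}[v\in M_\mc{B}]\in\{0,1\}$, so $v$ is never overloaded; if $v$ does not propose, writing $q^Q_v,q^S_v$ for the $\b{q}$-mass of $Q$- and $S$-edges at $v$, corollary part $(iv)$ gives $\E[y_v\mid\notvprop]=q^S_v/\Pr[\notvprop]$, and since $\Pr[\notvprop]=1-(1-\epsilon)q^Q_v\ge 1-q^Q_v\ge q^S_v$ this conditional mean is $\le 1$. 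The symmetric split for $u\in B$ uses ``$u\in M_\mc{B}$'' versus not. A Chebyshev-type estimate then bounds the bad mass in each case by the conditional variance, and with $\Var[y_v\mid\notvprop]\le\epsilon^3\E[y_v\mid\notvprop]$ one gets $\E\big[y_v\mathbbm{1}[y_v>1+\epsilon]\big]\le 2\epsilon\,\E[y_v]$, and likewise on the $B$-side; summing over all vertices and using $\sum_{v\in A}y_v=\sum_{u\in B}y_u=|\b{y}|$ yields the required bound.

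The main obstacle is the conditional-variance bound $\Var[y_v\mid\notvprop]\le\epsilon^3\E[y_v\mid\notvprop]$ (and its $B$-side analogue), and this is where the structure of Algorithm~\ref{alg:B} is used. Each summand of $y_v$ given $\notvprop$ is at most $\epsilon^3$, so $\sum_e\Var[y_e\mid\notvprop]\le\epsilon^3\E[y_v\mid\notvprop]$, and it remains to argue the cross-covariances are non-positive. On the $B$-side this is outright independence: given ``$u\notin M_\mc{B}$'', the contributions of distinct $S$-edges $(u,v'),(u,v'')$ depend on disjoint edge-realizations and on the proposals of $v',v''$, which are independent of $u$'s status since $v',v''$ are non-adjacent to $u$ in $Q$. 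On the $A$-side the $S$-edges at $v$ land on distinct $u,u'\in B$, and the only coupling is through ``$u\notin M_\mc{B}$'' and ``$u'\notin M_\mc{B}$''; these are negatively correlated because any single $A$-vertex proposes to at most one of $u,u'$ (so its two non-proposal events satisfy $1-a-b\le(1-a)(1-b)$) while distinct $A$-vertices act independently, giving $\Pr[u\notin M_\mc{B},u'\notin M_\mc{B}]\le\Pr[u\notin M_\mc{B}]\Pr[u'\notin M_\mc{B}]$. Feeding this back, together with Claim~\ref{cl:matchingQ} and the two-case argument already set up before the statement, completes the proof of Claim~\ref{cl:matchingQandS}.
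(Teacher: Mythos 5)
Your proof of the corollary is correct and is exactly the paper's: instantiate Lemma~\ref{lem:vertexindependent} on $\Gamma = Q$ with the downsampled $M_\mc{A}$, then translate each item via $\E|M_\mc{A}|=(1-\epsilon)\b{q}(Q)$, $\Pr[w\in M_\mc{A}]=(1-\epsilon)\Pr[w\in\mc{M}(Q_p)]$, and the observation that $(u,v)\in S$ makes $u,v$ non-adjacent in $Q$ so Property~\ref{viprop:independence} applies. (Your accompanying plan for Claim~\ref{cl:matchingQandS}, including the negative-correlation handling of the $A$-side covariances, also tracks the paper's argument.)
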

\begin{proof}
	The first three are simply followed by the properties of Lemma~\ref{lem:vertexindependent} combined with the definition of $M_\mc{A}$ above. The last property holds since $(u, v) \in S$ implies $(u, v) \not\in Q$ (since each there are no parallel edges and each edge belongs to exactly one of $Q$ and $S$) which combined with Property~\ref{viprop:independence} of Lemma~\ref{lem:vertexindependent} implies the stated independence.
\end{proof}

Having defined $M_\mc{B}$, the definitions of $\b{y}$ and $\b{x}$ are complete. We now turn to analyze their size. We first analyze the size of $\b{y}$ and then prove the size of $\b{x}$ is actually very close to that of $\b{y}$ by bounding the probability of deviations leading to vertices to have $y_v > 1+\epsilon$.

\smparagraph{The Expected Size of $\b{y}$:} We have $\E|\b{y}| = \sum_{e \in E} \E[y_e] = \sum_{e \in Q} \E[y_e] + \sum_{e \in S} \E[y_e]$. For edges $e \in Q$ we have $y_e = 1$ iff $e \in M_\mc{B}$, which combined with $M_\mc{B} \subseteq Q_p$ implies $\sum_{e \in Q} \E[y_e] = \E|M_\mc{B}|$. On the other hand, by definition of $\b{y}$ on edges in $S$, we have
\begin{flalign}
	\nonumber \E|\b{y}| &= \E|M_\mc{B}| + \sum_{e \in S} \Pr[e \in S_p, u \not\in M_\mc{B}, \notvprop] \cdot \frac{q_e}{p \Pr[u \not\in M_\mc{B}] \Pr[\notvprop]}.
\end{flalign}
For any $e \in S$, $e \in S_p$ iff $e$ is realized which is independent of how matching $M_\mc{B}$ is constructed. This holds because in constructing matching $M_\mc{B}$ we do not look at the realized edges $S_p$ of $S$. On the other hand, by Corrolary~\ref{cor:dchuaoe18279312873} Property $(iv)$, $u \in M_\mc{B}$ and $\notvprop$ are also independent, hence
\begin{flalign}
	\nonumber \E|\b{y}| &= \E|M_\mc{B}| + \sum_{e \in S} q_e\\
	\nonumber &\geq (1-\epsilon)(1-\sfrac{1}{e}) \cdot \b{q}(Q) + \sum_{e \in S} q_e \tag{By Corollary~\ref{cor:dchuaoe18279312873} Property $(i)$}\\
	&\geq (1-\epsilon) \Big( (1-\sfrac{1}{e}) \cdot \b{q}(Q) + \b{q}(S) \Big).\label{eq:ggggf1cgf192837}
\end{flalign}

Note that $\b{y}$, as proved above, is as large as the lower bound required by Claim~\ref{cl:matchingQandS}. However, we want to show that $\E|\b{x}|$ is also this large (up to $1-\Theta(\epsilon)$ factors). This is what we prove next.

\smparagraph{The Expected Size of $\b{x}$:} Take an edge $e=(v, u) \in E$. Observe from the construction of $\b{x}$ that either $x_e = y_e / (1+\epsilon)$, or $x_e = 0$ if $y_v > 1+\epsilon$ or $y_u > 1 + \epsilon$. This implies that
$$
	\E|\b{x}| = \sum_{e \in E} \E[x_e] = \sum_{e=(u, v) \in E} \E[y_e/(1+\epsilon) \mid y_v \leq 1+\epsilon, y_u \leq 1+\epsilon].
$$
Now an edge $e \in E$ either belongs to $Q$ or $S$. For edges $e = (u, v) \in Q$, it is not hard to see that we have $x_e = y_e / (1+\epsilon)$ with probability one, and thus 
\begin{equation}\label{eq:rrll198723}
\E[x_e] = \E[y_e]/(1+\epsilon)	 \qquad \forall e \in Q.
\end{equation}
The reason behind this, is that if $e =(u, v) \in Q$ then by construction of $\b{y}$, $y_e = \pmb{1}(e \in M_\mc{B})$. Further, if it occurs that $y_e = 1$, then both of its endpoints are matched in $M_\mc{B}$ and hence we have $y_{e'} = 0$ for all other edges $e'$ connected to either $u$ or $v$, again by construction of $\b{y}$. As a result, $y_e = 1$ implies $y_u = y_v = 1$ and thus $x_e = y_e / (1+\epsilon)$; otherwise $x_e = y_e = 0$, proving (\ref{eq:rrll198723}).

The situation, however, is more complicated for edges $e \in S$ as in this case $\b{y}$ on the endpoints of $e$ may exceed one. We show, however, that this occurs with a small enough probability that we can still argue that it does not affect the expected value of $x_e$ by much, and $\E[x_e] \approx \E[y_e]$.

\begin{claim}\label{cl:u-frac-exp}
	For any vertex $u \in B$, $\E[y_u \mid u \not\in M_\mc{B}] \leq q_u$.
\end{claim}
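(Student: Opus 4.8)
The plan is to work throughout conditioned on the event $u \not\in M_\mc{B}$. Write $y_u = \sum_{e \ni u} y_e$ and split the edges incident to $u$ into the set $Q_u$ of those lying in $Q$ and the set $S_u$ of those lying in $S$. Under $u \not\in M_\mc{B}$, every edge $e \in Q_u$ has $y_e = 0$ by the definition~(\ref{eq:def-y}) of $\b{y}$, since $y_e = 1$ for such an edge would force $e \in M_\mc{B}$ and hence $u \in M_\mc{B}$. So only the edges $e = (u,v) \in S_u$ can contribute, and for each of them, on the event $u \not\in M_\mc{B}$ we have $y_e = \frac{q_e}{p\,\Pr[\notvprop]\,\Pr[u \not\in M_\mc{B}]}\,\pmb{1}(e \in S_p)\,\pmb{1}(\notvprop)$ directly from~(\ref{eq:def-y}). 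Taking expectations, $\E[y_u \mid u \not\in M_\mc{B}] = \sum_{e = (u,v) \in S_u} \frac{q_e}{p\,\Pr[\notvprop]\,\Pr[u \not\in M_\mc{B}]}\,\Pr[e \in S_p,\ \notvprop \mid u \not\in M_\mc{B}]$.

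The heart of the argument is to evaluate $\Pr[e \in S_p,\ \notvprop \mid u \not\in M_\mc{B}]$ for $e = (u,v) \in S_u$, using two independence facts. First, $M_\mc{B}$ and all the proposals are functions of the realization $Q_p$ and the internal randomness of Algorithm~\ref{alg:B} only; since $e \notin Q$, the indicator $\pmb{1}(e \in S_p)$ is independent of the pair of events $\big((\notvprop),\, (u \in M_\mc{B})\big)$, so $\Pr[e \in S_p,\ \notvprop \mid u \not\in M_\mc{B}] = p \cdot \Pr[\notvprop \mid u \not\in M_\mc{B}]$. Second, because $(u,v) \in S$ means $u$ and $v$ are non-adjacent in $Q$, Corollary~\ref{cor:dchuaoe18279312873}$(iv)$ says $(\vprop)$ and $(u \in M_\mc{B})$ are independent, hence $\Pr[\notvprop \mid u \not\in M_\mc{B}] = \Pr[\notvprop]$. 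Substituting, the factors $p$ and $\Pr[\notvprop]$ cancel, leaving $\E[y_u \mid u \not\in M_\mc{B}] = \b{q}(S_u)/\Pr[u \not\in M_\mc{B}]$. (The denominators do not vanish: the down-sampling defining $M_\mc{A}$ gives $\Pr[w \in M_\mc{A}] \le 1-\epsilon$ for every vertex $w$, so $\Pr[\notvprop] \ge \epsilon$, and, as the next step shows, $\Pr[u \not\in M_\mc{B}] > 0$.)

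It remains to lower bound $\Pr[u \not\in M_\mc{B}]$. By Corollary~\ref{cor:dchuaoe18279312873}$(iii)$ and assumption~1 of Lemma~\ref{lem:appvil}, $\Pr[u \in M_\mc{B}] \le (1-\epsilon)\Pr[u \in \mc{M}(Q_p)] = (1-\epsilon)\b{q}(Q_u) \le \b{q}(Q_u)$, where the middle equality uses that $\mc{M}(Q_p)$ is a matching so the events $\{e \in \mc{M}(Q_p)\}$ over $e \in Q_u$ are disjoint. Hence $\Pr[u \not\in M_\mc{B}] \ge 1 - \b{q}(Q_u)$. Since $\b{q}$ is a fractional matching, $\b{q}(Q_u) + \b{q}(S_u) = q_u \le 1$, so writing $a := \b{q}(Q_u)$ and $b := \b{q}(S_u)$ we have $a, b \ge 0$ and $a+b \le 1$, which yields the elementary inequality $\frac{b}{1-a} \le a+b$ (equivalently $a(1-a-b) \ge 0$). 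Therefore $\E[y_u \mid u \not\in M_\mc{B}] = \frac{\b{q}(S_u)}{\Pr[u \not\in M_\mc{B}]} \le \frac{b}{1-a} \le a+b = q_u$, as claimed. The one place that needs care is ensuring both conditional independences are applied to exactly the right joint events so that the normalizing constants cancel precisely; everything else is bookkeeping.
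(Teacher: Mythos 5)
Your proof is correct and follows essentially the same route as the paper's: expand $\E[y_u \mid u \not\in M_\mc{B}]$ over the $S$-edges at $u$, cancel the normalizing factors $p$ and $\Pr[\notvprop]$ via the independence of realization and of the proposal event from $(u \in M_\mc{B})$ (Corollary~\ref{cor:dchuaoe18279312873}$(iv)$), lower bound $\Pr[u \not\in M_\mc{B}] \geq 1 - \b{q}(Q_u)$ from property $(iii)$, and finish with the elementary inequality $\frac{b}{1-a} \le a+b$ for $a+b = q_u \le 1$. You are in fact somewhat more explicit than the paper about why the $Q$-edges contribute zero under the conditioning and why the denominators are nonzero.
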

\begin{proof}
	Let $v_1, \ldots, v_d$ be the neighbors of $u$ in $S$, and let $e_i = (u, v_i)$. We have
	$$
		\E[y_u \mid u \not\in M_\mc{B}] = \sum_{i=1}^d \frac{q_{e_i}}{p \Pr[u \not\in M_\mc{B}] \Pr[\notxprop{v_i}]} \cdot \Pr[e_i \in S_p, u \not\in M_\mc{B}, \notxprop{v_i} \mid u \not\in M_\mc{B}].
	$$
	By independence of events in the probability (justified before), we can simplify this to
	$$
	\E[y_u \mid u \not\in M_\mc{B}] = \sum_{i=1}^d \frac{q_{e_i}}{\Pr[u \not\in M_\mc{B}]} = \frac{\sum_{i=1}^d q_{e_i}}{\Pr[u \not\in M_\mc{B}]}.
	$$
	Now let $q^S_u$ denote the sum of fractional value written on edges of $u$ in $S$ and let $q^Q_u$ denote the same but on edges of $v$ in $Q$. The nominator equals $q^S_u$. Also from Corollary~\ref{cor:dchuaoe18279312873} Property $(iii)$, thus $\Pr[u \in M_\mc{B}] \leq (1-\epsilon) q^Q_u$, we get $\Pr[u \not\in M_\mc{B}] \geq 1 - (1-\epsilon) q^Q_u \geq 1 - q^Q_u$. Combining them gives
	$$
		\E[y_u \mid u \not\in M_\mc{B}] \leq \frac{q^S_u}{1-q^Q_u} = \frac{q_u \cdot q^S_u}{q_u (1-q_u^Q)} = \frac{q_u \cdot q^S_u}{q_u-q_u \cdot q_u^Q}  \leq \frac{q_u \cdot q^S_u}{q_u - q^Q_u} = \frac{q_u \cdot q^S_u}{q^S_u} = q_u,
	$$
	as desired.
\end{proof}

\begin{claim}\label{cl:u-frac-concentration}
	For any vertex $u \in B$, $\E[y_u \mid y_u \leq 1+\epsilon, u \not\in M_\mc{B}] \geq \E[y_u] - \epsilon q_u.$
\end{claim}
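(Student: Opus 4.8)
The plan is to condition on the event $A := \{u \notin M_\mc{B}\}$ and, given $A$, realize $y_u$ as a sum of independent and uniformly small random variables, so that a second‑moment (Chebyshev) estimate shows the truncation $\{y_u \le 1+\epsilon\}$ removes only an $O(\epsilon q_u)$ amount of expected mass. The first step is the reduction to $A$. By the definition~(\ref{eq:def-y}), whenever $u \in M_\mc{B}$ every $S$-edge at $u$ gets $y_e = 0$ while exactly one $Q$-edge at $u$ gets $y_e = 1$, so $y_u = 1$; hence $\{y_u > 1+\epsilon\} \subseteq A$. Writing $\mu := \E[y_u\mid A]$ and $\delta := \E[y_u\,\mathbbm{1}(y_u>1+\epsilon)\mid A]$, a short calculation using $\{y_u>1+\epsilon\}\subseteq A$ gives
\[
\E[y_u \mid y_u\le 1+\epsilon,\, u\notin M_\mc{B}] \;=\; \frac{\mu-\delta}{1-\Pr[y_u>1+\epsilon\mid A]} \;\ge\; \mu-\delta,
\]
so (reading $\E[y_u]$ in the statement as this conditional mean $\E[y_u\mid u\notin M_\mc{B}]$, consistently with Claim~\ref{cl:u-frac-exp}) it suffices to show $\delta \le \epsilon q_u$, or even just $\delta = O(\epsilon q_u)$, which after rescaling $\epsilon$ is equally good.

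Next I would set $y_u = \sum_{i=1}^{d} Z_i$ on $A$, where $e_1=(u,v_1),\dots,e_d=(u,v_d)$ are the $S$-edges incident to $u$ and $Z_i := y_{e_i} = c_i\,\mathbbm{1}(e_i\in S_p)\,\mathbbm{1}(\notxprop{v_i})$ with $c_i = q_{e_i}/\big(p\,\Pr[\notxprop{v_i}]\,\Pr[A]\big)$. Two properties of the $Z_i$ drive the rest. First, uniform smallness: $q_{e_i}\le \epsilon^5 p$ by hypothesis, $\Pr[\notxprop{v_i}] = 1-\Pr[v_i\in M_\mc{A}]\ge\epsilon$ by Corollary~\ref{cor:dchuaoe18279312873}$(ii)$, and $\Pr[A]\ge\epsilon$ by Corollary~\ref{cor:dchuaoe18279312873}$(iii)$, so $c_i\le\epsilon^3$ and each $Z_i\in[0,\epsilon^3]$. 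Second, independence: the $v_i$ are distinct vertices with pairwise disjoint $Q$-edge-neighborhoods, so their proposal decisions are mutually independent; the indicators $\mathbbm{1}(e_i\in S_p)$ concern disjoint $S$-edges and play no role in building $M_\mc{B}$; and whether $u\in M_\mc{B}$ is a function only of the proposals of the $Q$-neighbors of $u$, which — since $G$ has no parallel edges — are vertices distinct from all the $v_i$. Hence $Z_1,\dots,Z_d$ are mutually independent and jointly independent of the conditioning event $A$.

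From here the computation is routine. By the independence just described, $\E[Z_i\mid A] = c_i\cdot p\,\Pr[\notxprop{v_i}] = q_{e_i}/\Pr[A]$, so $\mu = \big(\sum_i q_{e_i}\big)/\Pr[A]$, which is $\le q_u$ by Claim~\ref{cl:u-frac-exp}; and since $Z_i\in[0,\epsilon^3]$, $\Var[Z_i\mid A]\le\epsilon^3\,\E[Z_i\mid A]$, hence $\Var[y_u\mid A] = \sum_i\Var[Z_i\mid A]\le\epsilon^3\mu\le\epsilon^3 q_u$. Because $\mu\le q_u\le 1$ (as $\b{q}$ is a fractional matching), $1+\epsilon-\mu\ge\epsilon$, so Chebyshev's inequality gives $\Pr[y_u>t\mid A]\le\epsilon^3 q_u/(t-\mu)^2$ for every $t\ge 1+\epsilon$, in particular $\Pr[y_u>1+\epsilon\mid A]\le\epsilon q_u$. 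Integrating the tail then yields
\[
\delta \;\le\; (1+\epsilon)\Pr[y_u>1+\epsilon\mid A] + \int_{1+\epsilon}^{\infty}\Pr[y_u>t\mid A]\,dt \;\le\; 2\epsilon q_u + \frac{\epsilon^3 q_u}{1+\epsilon-\mu} \;\le\; 3\epsilon q_u,
\]
which is the desired bound up to the absolute constant in front of $\epsilon q_u$ (absorbed by shrinking $\epsilon$ by a constant factor in the application of Lemma~\ref{lem:appvil}).

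The step I expect to be the real obstacle is the independence bookkeeping in the second paragraph — carefully justifying that conditioning on $\{u\notin M_\mc{B}\}$ leaves both the realizations $\mathbbm{1}(e_i\in S_p)$ and the proposals of the neighbors $v_i$ untouched. This is precisely where the bipartiteness and the one-sided propose/accept structure of Algorithm~\ref{alg:B} are used: the event $\{u\notin M_\mc{B}\}$ only involves $u$'s $Q$-side, which is disjoint from the $S$-side data governing the $Z_i$. Once that is in place, the smallness estimate $c_i\le\epsilon^3$ and the Chebyshev/tail-integration calculation are entirely standard.
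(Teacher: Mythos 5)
Your proof is correct and follows essentially the same route as the paper's: condition on $u \notin M_\mc{B}$, write $y_u$ as a sum of independent, uniformly-$\epsilon^3$-bounded terms using the independence guarantees of Corollary~\ref{cor:dchuaoe18279312873}, bound the variance by $\epsilon^3 q_u$, and apply Chebyshev. Your explicit tail-integration to control $\E[y_u \mathbbm{1}(y_u > 1+\epsilon) \mid u \notin M_\mc{B}]$ (rather than just the tail probability) actually fills in a step the paper leaves implicit, and the resulting $O(\epsilon q_u)$ constant is harmless, as you note.
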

\begin{proof}
	As proved in Claim~\ref{cl:u-frac-exp}, $\E[y_u \mid u \not\in M_\mc{B}] \leq q_u$. We prove the desired inequality of the claim via a concentration bound on random variable $y'_u := (y_u \mid u \not\in M_\mc{B})$.
	
	Let $v_1, \ldots, v_d$ be the neighbors of $u$ in $S$, and let $e_i = (u, v_i)$. Let us for simplicity define random variable $y'_{e_i} := (y_{e_i} \mid u \not\in M_\mc{B})$. Since we have conditioned on $u \not\in M_\mc{B}$, we get by definition of $\b{y}$ that
	$
		y'_u = \sum_{i=1}^d y'_{e_i}.
	$ Now we argue that $y'_{e_1}, \ldots, y'_{e_d}$ are actually independent. To see this, observe that once we condition on $u \not\in \mc{B}$, the only random process that determines each $y'_{e_i}$ is whether edge $e_i$ is realized (i.e. $e_i \in S_p$) and if vertex $v_i$ proposes. Both events are independent of $u \not\in M_\mc{B}$ since, recall, to determine $u \not\in M_\mc{B}$ we only need to know the proposals of its neighbors in $Q$, and $v_1, \ldots, v_d$ are all neighbors of $u$ in $S$. 
	
	This independence allows us to prove a concentration bound on $y'_u$ and prove the claim. We do this via the second moment method. We have
	$$
		\Var[y'_u] = \sum_{i=1}^d \Var[y'_{e_i}] = \sum_{i=1}^d \E[(y'_{e_i})^2] - \E[y'_{e_i}]^2 \leq \sum_{i=1}^d \E[(y'_{e_i})^2]  \stackrel{\text{see below}}{\leq} \tau \cdot \E[y'_u]  \stackrel{\text{Claim~\ref{cl:u-frac-exp}}}{\leq} \tau \cdot q_u,
	$$
	where $\tau$ is the maximum possible outcome of $y'_{e_i}$ for any $i \in [d]$. 
	
	Plugging this into Chebyshev's inequality, we get
	$$
		\Pr[y'_u > \E[y'_u]+\delta] \leq \frac{\Var[y'_u]}{\epsilon^2} \leq \frac{\tau q_u}{\delta^2}.
	$$
	Finally, for each edge $e_i$, by construction of $\b{y}$, $y_{e_i} \leq q_{e_i}/(p\Pr[\notxprop{v_i}]\Pr[u \not\in M_\mc{B}]) \leq q_{e_i}/p\epsilon^2$ where the latter follows from Corollary~\ref{cor:dchuaoe18279312873}. Combined with $q_{e_i} \leq \epsilon^5 p$ by Lemma~\ref{lem:appvil} and $e_i \in S$, we get $\tau \leq \epsilon^3$. We thus get $\Pr[y'_u > 1+\epsilon] \leq \Pr[y'_u > \E[y'_u]+\epsilon] \leq \frac{\epsilon^3}{\epsilon^2} q_u$. This concentration of $y'_u$ implies $\E[y_u \mid y_u>1+\epsilon, u \not\in M_\mc{B}] \leq \epsilon q_u$ and thus the stated bound of the claim.
\end{proof}

Similarly, for each vertex $v \in A$ we can first bound the expected value of $y_v$ conditioned on $(\notvprop)$ and then prove this is concentrated, and overall argue $\b{x}$ remains close to $\b{y}$. Namely:

\begin{claim}\label{cl:yv-concentrate}
	For any $v \in A$, $\E[y_v \mid y_v \leq 1+\epsilon,\notvprop] \geq \E[y_v] - \epsilon q_v$.
\end{claim}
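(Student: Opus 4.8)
The plan is to reuse the two-step template of Claims~\ref{cl:u-frac-exp} and~\ref{cl:u-frac-concentration} with the two sides of the bipartition interchanged: in the first step I bound $\E[y_v\mid\notvprop]$ by $q_v$, and in the second step I show that $y'_v := (y_v\mid\notvprop)$ concentrates below $1+\epsilon$, so that replacing $\b{y}$ by $\b{x}$ (i.e.\ truncating at $1+\epsilon$) costs only $O(\epsilon)\cdot q_v$ in expectation, which is the content of the claim (transcribing the final deduction of Claim~\ref{cl:u-frac-concentration} verbatim).

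For the first step, note that conditioning on $(\notvprop)$ annihilates every edge of $v$ in $Q$: by the definition~(\ref{eq:def-y}) of $\b{y}$, a $Q$-edge $(u,v)$ can have $y_{(u,v)}\neq 0$ only if $(u,v)\in M_\mc{B}$, which forces $v$ to propose and contradicts $(\notvprop)$. Hence $y_v\mid\notvprop = \sum_{e=(u,v)\in S}(y_e\mid\notvprop)$. For a fixed $S$-edge $e=(u,v)$ I would evaluate $\E[y_e\mid\notvprop]$ from the $S$-edge case of~(\ref{eq:def-y}), using two facts: $(a)$ the realization event $\{e\in S_p\}$ is independent of $(\notvprop)$ and of the entire construction of $M_\mc{B}$, since neither $v$'s proposal nor the algorithm for $M_\mc{B}$ ever inspects realizations of $S$-edges; and $(b)$ $\{u\notin M_\mc{B}\}$ is independent of $(\notvprop)$ by Corollary~\ref{cor:dchuaoe18279312873}$(iv)$, because $(u,v)\in S$. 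These cancellations give $\E[y_e\mid\notvprop] = q_e/\Pr[\notvprop]$, hence $\E[y_v\mid\notvprop] = q^S_v/\Pr[\notvprop]$ where $q^S_v := \sum_{e=(u,v)\in S} q_e$. Finally, with $q^Q_v := \sum_{e=(u,v)\in Q} q_e = \Pr[v\in\mc{M}(Q_p)]$ (assumption~1 of Lemma~\ref{lem:appvil}), Corollary~\ref{cor:dchuaoe18279312873}$(ii)$ gives $\Pr[\notvprop] = 1-(1-\epsilon)q^Q_v \ge 1-q^Q_v$, and since $\b{q}$ is a fractional matching $q_v = q^Q_v + q^S_v \le 1$; the same elementary manipulation used at the end of the proof of Claim~\ref{cl:u-frac-exp} then yields $\E[y_v\mid\notvprop] \le q^S_v/(1-q^Q_v) \le q_v$.

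For the second step I would write $y'_v = \sum_{e=(u,v)\in S} y'_e$ with $y'_e := (y_e\mid\notvprop)$, observe that each $y'_e \le q_e/(p\,\Pr[\notvprop]\,\Pr[u\notin M_\mc{B}]) \le \epsilon^5 p/(p\cdot\epsilon\cdot\epsilon) = \epsilon^3$ — using $q_e \le \epsilon^5 p$ for $e\in S$ and the lower bounds $\Pr[\notvprop]\ge\epsilon$, $\Pr[u\notin M_\mc{B}]\ge\epsilon$ supplied by the down-sampling in the definition of $M_\mc{A}$ — and then bound $\Var[y'_v]$ by a second-moment computation. The one point where the argument genuinely differs from Claim~\ref{cl:u-frac-concentration} is the correlation structure: the indicators $\{e\in S_p\}$ are mutually independent across the $S$-edges at $v$ and independent of everything else, but the events $\{u\notin M_\mc{B}\}$ over the distinct $B$-endpoints $u$ of $v$ need not be independent, since two such vertices may share $Q$-neighbours. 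However, conditioning on $(\notvprop)$ does not disturb the joint law of the pairs $(\{e\in S_p\}, \{u\notin M_\mc{B}\})$ — all the randomness driving $(\notvprop)$, namely the realizations of $v$'s $Q$-edges and $v$'s own coins, is disjoint from the randomness driving those pairs, because $v$ is a $Q$-neighbour of none of those $u$'s — and because each $A$-vertex proposes to at most one neighbour the proposal indicators of Algorithm~\ref{alg:B} are negatively associated, so any two of the monotone events $\{u\in M_\mc{B}\}$, and hence their complements, are non-positively correlated. Consequently the cross terms in $\Var[y'_v] = \sum_e \Var[y'_e] + \sum_{e\neq f}\Cov[y'_e, y'_f]$ are $\le 0$, giving $\Var[y'_v] \le \sum_e \E[(y'_e)^2] \le \epsilon^3 \sum_e \E[y'_e] = \epsilon^3\,\E[y'_v] \le \epsilon^3 q_v$; since $\E[y'_v]\le q_v\le 1$, Chebyshev yields $\Pr[y'_v > 1+\epsilon] \le \Pr[y'_v > \E[y'_v]+\epsilon] \le \epsilon q_v$, and combining this tail bound with the boundedness of $y'_v$ to control $\E[y'_v\,\mathbf{1}(y'_v>1+\epsilon)]$ — exactly as in Claim~\ref{cl:u-frac-concentration} — gives the stated bound. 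I expect this variance/correlation bookkeeping, rather than anything in the first step, to be the only real obstacle; the rest is a direct transcription of the $B$-side arguments with $A$ and $B$ swapped and $(u\notin M_\mc{B})$ replaced by $(\notvprop)$ as the conditioning event.
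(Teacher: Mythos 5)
Your proposal is correct and follows essentially the same route as the paper's proof in Appendix A: the same two-step structure (first $\E[y_v\mid\notvprop]\le q_v$ via the independence facts from Corollary~\ref{cor:dchuaoe18279312873} and the algebraic manipulation of Claim~\ref{cl:u-frac-exp}, then a second-moment/Chebyshev bound with $\tau\le\epsilon^3$), and you correctly identify the one genuine new ingredient — that the summands $y'_{e_i}$ are no longer independent and one must instead show $\Cov(y'_{e_i},y'_{e_j})\le 0$ via negative association of the proposal indicators — which is exactly the paper's Claim~\ref{cl:covv}.
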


Since the proof is similar to Claim~\ref{cl:u-frac-concentration} for vertices in $B$ and there are only minor differences, we defer it to Appendix~\ref{sec:proofs}.

To complete the proof that $\E|\b{x}| \approx \E|\b{y}|$, note that
\begin{flalign*}
	\E|\b{x}| &= \frac{1}{2} \sum_{w \in A \cup B} \E[x_w] =  \frac{1}{2} \sum_{w \in A \cup B} \frac{\E[y_w \mid y_w \leq 1+\epsilon]}{1+\epsilon} \stackrel{\text{Claims~\ref{cl:u-frac-concentration}, \ref{cl:yv-concentrate}}}{\geq} \left(\frac{1}{2} \sum_{w \in A \cup B} \frac{\E[y_w]}{1+\epsilon} \right) - 3\epsilon |\b{q}|\\
	&\geq \frac{1}{1+\epsilon} |\b{y}| - 3\epsilon |\b{q}| \stackrel{(\ref{eq:ggggf1cgf192837})}{\geq} (1-\epsilon)\left( (1-\epsilon) \Big( (1-\sfrac{1}{e}) \cdot \b{q}(Q) + \b{q}(S) \Big) \right) - 3\epsilon |\b{q}|\\
	&\geq (1-6\epsilon)\Big( (1-\sfrac{1}{e}) \cdot \b{q}(Q) + \b{q}(S) \Big).
\end{flalign*}
Since as discussed $\E[\mu(G_p)] \geq \E|\b{x}|$, this completes the proof of Claim~\ref{cl:matchingQandS}.

\section{Upper Bounds}\label{sec:upperbounds}
In this section, we present our main algorithms for the stochastic vertex cover problem.
\subsection{Bipartite Graphs: $1.36$-Approximation with $\poly(\frac{1}{p})$ Queries}

Our main result in this section is the following stochastic vertex cover result for bipartite graphs:

\begin{theorem}\label{thm:bipartitevc}
	For any $p \in (0, 1]$, any bipartite graph $G=(V, E)$ has a subgraph $Q$ of maximum degree $O(1/p^6)$ where querying only the edges in $Q$ suffices to find $C \subseteq V$ such that:
	\begin{enumerate}[itemsep=0pt, topsep=5pt]
		\item $C$ is a vertex cover of $G_p$ with probability 1.
		\item The expected size of $C$ is at most 1.367 $(\approx \frac{e+1}{e})$  times the size of $\E[\nu(G_p)]$.
		\item Both $Q$ and $C$ can be found in polynomial time.
	\end{enumerate}
\end{theorem}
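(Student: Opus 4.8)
The plan is to feed the Half-Stochastic Matching Lemma (Lemma~\ref{lem:vcpartition}) into the bipartite ``add-back'' bound of Lemma~\ref{lem:appvil}. Fix a small constant $\epsilon>0$ and invoke Lemma~\ref{lem:vcpartition} with parameter $\epsilon_1:=\epsilon^{3}$, obtaining a partition $(Q,S)$ of $E$ with $\Delta(Q)=O(1/\epsilon_1^{11}p^6)=O(1/p^6)$, together with a matching algorithm $\mc{M}$ on the half-stochastic graph $H:=Q_p\cup S$ such that $\E|\mc{M}(H)|\ge(1-2\epsilon_1)\E[\mu(H)]$ and $\Pr[e\in\mc{M}(H)]\le\epsilon_1^2 p$ for every $e\in S$. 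The algorithm then queries exactly the edges of $Q$, learns $Q_p$, forms the (now explicitly known) bipartite graph $H=Q_p\cup S$, and outputs a minimum vertex cover $C$ of $H$. Correctness of the cover is immediate and deterministic: every realized edge of $G_p$ lies in $Q_p$ or in $S$, hence $G_p\subseteq H$ always, so $C$ covers $G_p$; moreover $|C|=\nu(H)=\mu(H)$ by K\"onig's theorem.

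For the size bound it therefore suffices to show $\E[\mu(H)]\le\frac{e+1}{e}\,(1+O(\epsilon))\,\E[\mu(G_p)]$, after which K\"onig applied to $G_p$ finishes the argument. Set $q_e:=\Pr[e\in\mc{M}(H)]$ for every $e\in E$. Since $\mc{M}(H)$ is always a matching, $\sum_{e\ni v}q_e=\Pr[v\in\mc{M}(H)]\le 1$ for each vertex $v$, so $\b{q}$ is a genuine fractional matching on $E$ with $|\b{q}|=\E|\mc{M}(H)|$. The one technical wrinkle is that Lemma~\ref{lem:appvil} requires a matching algorithm operating on $Q_p$, whereas $\mc{M}$ operates on $H$; this is handled by passing $\mc{M}'(Q_p):=\mc{M}(Q_p\cup S)\cap Q$, which is a legitimate randomized matching algorithm on $Q_p$ (the set $S$ is fixed and known, and $\mc{M}(H)\cap Q\subseteq Q_p$), and satisfies $\Pr[e\in\mc{M}'(Q_p)]=\Pr[e\in\mc{M}(H)]=q_e$ for all $e\in Q$. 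Since $q_e\le\epsilon_1^2 p=\epsilon^6 p\le\epsilon^5 p$ for $e\in S$, both hypotheses of Lemma~\ref{lem:appvil} hold with its parameter taken to be $\epsilon$, giving
\[
\E[\mu(G_p)]\ \ge\ (1-6\epsilon)\,\tfrac{e}{e+1}\,|\b{q}|\ =\ (1-6\epsilon)\,\tfrac{e}{e+1}\,\E|\mc{M}(H)|\ \ge\ (1-6\epsilon)(1-2\epsilon_1)\,\tfrac{e}{e+1}\,\E[\mu(H)].
\]
Rearranging and using $\E|C|=\E[\nu(H)]=\E[\mu(H)]$ and $\E[\nu(G_p)]=\E[\mu(G_p)]$ yields $\E|C|\le\big((1-6\epsilon)(1-2\epsilon_1)\big)^{-1}\tfrac{e+1}{e}\,\E[\nu(G_p)]$, which for $\epsilon$ a sufficiently small constant is at most the claimed $1.367\approx\tfrac{e+1}{e}$ while keeping $\Delta(Q)=O(1/p^6)$.

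For the polynomial-time claim, minimum vertex cover of a bipartite graph is computable in polynomial time (from a maximum matching via the constructive side of K\"onig's theorem / max-flow), and the partition $(Q,S)$ together with the marginals used above can be produced in polynomial time by the poly-time implementation of Lemma~\ref{lem:vcpartition} deferred to the appendix; hence both $Q$ and $C$ are poly-time computable.

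The main obstacle is not conceptual but lies in correctly gluing the two lemmas together: (i) reconciling the domain of the matching algorithm ($H$ versus $Q_p$) through the restriction $\mc{M}(\cdot)\cap Q$; (ii) matching the marginal thresholds $\epsilon_1^2 p$ of Lemma~\ref{lem:vcpartition} against the $\epsilon^5 p$ required by Lemma~\ref{lem:appvil}, which forces the choice $\epsilon_1=\epsilon^{3}$ (and pushes $\Delta(Q)$ to $O(1/\epsilon^{33}p^6)$, still $O(1/p^6)$ for constant $\epsilon$); and (iii) verifying that the induced $\b{q}$ is a valid fractional matching. The single point worth emphasizing is that Lemma~\ref{lem:appvil} is precisely the ``add $S$ back'' dual of the stochastic-matching ``remove $S$'' guarantee, so the minimum vertex cover of $H$ is exactly the right object to output, and its expected size is controlled by the $\tfrac{e}{e+1}$ loss of that lemma.
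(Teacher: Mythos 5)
Your proposal is correct and follows essentially the same route as the paper: query the $Q$ from the Half-Stochastic Matching Lemma, output a minimum vertex cover of $H=Q_p\cup S$, define $q_e=\Pr[e\in\mc{M}(H)]$, restrict $\mc{M}$ to $Q_p$ via intersection with $Q$, and close the loop with Lemma~\ref{lem:appvil} and K\"onig's theorem. The only (immaterial) difference is how the two $\epsilon$-parameters are reconciled — you rescale the input to Lemma~\ref{lem:vcpartition} as $\epsilon_1=\epsilon^3$, while the paper keeps $\epsilon$ there and takes $\delta=\epsilon^{2/5}$ in Lemma~\ref{lem:appvil}; both yield the same conclusion.
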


To prove Theorem~\ref{thm:bipartitevc}, we first use the  Half-Stochastic Matching Lemma~\ref{lem:vcpartition} to obtain a partitioning $(Q, S)$ of $E$. We then query the edges in $Q$, and report the MVC of $H = Q_p \cup S$ as the vertex cover for $G_p$. We finally use the algorithm $\mc{M}$ provided by this lemma, as well as Lemma~\ref{lem:appvil} to analyze the approximation ratio of this algorithm.

The formal algorithm is as follows:

\begin{tboxalg2e}{The algorithm for Theorem~\ref{thm:bipartitevc}.}
\begin{algorithm}[H]
	\DontPrintSemicolon
	\SetAlgoSkip{bigskip}
	\SetAlgoInsideSkip{}
	
	\label{alg:bipartite}

	Let $Q, S$ be the partitioning found by Lemma~\ref{lem:vcpartition} for the following parameters: $G$ is the given base graph, $p$ is the realization probability, and let $\epsilon > 0$ be a sufficiently small constant which adjusts how close the approximation will be to $\frac{e+1}{e}$.
	
	Query the edges in $Q$ and let $Q_p$ be the edges in $Q$ that are realized.
	
	Return a minimum vertex cover $C$ of graph $H = Q_p \cup S$.
\end{algorithm}
\end{tboxalg2e}

Intuitively, what we do in Algorithm~\ref{alg:bipartite} is to query only the edges in $Q$, assume that the rest of the edges in $S$ are all realized, and report a MVC of the resulting graph $H = Q_p \cup S$. Note that the vast majority of the edges in $G$ belong to $S$ since $Q$ has only a constant maximum degree. As a result, we have to argue that the extra constraints imposed by assuming that all these edges are realized, do not increase the vertex cover size by much. 

Before analyzing the size of the vertex cover $C$, let us explain why it is always a valid vertex cover of $G_p$ and analyze the query-complexity of Algorithm~\ref{alg:bipartite}. These will be simple consequences of the properties provided by Lemma~\ref{lem:vcpartition}.

\smparagraph{Query-Complexity and Validity of the Vertex Cover:} By Property~\ref{vp-prop:degreeQ} of Lemma~\ref{lem:vcpartition} the maximum degree in $Q$ is at most $O(1/\epsilon^{11}p^6)$; thus, we query at most $O(1/p^6)$ edges per vertex. The validity of the vertex cover is also easy to confirm. By definition of $H$, an edge $e \in E$ is not in $H$ if and only if it belongs to $Q$ and is not realized. Therefore, all realized edges must belong to $H$ implying that $G_p$ is a subgraph of $H$. As a result, a vertex cover of $H$ covers all the edges in $G_p$.

\smparagraph{The Approximation Ratio:} We use Lemma~\ref{lem:appvil}, as well as the properties of the partitioning provided by Lemma~\ref{lem:vcpartition}, to show that the expected size of the MVC in graph $H$, which is reported by Algorithm~\ref{alg:bipartite} as the output, is not larger than (almost) $\frac{e+1}{e}$ times the expected size of the MVC in the actual realization $G_p$, namely that for any arbitrarily small constant $\delta'$ (affecting $\epsilon$ in Algorithm~\ref{alg:bipartite}) it holds that:
$$
	\E_H[\nu(H)] \leq (1+\delta') \frac{e+1}{e}  \cdot \E_{G_p}[\nu(G_p)].
$$
Since the graph is bipartite, by K\"{o}nig's theorem, the size of maximum matching and MVC are the same. As such, it suffices to prove
\begin{equation}\label{eq:hehuc12839123}
	\E_H[\mu(H)] \leq (1+\delta') \frac{e+1}{e}  \cdot \E_{G_p}[\mu(G_p)].
\end{equation}
In order to show this, we define a fractional matching $\b{q}$ on $G$ with size $|\b{q}| \geq (1-2\epsilon) \E[\mu(H)]$. We then show that this fractional matching $\b{q}$ satisfies the required properties of Lemma~\ref{lem:appvil} and as a result, implies $\E[\mu(G_p)]$ has the desired size of (\ref{eq:hehuc12839123}).

\smparagraph{The Fractional Matching $\b{q}$:} Consider (random) matching $\mc{M}(H)$ and recall that $\mc{M}$ is the matching algorithm provided by Lemma~\ref{lem:vcpartition}. For each edge $e \in E$, we simply let $q_e \gets \Pr[e \in \mc{M}(H)]$. Clearly $\b{q}$ is a valid fractional matching since the probabilities around each vertex correspond to its probability of being matched and thus do not exceed one. Moreover, $|\b{q}| = \E[\mc{M}(H)]$ by linearity of expectation, combined with  Lemma~\ref{lem:vcpartition} Property~\ref{vp-prop:expmatching} that $\E|\mc{M}(H)| \geq (1-2\epsilon)\E[\mu(H)]$, we get $|\b{q}| \geq (1-2\epsilon)\E[\mu(H)]$; hence $|\b{q}|$ has the claimed size too. It remains to prove the two assumptions of Lemma~\ref{lem:appvil} are also satisfied by $\b{q}$. The first one requires us to give a matching algorithm $\mc{M}'$ where $q_e = \Pr[e \in \mc{M}'(Q_p)]$ for all $e \in Q$. Letting $\mc{M}'(Q_p) := \mc{M}(H) \cap Q_p$ suffices for this purpose, since recall that any edge in $\mc{M}(H) \cap Q$ is already in $Q_p$ by definition of graph $H$ in Lemma~\ref{lem:vcpartition} and thus $\Pr[e \in \mc{M}'(Q_p)] = q_e$. For the second assumption we need $q_e \leq \delta^5 p$ for all $e \in S$ (here we used $\delta$ instead of $\epsilon$ to avoid confusion with parameter $\epsilon$ that we feed into Lemma~\ref{lem:vcpartition}). Indeed, by Property~\ref{vp-prop:prsmallS}, we have $q_e \leq \epsilon^2 p$ and it suffices to let $\delta = \epsilon^{2/5}$. We can, now, apply Lemma~\ref{lem:appvil} to obtain:
$$
	\E[\mu(G_p)] \geq (1-4\delta) \frac{e}{e+1}|\b{q}| \geq (1-4\delta)(1-2\epsilon) \frac{e}{e+1} \E[\mu(H)].
$$
Since we can let $\epsilon$ (and thus $\delta$) be any desirably small constant, this proves (\ref{eq:hehuc12839123}) and our claim that our reported vertex cover has size at most (arbitrarily close to) $\frac{e+1}{e} \E[\nu(G_p)]$.

\smparagraph{Stochastic Matchings:} Finally, we note that the same tools we used for this problem also lead to Corollary~\ref{cor:matching}. To prove it, we in fact, provide a novel analysis for a well-known Monte Carlo algorithm for the stochastic matching problem that is very different from the algorithm we use for Theorem~\ref{thm:bipartitevc}. This is why the number of queries in Corrolary~\ref{cor:matching} and Theorem~\ref{thm:bipartitevc} are different. But the new analysis, is also based on Lemma~\ref{lem:appvil} similar to above. 

We analyze the following algorithm proposed first in \cite{soda19} and further analyzed in \cite{stoc20,focs20}.

\begin{tboxalg2e}{A Monte-Carlo algorithm for stochastic matching \cite{soda19}.}
\begin{algorithm}[H]
	\DontPrintSemicolon
	\SetAlgoSkip{bigskip}
	\SetAlgoInsideSkip{}
	
	\label{alg:montecarlo}

	For large enough $R = O(\frac{\log 1/p}{p})$, take independent realizations $G_1, \ldots, G_R$ of $G$.
	
	For some deterministic maximum matching algorithm $\MM{\cdot}$, query subgraph $Q := \MM{G_1} \cup \ldots \cup \MM{G_R}$ of $G$ and report the maximum matching in $Q_p$.
\end{algorithm}
\end{tboxalg2e}

The following lemma is implied in \cite{soda19} for Algorithm~\ref{alg:montecarlo}:
\begin{claim}[\cite{soda19}]\label{cl:stochasticmatchingfractional}
	For any desirably small constant $\epsilon > 0$ (affecting the hidden constants in $R$), there is a partitioning $C, N$ of $Q$, and a fractional matching $\b{q}$ on $Q$ such that $(i)$ $\E|\b{q}| \geq (1-\epsilon) \E[\mu(G_p)]$, $(ii)$ for each edge $e \in N$, $q_e \leq \epsilon^5 p$, $(iii)$ for each $e \in C$, $q_e = \Pr[e \in \MM{G_p}]$.
\end{claim}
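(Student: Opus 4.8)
Here is how I would approach proving Claim~\ref{cl:stochasticmatchingfractional}.

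\medskip
\noindent\textbf{Proof plan.} The plan is to start from the \emph{exact} fractional matching that $\MM{\cdot}$ induces on $G$ and re-route almost all of its weight onto the queried subgraph $Q$, while keeping the ``heavy'' edges pinned to their true marginals (this is exactly the shape condition $(iii)$ asks for, and it is what will later let Lemma~\ref{lem:appvil} be applied with a matching algorithm on $C$ that reproduces these marginals by internally simulating a fresh realization of the non-queried edges). For $e\in E$ set $f_e:=\Pr[e\in\MM{G_p}]$. Since $\MM{G_p}$ is always a matching, $\b{f}$ is a fractional matching on $G$ with $\sum_{e\ni v}f_e\le 1$ for every $v$, and $\sum_e f_e=\E[\mu(G_p)]$ by linearity. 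Call $e$ \emph{heavy} if $f_e>\epsilon^5 p$ and \emph{light} otherwise; a vertex is incident to at most $\epsilon^{-5}p^{-1}$ heavy edges. The first step is that heavy edges land in $Q$ with high probability: each $G_i$ has the law of $G_p$, so $\Pr[e\notin Q]=\prod_{i=1}^R\Pr[e\notin\MM{G_i}]=(1-f_e)^R\le e^{-f_e R}$, which for a heavy edge is $\le e^{-\epsilon^5 pR}\le\epsilon$ once $R\ge\epsilon^{-5}p^{-1}\ln(1/\epsilon)$ — a value of the form $O(\log(1/p)/p)$ after absorbing the $\epsilon$-dependent constant (for $p$ bounded away from $1$; for $p=\Omega(1)$ the claim is trivial). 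Hence $\E\big[\sum_{e\text{ heavy},\,e\notin Q}f_e\big]\le\epsilon\sum_{e\text{ heavy}}f_e\le\epsilon\,\E[\mu(G_p)]$.

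\smallskip
\noindent The second ingredient supplies the light-edge weight, which cannot come from ``$f$ restricted to $Q$'': a light edge appears in $Q$ with probability only $\approx f_e R$, so this restriction would lose almost all of it. Instead I would use the empirical fractional matching $\hat{\b{q}}$ on $Q$ with $\hat q_e:=\frac1R\,|\{i:e\in\MM{G_i}\}|$; as a convex combination of matchings it is feasible, $\sum_{e\ni v}\hat q_e\le 1$, $\E[\hat q_e]=f_e$, and crucially $\E\big[\sum_{e\text{ light},\,e\in Q}\hat q_e\big]=\sum_{e\text{ light}}f_e$ since $\hat q_e\,\pmb1(e\in Q)=\hat q_e$. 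Now take $C:=\{e\in Q:f_e>\epsilon^5 p\}$ and $N:=Q\setminus C$, and define an auxiliary $\b{y}$ on $Q$ by $y_e:=f_e$ on $C$ and $y_e:=\min\{\hat q_e,\ \epsilon^5 p\}$ on $N$; this already meets $(ii)$ and $(iii)$ but can slightly overload vertices. I would then pass to the final $\b{q}$ exactly as in the $\b{y}\to\b{x}$ step of the proof of Claim~\ref{cl:matchingQandS}: keep $y_e$ on an edge $e=(u,v)\in N$ only when both $y_u\le 1$ and $y_v\le 1$, zero it otherwise, and never rescale the $C$-values (so $(iii)$ is preserved verbatim, and feasibility holds since a ``good'' vertex $v$ has $q_v\le y_v\le 1$ and a ``bad'' one has $q_v=\sum_{e\ni v,e\in C}f_e\le 1$). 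Property $(ii)$ still holds because $q_e\le y_e\le\epsilon^5 p$ on $N$.

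\smallskip
\noindent It remains to show $\E|\b{q}|\ge(1-O(\epsilon))\E[\mu(G_p)]$, and this is the main obstacle. There are three losses, each of which must be bounded by $O(\epsilon)\,\E[\mu(G_p)]$: (a) heavy edges of $G$ missing $Q$ (done above); (b) the truncation $\hat q_e\mapsto\min\{\hat q_e,\epsilon^5 p\}$ on light edges, controlled by splitting light edges according to how far $f_e$ is below $\epsilon^5 p$ — for $f_e$ comparable to $\epsilon^5 p$ one uses that such edges have degree $O(1/f_e)$ per vertex together with a second-moment bound on $\hat q_e-f_e$, and for $f_e\ll\epsilon^5 p$ a sharp tail bound $\Pr[\hat q_e>\epsilon^5 p]\le e^{-\Omega(\epsilon^5 pR)}$ makes even the crude ``$\hat q_e\le 1$'' estimate summable once $R=\Theta(\log(1/p)/p)$; and (c) the $N$-weight zeroed at overloaded vertices, which is the delicate part — one shows an overload at $v$ forces $\sum_{e\ni v,e\in C}f_e$ to essentially reach $1$ (a ``heavy-saturated'' vertex) or a rare empirical deviation of $\hat q_v^{\text{heavy}}:=\frac1R\,|\{i:v\text{ matched by a heavy edge in }\MM{G_i}\}|$ below its mean (probability $e^{-\Omega(\epsilon^4 R)}$). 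Heavy-saturated vertices number only $O(\E[\mu(G_p)])$ since $\sum_v f_v=2\,\E[\mu(G_p)]$, and each carries light-weight at most its residual $1-\sum_{e\ni v,e\in C}f_e\le\epsilon$; the rare deviators are charged against the $f$-weight they carry, using again that only $O(\E[\mu(G_p)]/(\epsilon^5 p))$ vertices carry enough weight to matter. Combining (a)--(c) gives $\E|\b{q}|\ge(1-O(\epsilon))\E[\mu(G_p)]$, and rescaling $\epsilon$ yields the stated bound. The step I expect to be hardest is precisely this loss accounting: condition $(iii)$ pins the heavy edges to their \emph{exact} marginals (which the empirical quantities only approximate) while $\b{q}$ must stay feasible on a \emph{single} random subgraph, and reconciling the two is exactly the concentration/truncation/charging bookkeeping that the paper attributes to \cite{soda19}.
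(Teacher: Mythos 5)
Your proposal follows essentially the same route as the paper's own (sketch of a) proof: split edges at the threshold $\Theta(\epsilon^5 p)$ into crucial/heavy and non-crucial/light, pin the heavy edges in $Q$ to their exact marginals $\Pr[e\in\MM{G_p}]$ (using $R=\Theta(\log(1/\epsilon)/(\epsilon^5 p))$ so each heavy edge reaches $Q$ with probability $1-\epsilon$), assign light edges their empirical frequency among $\MM{G_1},\ldots,\MM{G_R}$ so that no weight is lost in expectation, and then repair feasibility via concentration plus discarding overloaded vertices. Your refinement of truncating light values at $\epsilon^5 p$ and zeroing only $N$-edges at bad vertices (so that condition $(iii)$ holds exactly rather than up to a $(1-\epsilon)$ rescaling) is a sensible tightening of the same argument, and the loss-accounting you outline matches what the paper defers to \cite{soda19}.
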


Here we briefly describe the intuition behind Claim~\ref{cl:stochasticmatchingfractional}. For the complete proof see \cite{soda19}.

\begin{proof}[Proof sketch of Claim~\ref{cl:stochasticmatchingfractional}]
Partition the edges of graph $G$ into two subsets: {\em crucial} edges $e$ defined as those with $\Pr[e \in \MM{G_p}] \geq \tau$ for a small enough parameter $\tau = \epsilon^{O(1)} p$, and {\em non-crucial} edges which include all the rest of edges $e$ with $\Pr[e \in \MM{G_p}] < \tau$. 

Let subgraphs $C$ and $N$ of Claim~\ref{cl:stochasticmatchingfractional} be respectively the set of crucial and non-crucial edges of $G$ that belong to $Q$. We define $\b{q}$ such that the size of $\b{q}$ on $C$ is $(1-\epsilon)$ times the expected contribution of crucial edges to $\MM{G_p}$ and, similarly, the size of $\b{q}$ on $N$ is $(1-\epsilon)$ times the expected contribution of non-crucial edges to $\MM{G_p}$. This way, we guarantee property $(i)$.

To define $\b{q}$ on $C$, for any edge $e \in C$ we let $q_e = \Pr[e \in \MM{G_p}]$. It can be easily confirmed that if the parameter $R$ of Algorithm~\ref{alg:montecarlo} is larger than $\log(1/\epsilon)/\tau$, then each crucial edge is added to $Q$ with probability at least $1-\epsilon$. As such, fractional matching $\b{q}$ on $C$ has size at least $(1-\epsilon)$ fraction of the expected contribution of crucial edges to $\MM{G_p}$. 
 
 On the flip side, however, $Q$ includes only a small fraction of non-crucial edges. Hence, to maintain the property that $\b{q}$ on $N$ is almost as large as the expected contribution of non-crucial edges to $\MM{G_p}$, the value of $q_e$ must be much larger than $\Pr[e \in \MM{G_p}]$ for $e \in N$. To do this, suppose that we define $q_e$ to be the {\em fraction} of matchings $\MM{G_1}, \ldots, \MM{G_R}$ that include $e$. Observe that each edge $e$ in the graph, crucial or non-crucial, is expected to appear in exactly $\Pr[e \in \MM{G_p}]$ fraction of matchings $\MM{G_1}, \ldots, \MM{G_R}$ since each one includes $e$ with probability exactly $\Pr[e \in \MM{G_p}]$. Hence, by defining $\b{q}$ on $N$ this way, we get that the expected size of $\b{q}$ on $N$ is at least the expected contribution of $N$ to $\MM{G_p}$. Unfortunately, however, $\b{q}$ may violate fractional matching constraints with this construction. Namely, that $q_v$ for a vertex $v$ may exceed one. The next important observation is that this violation cannot be too large, since the fraction of matchings $\MM{G_1}, \ldots, \MM{G_R}$ in which a vertex $v$ is matched via a non-crucial edge is sufficiently concentrated around the probability that $v$ is matched via a non-crucial edge in $\MM{G_p}$. The final fractional matching is obtained by slightly modifying this fractional matching (particularly by discarding vertices that deviate too much and multiplying the rest of the values by some $(1-\epsilon)$ factor) so that no constraints are violated.
\end{proof}

Having it, we can now plug $\b{q}$ into Lemma~\ref{lem:appvil} and obtain that $\E[\mu(Q_p)] \geq (1-5\epsilon) (\frac{e}{e+1})\E[\mu(G_p)]$, thereby proving Corollary~\ref{cor:matching}.

\newcommand{\WF}[1]{\ensuremath{\textsf{Filling}(#1)}}

\subsection{General Graphs: $(2+\epsilon)$-Approximation with $O(\frac{1}{p})$ Per-Vertex Queries}

In this section, we prove the following result:
\begin{theorem}\label{thm:generalvc}
	For any $\epsilon > 0$ and $p \in (0, 1]$, any (general) graph $G=(V, E)$ has a subgraph $Q$ of maximum degree $O(\frac{1}{\epsilon^3 p})$ where querying only the edges in $Q$ suffices to find $C \subseteq V$ such that:
	\begin{enumerate}[itemsep=0pt, topsep=5pt]
		\item $C$ is a vertex cover of $G_p$ with probability 1.
		\item The size of $C$ is in expectation at most $(2+\epsilon)$  times the minimum vertex cover of $G_p$.
		\item It is possible to find $Q$ and $C$ in polynomial time.
	\end{enumerate}
\end{theorem}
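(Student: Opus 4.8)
The plan is to reduce everything to the half‑stochastic setting. We build a query set $Q\subseteq E$ of maximum degree $O(\epsilon^{-3}p^{-1})$, query exactly the edges of $Q$, and output the endpoint set $C:=V(M)$ of a \emph{maximum} matching $M$ of the half‑stochastic graph $H:=Q_p\cup S$ (with $S:=E\setminus Q$). A maximum matching is maximal, so $C$ covers $H$; and since every realized edge is either a realized edge of $Q$ or lies in $S$, we have $G_p\subseteq H$, hence $C$ covers $G_p$ with probability one. Using a maximum matching (rather than a near‑maximum one, or the exact MVC of $H$) has two virtues: it is computable in polynomial time by Edmonds' algorithm — this is the promised departure from ``report the MVC of $H$'', which would be NP‑hard on general graphs — and it makes $\E|C|=2\,\E[\mu(H)]$ exactly. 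König's theorem is unavailable, but $\nu(G_p)\ge\mu(G_p)$ still holds, so it suffices to choose $Q$ with $\E[\mu(H)]\le(1+\tfrac{\epsilon}{2})\,\E[\mu(G_p)]$; then $\E|C|\le(2+\epsilon)\,\E[\mu(G_p)]\le(2+\epsilon)\,\E[\nu(G_p)]$.

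\paragraph{Constructing $Q$.}
I would run the iterative partitioning of Lemma~\ref{lem:vcpartition}, but in a much cruder regime: only a factor $2$ is at stake, so we can afford very loose control, and we replace the $\Phi$‑maximizing matchings $\SM{i}$ by plain maximum matchings. Start from $(Q_0,S_0)=(\emptyset,E)$; given $(Q_i,S_i)$, let $H_i:=(Q_i)_p\cup S_i$, let $M_i$ be a maximum matching of $H_i$ (fixed tie‑breaking), set $D_i:=\{e\in S_i:\Pr[e\in M_i]>\theta\}$ for a threshold $\theta=\Theta(\epsilon p)$, and put $Q_{i+1}:=Q_i\cup D_i$, $S_{i+1}:=S_i\setminus D_i$. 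Under the natural edge‑by‑edge coupling (as in~\eqref{eq:coupling}) one has $H_{i+1}\subseteq H_i$, so $\E[\mu(H_i)]$ is non‑increasing and lies in $[\,\E[\mu(G_p)],\,\mu(G)\,]$. Since matched‑probabilities around a vertex sum to at most $1$ while each edge of $D_i$ has matched‑probability exceeding $\theta$, each step raises every vertex's degree in $Q$ by at most $1/\theta=O(\epsilon^{-1}p^{-1})$. Applying the pigeonhole argument of Claim~\ref{cl:interval} to the potential $\E[\mu(H_i)]$ (whose total variation is at most $\mu(G)$) yields an index $i=O(\epsilon^{-2})$ around which the iteration is essentially stationary, i.e. $\E[\mu(H_i)]-\E[\mu(H_{i+1})]$ (even over a short block of steps) is below any prescribed $\Theta(\epsilon^2)\mu(G)$. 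We set $(Q,S):=(Q_i,S_i)$, so $Q$ has maximum degree $O(\epsilon^{-3}p^{-1})$.

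\paragraph{The approximation bound.}
It remains to show the stationary state satisfies $\E[\mu(H)]\le(1+\tfrac{\epsilon}{2})\,\E[\mu(G_p)]$. Morally: when the step $H_i\to H_{i+1}$ barely changes the matching size, the maximum matching of $H_i$ cannot be relying heavily on the ``phantom'' part of $S_i$ (edges of $S_i$ not realized). Every $e\in S_i\setminus D_i$ has $\Pr[e\in M_i]\le\theta\le\epsilon p$; coupling $M_i$ with the matching obtained by deleting the phantom $S_i$‑edges and greedily re‑matching, one argues that a matching of expected size at least $(1-\tfrac{\epsilon}{2})\E[\mu(H_i)]$ survives inside $G_p$ — for otherwise the lost phantom mass, being large enough to matter, would have shown up as a genuine drop $\E[\mu(H_i)]-\E[\mu(H_{i+1})]$, contradicting stationarity. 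Hence $\E[\mu(G_p)]\ge(1-\tfrac{\epsilon}{2})\E[\mu(H)]$, which rearranges to the required bound.

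\paragraph{Main obstacle and loose ends.}
The hard part is exactly this last step: converting ``the iteration stalled'' into a true $(1+\tfrac{\epsilon}{2})$‑comparison between $\mu(H)$ and $\mu(G_p)$, i.e. bounding how much the phantom edges $S\setminus G_p$ can inflate the maximum matching. This is the dual, ``adding‑edges'' form of the crucial‑edge/augmentation analyses used for stochastic matching, and it is delicate because a maximum matching of $H_i$ can route through phantom edges in ways revealed only by removing many of them at once; the disjointness of the sets $D_i$ (as in the proof of Lemma~\ref{lem:vcpartition}) is what makes the accounting terminate, and the constants in $\theta$ and in the stationarity threshold must be tuned together. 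The polynomial‑time claim is then routine: the algorithm computes only maximum matchings (Edmonds) and estimates the marginals $\Pr[e\in M_i]$ by sampling, exactly as in the polynomial‑time implementation remarked after Lemma~\ref{lem:vcpartition}.
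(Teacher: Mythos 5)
Your construction of $Q$, and the validity, degree, and poly-time parts, are fine, but the proof has a genuine gap at its central step, and it is not a repairable omission. The claim that stationarity of the iteration yields $\E[\mu(H)]\le(1+\tfrac{\epsilon}{2})\,\E[\mu(G_p)]$ is both unproven and almost certainly unprovable in this query regime. Stationarity of one step only controls the loss incurred by subsampling the high-marginal edges $D_i$; it says nothing about the collective effect of subsampling \emph{all} of $S$, which is what separates $H=Q_p\cup S$ from $G_p=Q_p\cup S_p$. Individually small marginals $\Pr[e\in M_i]\le\theta$ do not prevent the maximum matching from collectively relying on a constant fraction of phantom $S$-edges, and your own ``loose ends'' paragraph concedes that this is exactly the unproved content. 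Moreover, the claim is suspiciously strong: applied to a bipartite $G$, where $\nu(H)=\mu(H)$ and $\nu(G_p)=\mu(G_p)$ by K\"onig's theorem, it would give a $(1+\epsilon)$-approximate stochastic MVC with $\poly(1/\epsilon p)$ per-vertex queries --- precisely the problem the paper states as open and resolves only with super-polynomial queries (Theorem~\ref{thm:bipartite1-eps}). Even the paper's strongest comparison between $\mu(H)$ and $\mu(G_p)$ under a half-stochastic partitioning (Lemma~\ref{lem:appvil}, which needs the vertex-independent matching machinery and bipartiteness) loses a factor $\frac{e+1}{e}\approx 1.37$, which through your reduction would only give roughly a $2.74$-approximation, not $2+\epsilon$.

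The paper's actual proof takes a different, primal--dual route that never compares $\mu(H)$ with $\mu(G_p)$. It runs a water-filling fractional matching on the static graph $G$ up to time $t=\Theta(\epsilon^3 p)$, commits the saturated vertices $F$ to the cover \emph{before} querying, queries only the edges avoiding $F$ (which forces maximum degree at most $1/t$), water-fills again on $Q_p$ with residual budgets $1-x'_v$, and outputs the vertices whose total fractional value reaches $1$. The approximation is certified by exhibiting a fractional matching of $G_p$ itself --- obtained by scaling the time-$t$ values on the realized $S$-edges up by $1/((1+\alpha)p)$, zeroing out the few vertices that overflow, and controlling the overflow probability with Chebyshev --- whose expected size is at least $|C|/(2+\epsilon)$; weak LP duality then lower-bounds $\nu(G_p)$ directly. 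The factor $2$ there is paid once, up front, by comparing a fractional matching to a vertex cover, whereas your route defers all the difficulty to an integral comparison between $\mu(H)$ and $\mu(G_p)$ that the paper's machinery cannot deliver.
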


We start with a subroutine for constructing a fractional matching on a given graph, and then describe our algorithm which proves Theorem~\ref{thm:generalvc}.

\smparagraph{A fractional matching subroutine:} Consider a simple and well-known fractional matching algorithm which starts with a zero-size fractional matching and gradually increases the fractional values on the edges all at the same (additive) rate. Once the fractional value around a vertex reaches one, we mark this vertex as {\em inactive} and stop increasing the fractional value of its edges. We will use a  slightly different variant of this algorithm in Algorithm~\ref{alg:filling} where the vertices may be made inactive sooner; i.e., once they reach a given budget; this variant is formalized below. 

Let $G$ be a graph, and for each vertex $v$, let $b(v) \in (0, 1]$ be a given {\em budget}. Initially, every vertex $v$ is {\em active} and for each edge $e$ we set $x_e := 0$. The algorithm proceeds in at most $n$ steps. In each step, for any edge $e$ whose both endpoints are active, we increase $x_e$ for all the edges in the same rate until for a vertex we have $x_v := \sum_{e \ni v} x_e= b(v)$. When this event happens for a vertex $v$ it  becomes  inactive, which implies that for its edges $e$, $x_e$ will no longer change. Algorithm~\ref{alg:filling} is the pseudo-code of this process.

It is clear that throughout the algorithm we have $x_v \leq b(v) \leq 1$ for every vertex $v$. Hence, at every point in the algorithm, vector $\b{x} := (x_e)_{e \in E}$ is a fractional matching of $G$.  Let us define $\b{x}^{(t)}$ to be equal to $\b{x}$ from an iteration of the algorithm after which for at least one edge we have $x_e > t$. (If $x_e > t$ never happens then $\b{x}^{(t)}$, is from the last iteration of the algorithm.)  We can intuitively think of the algorithm above as a continuous process over a time interval of $[0, 1]$ that gradually increases the fractional matching on all edges with active endpoints, all at the same additive rate, until every vertex becomes inactive. The value of $\b{x}^{(t)}$ can thus be interpreted as the fractional matching constructed by time $t$ of this process. 

\begin{tboxalg2e}{\WF{G=(V, E), b: V \to (0, 1]}}
\begin{algorithm}[H]
	\DontPrintSemicolon
	\SetAlgoSkip{bigskip}
	\SetAlgoInsideSkip{}
	
	\label{alg:filling}
	For any edge $e \in E$, set $x_e := 0$.\;

	\Repeat{All the vertices are inactive}{
	Call a vertex  {\em inactive} iff  $x_v := \sum_{e \ni v} x_e = b(v)$, and {\em active} otherwise.\;
	Call an edge  active iff both its endpoints are active, and inactive otherwise. \;
	 Pick the minimum parameter $\delta\in (0,1)$ such that setting $x_e \gets x_e+\delta$ for all the active edges, results in at least one new inactive vertex.
		\;
		Set $x_e \gets x_e+\delta$ for all active edges.\;
	}
	
	\Return fractional matching $\b{x} \gets (x_e)_{e \in E}$.
\end{algorithm}
\end{tboxalg2e}

\newcommand{\tthreshold}[0]{\ensuremath{\epsilon^3 p}}

\smparagraph{Our stochastic vertex cover algorithm.} We use Algorithm~\ref{alg:general} to decide which edges to query and which vertices to put in the vertex cover. Here we give an informal overview of this algorithm.

The algorithm starts by running Algorithm~\ref{alg:filling} on the static graph $G$ with a budget of 1 per vertex, to obtain a fractional matching $\b{x}$. We will, in fact, only need the fractional matching $\b{x}'$ constructed by this algorithm up to time $t := \Theta(\tthreshold)$, i.e. $\b{x}' := \b{x}^{(t)}$. Once we have $\b{x}'$ and before we query any edge, we commit the vertices with fractional matching value $1$ in $\b{x}'$ to be in our final vertex cover. Let $F := \{v \in V : x'_v = 1 \}$ denote the set of these vertices and let $Q$ be the edges in $E$ that do not have an endpoint in $F$. Note that any edge in $E \setminus Q$ is already covered by $F$; hence we do not need to query them. We, thus, only query the edges in $Q$. Let $Q_p$ be the subset of edges in $Q$ that turn out to be realized. At least one of the endpoints of each edge in $Q_p$ should be added to the vertex cover. To decide which ones join the vertex cover, we again run Algorithm~\ref{alg:filling}, but this time on subgraph $Q_p$ and we set the budget of each vertex $v$ to be $1-x'_v$. Let $\b{y}$ be the resulting fractional matching on $Q_p$. We report the set $C := \{ v \in V : x'_v + y_v = 1 \}$ as the vertex cover. (Note from the definition that $F \subseteq C$, hence satisfying our earlier claim that we ``commit'' $F$ to be in the final vertex cover.)

\begin{tboxalg2e}{The algorithm for Theorem~\ref{thm:generalvc}.}
\begin{algorithm}[H]
	\DontPrintSemicolon
	\SetAlgoSkip{bigskip}
	\SetAlgoInsideSkip{}
	
	\label{alg:general}

	Let $\b{x} := \WF{G, b}$ where $b(v) = 1$ for each $v \in V$.\;
	
	Fix $t = \Theta(\tthreshold)$ and let $\b{x}' := \b{x}^{(t)}$.
	
	Let $F := \{v \in V: x'_v = 1 \}$.\;
	
	Query edges $Q := \{e=(u, v) \in E : u \not\in F \text{ and } v \not\in F\}$ with no endpoint in $F$.\;
	
	Let $Q_p$ be the realized edges in $Q$.\;
	
	Run $\b{y} := \WF{Q_p, b'}$ where $b'(v) = 1 - x'_v$ for each $v \in V$. \label{line:WF2} \;
	
	Report $C := \{ v \in V : x'_v + y_v = 1 \}$ as the vertex cover of $G_p$.
\end{algorithm}
\end{tboxalg2e}

\smparagraph{Validity of the vertex cover.} The proof of why the set $C$ reported by Algorithm~\ref{alg:general} is always a valid vertex cover of $G_p$ is simple. We start by formalizing our earlier claim that all vertices in $F$ also appear in $C$. To see this, observe that any vertex $v$ in $F$ by definition has $x'_v = 1$ which also implies $y_v = 0$ since $b'(v) = 1 - x'_v = 0$; this in turn implies $x'_v + y_v = 1$ and thus by definition $v \in C$. Now take an edge $e$ of the realization $G_p$. If $e$ has an endpoint in $F$, then this endpoint is in $C$, covering $e$. Therefore, let us take an edge $e = (u, v)$ in $G_p$ whose both endpoints are in $V \setminus F$. Observe that by definition we have $e \in Q_p$. Now once we run $\WF{Q_p, b'}$, we increase the fractional value on $e$ until one of its endpoints $v$ reaches its budget $b'(v)$. Since we have set $b'(v) = 1-x'_v$ and $y_v = b'(v)$, we have $x'_v + y_v = 1 - b'(v) + b'(v) = 1$ and thus $v$ belongs to $C$, covering $e$.

\smparagraph{Analysis of the number of queries.} Observe that in Algorithm~\ref{alg:general}, we only query the edges in $Q$. Thus, it suffices to prove that the maximum degree in $Q$ is at most $O(1/\tthreshold)$ to prove that Algorithm~\ref{alg:general} queries at most $O(1/\tthreshold)$ edges per vertex. Consider an iteration $i$ of the algorithm when $\b x= \b x^{t}$, and let $\b x'$ be the fractional matching from the next iteration (if any). By definition of $\b x^{t}$ and $Q$, at iteration $i$,  any edge $e\in Q$ has two active endpoints and thus is active itself. Moreover, by Algorithm~\ref{alg:filling}, at any iteration, all the edges that have been active in the previous iteration have the same fractional value; thus, we have $x'_e > t$ for any $e\in Q$. Moreover, as a result of $\b x'$ being a valid fractional matching, any vertex has at most $1/t = \Theta(1/\tthreshold)$ edges in $Q$.

\smparagraph{Running time of the Algorithm.} The algorithm is clearly polynomial time as in each iteration of $\WF{G, b}$, at least one vertex becomes inactive and this can happen at most $n$ times.

\smparagraph{Analysis of the approximation ratio.} The more challenging part is to prove that this vertex cover $C$ reported by Algorithm~\ref{alg:general} is in expectation at most $(2+\epsilon)$ times larger than the minimum vertex cover of $G_p$. To prove this, in Lemma~\ref{lemma:generalmatching}, we show that it is possible to construct a fractional matching $\b y$ of graph $G_p$ such that 
$(2+\epsilon)\E[|\b y|] \geq \E[|C|]$.
By weak duality, the minimum vertex cover of a graph is larger than any of its fractional matchings; hence, $C$ is a $(2+\epsilon)$-approximate minimum vertex cover. 
 \newcommand{\valuealpha}{0.25\epsilon}
 \begin{lemma} \label{lemma:generalmatching}
There exists a fractional matching $\b y$ of $G_p$ such that $(2+\epsilon)\E[|\b y|] \geq \E[|C|].$	
 \end{lemma}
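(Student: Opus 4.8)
The plan is to trap $\E|C|$ between twice the combined size of the two fractional matchings built inside Algorithm~\ref{alg:general}, and the size of a single fractional matching of $G_p$ of essentially the same value. Write $\b{x}'$ for the (deterministic, given $G$) fractional matching of $G$ produced by the first call $\WF{G,b}$ truncated at time $t=\Theta(\tthreshold)$, so that $x'_e\le t$ for every edge $e$, and write $\b{y}$ for the fractional matching of $Q_p$ produced by the second call $\WF{Q_p,b'}$ with budgets $b'(v)=1-x'_v$. The easy half is the upper bound on $\E|C|$: extend $\b{y}$ by $0$ off $Q_p$ and set $\b{g}:=\b{x}'+\b{y}$. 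Since $\WF{Q_p,b'}$ guarantees $y_v\le b'(v)$, every vertex satisfies $g_v=x'_v+y_v\le x'_v+(1-x'_v)=1$, so $\b{g}$ is a fractional matching of $G$, and $C=\{v:x'_v+y_v=1\}$ is exactly its set of tight vertices. Hence $|C|=\sum_{v\in C}g_v\le\sum_v g_v=2|\b{g}|=2(|\b{x}'|+|\b{y}|)$, and taking expectations (recall $\b{x}'$ is deterministic), $\E|C|\le 2|\b{x}'|+2\,\E|\b{y}|$.

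The substantial half is to produce a fractional matching $\b{f}$ of $G_p$ with $\E|\b{f}|\ge(1-O(\epsilon))\big(|\b{x}'|+\E|\b{y}|\big)$; this $\b{f}$ is the witness claimed by the lemma. Let $Z_v:=\tfrac1p\sum_{e\ni v}x'_e\,\mathbf{1}[e\in G_p]$ and define the pre‑capping weights $h_e:=\tfrac1p x'_e\,\mathbf{1}[e\in G_p]+y_e$ for $e=(u,v)$, so that $h_v=Z_v+y_v$. The point is that the algorithm's choice of budgets makes the two parts fit: $\E[Z_v]=x'_v$ while $y_v\le 1-x'_v$, hence $\E[h_v]\le1$ and, crucially, $\{h_v>1+\epsilon\}\subseteq\{Z_v>x'_v+\epsilon\}$ — an $\epsilon$‑sized upward deviation of $Z_v$ alone. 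Now set $f_e:=\tfrac1{1+\epsilon}h_e$ if both endpoints of $e$ have $h$‑value at most $1+\epsilon$, and $f_e:=0$ otherwise. Then $\b{f}$ is supported on $G_p$ (each summand of $h_e$ forces $e\in G_p$), and at every vertex $w$ either an incident cap is violated, forcing $f_w=0$, or $f_w\le\tfrac1{1+\epsilon}h_w\le1$; so $\b{f}$ is a valid fractional matching of $G_p$.

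Next I would bound $\E|\b{f}|=\tfrac1{1+\epsilon}\big(\sum_e\E[h_e]-L\big)$, where $\sum_e\E[h_e]=\sum_e(x'_e+\E[y_e])=|\b{x}'|+\E|\b{y}|$ and $L$ is the weight lost to capping; it then suffices to show $L=O(\epsilon)|\b{x}'|$. Regrouping the loss by vertices gives $L\le\sum_w\E\big[h_w\,\mathbf{1}[h_w>1+\epsilon]\big]$. Using $\{h_w>1+\epsilon\}\subseteq\{Z_w>x'_w+\epsilon\}$ and $h_w\le Z_w+1$, together with the variance bound $\Var[Z_w]=\tfrac1{p^2}\sum_{e\ni w}(x'_e)^2p(1-p)\le\tfrac{t}{p}x'_w\le\epsilon^3 x'_w$ (this is the one place $t=\Theta(\tthreshold)$ and $x'_e\le t$ enter), Chebyshev's inequality yields $\E\big[h_w\,\mathbf{1}[h_w>1+\epsilon]\big]\le O(\epsilon)x'_w$; summing over $w$ and using $\sum_w x'_w=2|\b{x}'|$ completes this step. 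Combining the two halves, $(2+O(\epsilon))\,\E|\b{f}|\ge 2\big(|\b{x}'|+\E|\b{y}|\big)\ge\E|C|$; running Algorithm~\ref{alg:general} with parameter a small enough constant multiple of $\epsilon$ — which leaves the degree bound $O(1/\tthreshold)$ unchanged — upgrades this to $(2+\epsilon)\,\E|\b{f}|\ge\E|C|$, as required.

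I expect the bound $L=O(\epsilon)|\b{x}'|$ to be the only delicate point. Because $h_e$ contains the term $y_e$, which is correlated with the ``bad'' event at $e$'s endpoints, one cannot extract $\E[h_e]$ directly; the way around it is (i) to regroup the loss into the per‑vertex quantities $\E[h_w\,\mathbf{1}[h_w>1+\epsilon]]$, where only $h_w$'s own randomness matters, and (ii) to exploit the exact relation $y_v\le1-x'_v$ so that a cap violation at $w$ forces an $\Omega(\epsilon)$ deviation of the ``spread‑out'' variable $Z_w$, whose fluctuations are small precisely because the truncation time was chosen so that each edge carries weight $x'_e\le\Theta(\tthreshold)$. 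Everything else — the feasibility checks, Chebyshev, and the final arithmetic — is routine.
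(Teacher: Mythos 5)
Your proposal is correct and follows essentially the same route as the paper: the witness fractional matching is built by scaling the time-$t$ values $x'_e$ up by $1/p$ on realized edges and adding the second water-filling run on $Q_p$, and the loss from capping is controlled per vertex via Chebyshev using $\Var[Z_w]\le \tfrac{t}{p}x'_w$ (which is exactly where $x'_e\le t$ and $t=\Theta(\epsilon^3p)$ enter, as in the paper). The only differences are cosmetic bookkeeping — you sandwich $\E|C|$ between $2(|\b{x}'|+\E|\b{y}|)$ and $(2+O(\epsilon))\E|\b{f}|$ and cap at $1+\epsilon$ with rescaling, whereas the paper compares $|C|$ directly to $\sum_{v\in C}y_v$ via the tightness identity $x'_v+y_v=1$ and zeroes out ``bad'' vertices — and your one glossed step (bounding $\E[Z_w\mathbf{1}[Z_w>x'_w+\epsilon]]$ rather than just the tail probability) is glossed at the same level in the paper's own argument.
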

 
 \begin{proof}
 We start by giving Algorithm~\ref{alg:lemmamatching}  that constructs a fractional matching of $G_p$.

\begin{tboxalg2e}{The algorithm for Lemma~\ref{lemma:generalmatching}.}
\begin{algorithm}[H]
	\DontPrintSemicolon
	\SetAlgoSkip{bigskip}
	\SetAlgoInsideSkip{}
	
	\label{alg:lemmamatching}
		For any edge $e\in S_p$ we set $y_e \gets \frac{x^{(t)}_e}{(1+\alpha) p}$ for $\alpha = \valuealpha$.\;
		
	If a vertex $v$ does not satisfy $\sum_{e \ni v}y_e\geq x^{(t)}_v$ then, for any edge $e \ni v$ set $y_e \gets 0$.\;

	For any $e\in Q_p$ set $y_e \gets x'_e$ where $\b x' \gets \WF{Q_p, b}$ with $b(v) = 1-x^{(t)}_v$ for any $v \in V\setminus F $.\;
	Report $\b y$ as the fractional matching of $G_p$.\;
\end{algorithm}
\end{tboxalg2e}
This algorithm consists of two stages. In the first stage, we construct a fractional matching on edges in $S_p$ and in the second stage, we add the edges of $Q_p$ to this matching. The first stage starts by setting 	$$x_e := \frac{x^{(t)}_e}{(1+\alpha) p}$$ for any edge $e\in S$ that is realized, where $\alpha = \valuealpha$. Let us call a vertex {\em bad} iff it satisfies $x_v:= \sum_{e \ni v}x_e>x^{(t)}_v$.
 For any bad vertex $v$ we decrease $x_e$ of all the edges $e\ni v$ to zero.

In the second stage of the algorithm, we construct a fractional matching on edges in $Q_p$. Consider $x' := \WF{Q_p, b}$ with $b(v) = 1-x^{(t)}_v$ for any $v \in Q$. Clearly, combining $\b x'$ with $\b y$ gives us a valid fractional matching since for any vertex $v\in V$ we have $\sum_{v\in V} x'_v + y_v < 1$. Hence, for any edge in $Q_p$ we set $y_e:= x'_e$. To complete the proof of this lemma, it suffices to take vertex cover $C$ outputted by Algorithm~\ref{alg:general} and  prove 
\begin{align}\label{eq:lfhguieryfg}(1+0.5\epsilon) \E\left[\sum_{v\in C} y_v\right] \geq |C|,\end{align} as it gives us 
$$ (2+\epsilon)\E[|X|] = 2(1+0.5\epsilon)\E[|X|]\geq (1+0.5\epsilon) \E\left[\sum_{v\in V} y_v\right] \geq (1+0.5\epsilon) \E\left[\sum_{v\in C} y_v\right]  \geq |C|.$$
 For any vertex $v\in C$, let $z_v := \sum_{e\ni, v, e\in S}$ and let  $w_v := \sum_{e\ni, v, e\in Q}$.
By Algorithm~\ref{alg:general}, any vertex $v\in C$ satisfies $w_v + x^{(t)}_v =1.$ In the rest of the proof we focus on proving 
\begin{align}\label{eq:perkfrj} \sum_{v\in C} x^{(t)}_v \leq \sum_{v\in C} (1+0.5\epsilon) \E[z_v],\end{align}
since it results in (\ref{eq:lfhguieryfg}) as follows: 
$$|C| = \sum_{v\in C} (w_v +  x^{(t)}_v) \stackrel{(\ref{eq:perkfrj})}{\leq}  \sum_{v\in C} w_v + (1+0.5\epsilon) \E[z_v]\leq \sum_{v\in C}(1+0.5\epsilon) \E[y_v] =(1+0.5\epsilon) \E\left[\sum_{v\in C} y_v\right].$$
To prove (\ref{eq:perkfrj}), let us start by noting that if we set $z_e = \frac{x^{(t)}_e}{(1+\alpha) p}$ for any realized edge in $S$, we get $\sum_{v\in C} x^{(t)}_v = (1+\alpha)\sum_{v\in C}  \E[z_v]$ and proves (\ref{eq:perkfrj}). However, in our algorithm we decrease $z_e$ to zero for the edges around any bad vertex and get $z_v=0$ for any such vertex. Thus, using Chebyshev's inequality, we will show that for any $v$, probability of being bad is small and as a result reducing the fractional value of the edges around these vertices does not affect the expected size of our matching significantly.
  Since for any $v\in S$, we have  $\E[z_v] = x^{(t)}_v/ (1+\alpha)$, we get $$\Pr[z_v > x^{(t)}_v ] =  \Pr[z_v> \E[z_v](1+\alpha)] \leq \Pr[z_v - \E[z_v]> \alpha].$$
 Since any edge gets value of $x^{(t)}_e/(1+\alpha)p$ with probability  $p$ (if realized) and zero otherwise, for any edge $e \ni v$, we have:
  $$\Var[z_e] = \E[z_e^2] - \E[z_e]^2  =  p \left(x^{(t)}_e/(1+\alpha)p\right)^2 - \left(x^{(t)}_e/(1+\alpha)\right)^2 \leq \frac{\left(x^{(t)}_e\right)^2 (1-p)}{p} \leq \frac{t^2}{p}.$$
  Since edges are realized independently, this gives us $$\Var[z_v] = \sum_{e\ni v} \Var[z_v] \leq \frac{1}{t} \frac{t^2}{p} = \frac{t}{p}.$$
   Using Chebyshev's inequality, we have $$\Pr[z_v - \E[z_v]> \alpha] \leq \frac{\var[z_v]}{\alpha^2} \leq \frac{t}{p\alpha^2}.$$
  Now let us investigate the expected size of the fractional matching after decreasing $z_e$ to zero for any edge $e \ni v$ adjacent to a bad vertex $v$.
We have 
\begin{align}\label{eq:jahaheq}\E\left[\sum_{v\in V} z_v \right] \geq \frac{1}{1+\alpha}\sum_{v\in V} x^{(t)}_v(1-\frac{t}{p\alpha^2}).\end{align}
By setting $t= 0.25\alpha^2\epsilon p = \Theta(\tthreshold)$, we get 
 $$\E\left[\sum_{v\in V} z_v \right] \geq \frac{1-0.25\epsilon}{1+0.25\epsilon}\sum_{v\in V} x^{(t)}_v\geq \frac{1}{1+0.5\epsilon}\sum_{v\in V} x^{(t)}_v.$$
 This gives us (\ref{eq:perkfrj}) and completes the proof.
\end{proof}

\subsection{Bipartite Graphs: $(1+\epsilon)$-Approximation with $O_p(1)$ Per-Vertex Queries}

In this section, we will prove the following result:

\begin{theorem}\label{thm:bipartite1-eps}

For any constant $\epsilon > 0$ and constant $p \in (0, 1]$, any bipartite graph $G=(V, E)$ has a constant degree subgraph $Q$ where querying only the edges in $Q$ suffices to find $C \subseteq V$ such that
\begin{enumerate}[itemsep=0pt, topsep=5pt]
		\item $C$ is a vertex cover of $G_p$ with probability 1.
		\item The size of $C$ is in expectation at most $(1+\epsilon)$  times the minimum vertex cover of $G_p$.
		\item It is possible to find $Q$ and $C$ in polynomial time.
	\end{enumerate}
\end{theorem}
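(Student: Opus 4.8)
The plan is to keep the template of Theorem~\ref{thm:bipartitevc} --- pick a bounded-degree subgraph $Q$, query it, and output a minimum vertex cover $C$ of $H:=Q_p\cup S$ (which by König's theorem has size $\mu(H)$, and which is a valid cover of $G_p$ since $G_p\subseteq H$) --- but to push the ratio from $\tfrac{e+1}{e}$ all the way down to $1+\epsilon$ by removing the one lossy step in that argument. Tracing the analysis of Theorem~\ref{thm:bipartitevc}, the entire gap comes from Claim~\ref{cl:matchingQandS}: the $Q$-part of the fractional matching $\b{q}=(\Pr[e\in\mc{M}(H)])_e$ survives only at rate $\tfrac{e-1}{e}$ because Algorithm~\ref{alg:B} has to rebuild $M_{\mc{B}}$ from $\mc{M}(Q_p)$ through a balls-into-bins proposal process. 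If instead we feed into the proof of Lemma~\ref{lem:appvil} a vertex-independent matching that reproduces $\mc{M}(Q_p)$ up to a $(1-\epsilon)$ factor in \emph{both} total size and vertex marginals --- which is exactly the guarantee of the distance-based VIMs of \cite{stoc20,focs20}, rather than the non-adjacency-based Lemma~\ref{lem:vertexindependent} --- then the same patching argument gives $\E[\mu(G_p)]\ge(1-O(\epsilon))\big(\b{q}(Q)+\b{q}(S)\big)=(1-O(\epsilon))|\b{q}|\ge(1-O(\epsilon))\,\E[\mu(H)]$, with no case split needed; by König's theorem this is precisely $\E[\nu(H)]\le(1+O(\epsilon))\,\E[\nu(G_p)]$.

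The price for this swap is that a distance-based VIM only guarantees independence between two vertices that are $\Omega(\polylog\Delta)$ apart in the graph it is run on (here $Q$, with $\Delta$ its maximum degree), whereas the $S$-edge patching in Lemma~\ref{lem:appvil} needs the two endpoints of \emph{every} edge of $S$ to behave independently. So the technical heart of the proof is to strengthen the Half-Stochastic Matching Lemma: besides Properties~\ref{vp-prop:degreeQ}--\ref{vp-prop:prsmallS}, we additionally demand that the endpoints of every $e\in S$ are at distance $\ge\polylog\Delta$ inside $Q$. I would obtain this by iterating the construction of Lemma~\ref{lem:vcpartition}: after producing $(Q,S)$, repeatedly move into $Q$ the bounded-radius ball around any $S$-edge whose endpoints are too close, then re-run the potential/averaging argument of Section~\ref{sec:HSML} on the enlarged subgraph to re-establish the marginal bound $\Pr[e\in\mc{M}(H)]\le\epsilon^2 p$ on the new $S$. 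Every such round enlarges $\Delta$, which enlarges the radius $\polylog\Delta$ that must be cleared, which forces still more edges into $Q$; bounding the number of rounds yields a tower-type (triple-exponential in $1/p$) bound on $\Delta$, and hence on the per-vertex query count. Polynomial running time and validity of $C$ carry over verbatim from Theorem~\ref{thm:bipartitevc}, since every ingredient (the filling/potential constructions, König duality, and the VIM algorithms) is polynomial-time.

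The main obstacle is exactly this iteration: one must show the ``move-the-ball-then-re-optimize'' loop terminates, i.e. that repeatedly enlarging $Q$ to separate short $S$-edges does not keep creating fresh $S$-edges with large marginal. This is the analogue of the ``Case~2 cannot happen'' step in the proof of Lemma~\ref{lem:vcpartition} (Claims~\ref{cl:matchingsdecreasing}--\ref{cl:Dicontinuestocontribute}), but now the averaging bookkeeping has to be carried out with a parameter that degrades from round to round and simultaneously with the geometric blow-up of $\Delta$ --- which is precisely why a triple-exponential, rather than polynomial, query bound is what comes out.
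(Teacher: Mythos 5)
Your proposal is a genuinely different route from the paper's, and it contains a gap that you yourself flag as ``the main obstacle'' but never close. The paper does not touch the HSML/VIM machinery for this theorem at all. Its proof is a short black-box reduction: run the $(1-\delta)$-approximate stochastic matching algorithm of \cite{stoc20} with the very small parameter $\delta = \epsilon p^{2/\epsilon+2}/4$, query the subgraph $Q$ it would query, and output a minimum vertex cover of $Q_p \cup S$. If this cover exceeded $(1+\epsilon)\E[\nu(G_p)]$, then by K\"onig's theorem $\E[\mu(Q_p\cup S)] > (1+\epsilon)\E[\mu(G_p)]$, so the symmetric difference of $\mu(Q_p)$ and $\mu(Q_p\cup S)$ contains $\Omega(\epsilon\,\E[\mu(G_p)])$ augmenting paths of length $O(1/\epsilon)$ (Claim~\ref{cl:tobewritenjef}); each such path uses at most $2/\epsilon+2$ edges of $S$, hence survives in $G_p$ with probability $p^{2/\epsilon+2}$, and applying the surviving ones to $\mu(Q_p)$ would produce a matching of $G_p$ larger than $\mu(G_p)$ in expectation --- a contradiction. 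The triple-exponential query bound is inherited directly from the query complexity of the \cite{stoc20} algorithm at this tiny $\delta$, not from any iterated construction.

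The gap in your plan is the ``move-the-ball-then-re-optimize'' loop. First, there is a conceptual problem: adding edges to $Q$ can only \emph{decrease} distances in $Q$, so you cannot ``separate'' the endpoints of a short $S$-edge by enlarging $Q$; the only coherent reading is that you absorb every offending $S$-edge into $Q$ itself. But then each absorption round can add up to $\Delta^{\polylog\Delta}$ edges per vertex, which forces a fresh run of the potential/averaging argument of Section~\ref{sec:HSML} to restore the marginal bound on the new $S$, which in turn adds more edges to $Q$, raises $\Delta$ and hence the required separation radius, and creates new offending $S$-edges. The potential $\Phi$ controls how many times the \emph{marginal-violation} step can fire, but nothing in your argument controls how many times the \emph{distance-violation} step fires, and no monotone quantity is exhibited for the interleaved process. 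Asserting that ``bounding the number of rounds yields a tower-type bound'' is precisely the statement that needs proof, and as written there is no reason the loop terminates at all. Separately, even granting termination, you would still need to re-derive Lemma~\ref{lem:appvil} with the weaker distance-based independence of \cite{stoc20,focs20} in place of Property~\ref{viprop:independence} of Lemma~\ref{lem:vertexindependent}, which you do not do. The paper's reduction sidesteps all of this.
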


We start by the following lemma. While it is a folklore result, to be self-contained we also provide a proof for it.

\begin{claim}\label{cl:tobewritenjef}
Given $M_1$ and $M_2$, two  random matchings of graph $G$ with $\E[|M_2|] \geq \E[|M_1|]$, let $D$ denote their symmetric difference. For any $\epsilon \in (0, 1)$ graph $D$ contains at least $\E[|M_2|]- \E[|M_1|] - \frac{\epsilon\E[|M_2|]}{2}$ maximal paths of length at most $4/\epsilon+1$ who start and end with edges in $M_2$.
\end{claim}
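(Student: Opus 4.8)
The plan is to prove the statement pointwise — for each fixed outcome of the randomness defining $M_1$ and $M_2$ — and then pass to expectations at the very end via linearity. So fix a realization, and let $D := M_1 \triangle M_2$ be the symmetric difference with the edges common to both matchings dropped; it suffices to show that the number $N$ of maximal paths of $D$ of length at most $4/\epsilon+1$ whose two end-edges lie in $M_2$ satisfies $N \geq |M_2| - |M_1| - \tfrac{\epsilon}{2}|M_2|$ in every realization, since taking expectations then gives the claim.

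First I would invoke the standard structural fact that, because every vertex is incident to at most one edge of $M_1\setminus M_2$ and at most one edge of $M_2 \setminus M_1$, the graph $D$ has maximum degree $2$ and hence decomposes into vertex-disjoint simple paths and simple cycles, each of which alternates between $M_1$-edges and $M_2$-edges. Every cycle has even length with equally many edges of each matching, and a path of even length likewise; a path of odd length has exactly one more edge of one matching than of the other. Call a maximal path of $D$ \emph{augmenting} if it has odd length with both end-edges in $M_2$, and \emph{anti-augmenting} if it has odd length with both end-edges in $M_1$. Summing the $M_2$-minus-$M_1$ edge surplus over all components of $D$ gives the identity $|M_2| - |M_1| = (\#\text{augmenting paths}) - (\#\text{anti-augmenting paths})$, so the number of augmenting paths is at least $|M_2| - |M_1|$.

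Next I would bound how many augmenting paths can be long. An augmenting path of (odd) length $\ell$ contains exactly $(\ell+1)/2$ edges of $M_2$, and if $\ell > 4/\epsilon+1$ this is strictly more than $2/\epsilon$. Since the maximal paths of $D$ are pairwise edge-disjoint and all of their $M_2$-edges belong to $M_2$, the number of augmenting paths of length exceeding $4/\epsilon+1$ is strictly less than $|M_2|/(2/\epsilon) = \tfrac{\epsilon}{2}|M_2|$. Subtracting, the number of augmenting paths of length at most $4/\epsilon+1$ is at least $|M_2| - |M_1| - \tfrac{\epsilon}{2}|M_2|$; and this bound holds trivially in realizations where $|M_1| > |M_2|$, as the right-hand side is then non-positive while $N \geq 0$. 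Taking expectations over the randomness of $M_1$ and $M_2$ then yields $\E[N] \geq \E[|M_2|] - \E[|M_1|] - \tfrac{\epsilon}{2}\E[|M_2|]$, as desired.

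There is no substantive obstacle — this is a folklore counting argument — but two points deserve care. One is the odd/even bookkeeping and the exact placement of the threshold: it must be checked that "length at most $4/\epsilon+1$" is precisely the cutoff that forces every longer augmenting path to use strictly more than $2/\epsilon$ edges of $M_2$, which is what makes the count of long paths come out to $\tfrac{\epsilon}{2}|M_2|$. The other is to note that discarding the common edges is harmless: an edge $e \in M_1\cap M_2$ contributes $0$ to $|M_2|-|M_1|$ and lies on no augmenting or anti-augmenting path, so it affects neither side of the identity used above.
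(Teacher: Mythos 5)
Your proof is correct and follows essentially the same folklore counting argument as the paper: lower-bound the number of augmenting paths by the edge surplus $|M_2|-|M_1|$, then discard the long ones, of which there are fewer than $\tfrac{\epsilon}{2}|M_2|$ since each consumes more than $2/\epsilon$ edges of $M_2$. Your write-up is in fact cleaner than the paper's, which conflates realizations with expectations throughout; your explicit pointwise argument followed by linearity of expectation is the right way to make that step rigorous.
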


\begin{proof}
Any maximal path of graph $D$ that starts and ends with edges in $M_2$ is also called an augmenting paths.
Since $D$ is the symmetric difference of $\E[|M_1|]$ and $\E[|M_2|]$ it has $\E[|M_2|] -\E[|M_1|]$ more edges from $M_2$ compared to $M_1$. As a result, it contains at least $\E[|M_2|] -\E[|M_1|]$ augmenting paths. Also, note that any augmenting paths longer than $4/\epsilon+1$, contains at least $ 2/\epsilon$ edges from $M_2$. Hence, there are most $\E[|M_2|]/(2/\epsilon)$ augmenting paths of length greater than $4/\epsilon+1$. As such, $D$ contains at least $$\E[|M_2|] -\E[|M_1|] - \frac{\epsilon\E[|M_2|]}{2}$$ 
augmenting paths of length at most  $4/\epsilon+1$.
\end{proof}

\begin{proof}[Proof of Theorem~\ref{thm:bipartite1-eps}.] 
	We provide a reduction to approximate stochastic matchings. Suppose that we have a stochastic matching algorithm $\mu_\delta$ that provides a $(1-\delta)$-approximation via $f(\delta, p)$ per-vertex queries -- such algorithm exists as proved in \cite{stoc20} for
	$$
		f(\delta, p) = \exp \Big(\exp\Big(\exp\Big(O(\delta^{-1}) \Big) \times \log\log p^{-1}\Big)\Big),
	$$ 
	and takes polynomial time to run for constant $\delta$ and $p$. We give a $(1+\epsilon)$-approximate stochastic minimum vertex cover algorithm that queries $f(\frac{\epsilon p^{2/\epsilon+2}}{4}, p) = O_{\epsilon, p}(1)$ edges per vertex. 
	
\smparagraph{The Reduction:} For $\delta = \frac{\epsilon p^{2/\epsilon+2}}{4}$, let $Q$ denote the set of edges queried by algorithm $\mu_\delta$ and let $S=E \setminus Q$. We claim that
querying set $Q$ and picking a minimum vertex cover of $(Q_P\cup S)$ is a ($1+\epsilon)$-approximate vertex cover algorithm. Let $\nu_{\epsilon}$ be the described algorithm. First, this clearly gives us a valid vertex cover of $G_p$ as it covers all the realized edges of $Q$ and all the edges that are not queried (i.e., edges in $S$). Thus, to complete the proof we need to show  
\begin{align}\label{eq:liwejw} \E[\nu(Q_P\cup S)] \leq (1+\epsilon) \cdot \E[\nu(G_p)]. \end{align} 
For the sake of contradiction, we assume that this inequality does not hold and show that it implies $\E[|\mu(Q_p)|] < (1-\delta)\E[|\mu(G_p)|]$, contradicting that $\mu_{\delta}$ is a $(1-\delta)$-approximate stochastic matching algorithm. Note that since $G$ is bipartite, the size of its maximum matching and minimum vertex cover are equal by K\"onig's theorem. This implies that if (\ref{eq:liwejw}) does not hold, then we have  
\begin{align}\label{eq:oiwebfjhew}
\E[\mu(Q_P\cup S)] > (1+\epsilon) \E[\mu(G_p)].
\end{align}
 We will show that in this case, matching $\mu(Q_p)$ can be augmented using edges in $S_p$ to a matching whose size is larger than that of $\mu(G_p)$ in expectation, which is a contradiction. Let $D$ denote the symmetric difference of $M_1:=\MM{Q_p}$ and $M_2:=\MM{Q_P\cup S}$ where $\MM{\cdot}$ returns an arbitrary maximum matching. Namely, $D$ contains an edge $e$ if it is in exactly one of these matchings. By Claim~\ref{cl:tobewritenjef},  graph $D$ in expectation contains at least 
 $\E[|M_2|] -\E[|M_1|] - \frac{\epsilon\E[|M_1|]}{2}$
 maximal paths of length at most $ \frac{4}{\epsilon}+1$ which start and end with edges in $M_2$ (i.e., augmenting paths), where 
  \begin{flalign*}\E[|M_2|] -\E[|M_1|] - \frac{\epsilon\E[|M_1|]}{2} &\geq \E[|\mu(Q_P\cup S)|]-\E[|\mu(Q_p)|] - \frac{\epsilon\E[|\mu(Q_p)|]}{2}\\&\geq \E[|\mu(Q_P\cup S)|]-\E[|\mu(G_p)|] - \frac{\epsilon\E[|\mu(G_p)|]}{2} \\ &  \hspace{-1.2mm} \stackrel{(\ref{eq:oiwebfjhew})}{>}  (1+\epsilon)\E[|\mu(G_p)|] -\E[|\mu(G_p)|] - \frac{\epsilon\E[|\mu(G_p)|]}{2} \\ &>  \frac{\epsilon \E[|\mu(G_p)|]}{2}.  \end{flalign*}
 Consider one of these augmenting paths $P$. If all the edges of $P$ are in $G_p$, flipping the membership of its edges in  $M_1$ increases the size of this matching  by one. Note that any of these augmenting paths has at most $ \frac{2}{\epsilon}+2$ edges from $S$ and these edges are realized (are in $G_p$) independently  with probability $p$. As a result, each one of these paths is in $G_p$ with probability at least $p^{ \frac{2}{\epsilon}+2}.$ Since we have at least $\frac{\epsilon \E[|\mu(G_p)|]}{2}$ of these paths, applying all of them on $M_1$ results in increasing its expected size by at least $$p^{ \frac{2}{\epsilon}+2} \cdot \frac{\epsilon \E[\mu(G_p)]}{2}= 2\delta\E[\mu(G_p)].$$
This means that the resulting matching has size at least $$\E|M_1| + 2\delta\E[\mu(G_p)] \geq (1-\delta) \E[\mu(G_p)] + 2\delta\E[\mu(G_p)] > \E[\mu(G_p)],$$ which is a contradiction by  the fact that $\E[\mu(G_p)]$ is an upper bound  for $\E|M_1|$.
\end{proof}

\section{Lower Bounds}\label{sec:lowerbounds}

In this section, we prove several lower bounds for both the stochastic vertex cover problem and also the stochastic matching problem. Below we state these results as a series of theorems, and give their proofs later in the section.

\begin{theorem}\label{theorem:lowekmwj}
For any constant $p<1$, there exist an $n$-vertex bipartite stochastic graph $G$ with realization probability $p$ such that finding an exact minimum vertex cover or an exact maximum matching of $G$ with any constant probability requires querying $\Omega(\frac{n^2}{\log^2 n})$ edges of this graph.
\end{theorem}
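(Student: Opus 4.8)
To prove Theorem~\ref{theorem:lowekmwj} I would set up an information‑theoretic lower bound via Yao's principle: fix a single deterministic base graph $G$ (the only randomness being that of $G_p$), restrict attention to deterministic query algorithms, and show that any such algorithm reading $o(n^2/\log^2 n)$ edges is correct on the value of $\mu(G_p)$ — equivalently $\nu(G_p)$, by König's theorem (Proposition~\ref{prop:konig}), since $G$ is bipartite — with vanishing probability. The design goal for $G$ is the identity $\mu(G_p) = M - Z$, where $M$ is a fixed number determined by $G$ alone and $Z = \sum_{i=1}^{k} Z_i$ is a sum of $k = \Theta(n/\log n)$ mutually independent indicators $Z_i \in \{0,1\}$, each a function of a private, pairwise edge‑disjoint ``block'' $B_i \subseteq E(G)$ of $\Theta(n/\log n)$ edges; moreover each $Z_i$ should be a \emph{parity‑type} statistic of the realization of $B_i$, i.e.\ $\Pr[Z_i=0]$ and $\Pr[Z_i=1]$ are both bounded away from $0$ while \emph{every} unqueried edge of $B_i$ is equally likely to flip $Z_i$. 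Then resolving a single $Z_i$ with any constant advantage costs $\Omega(|B_i|) = \Omega(n/\log n)$ queries; and since the $Z_i$ are independent and the map $(Z_1,\dots,Z_k)\mapsto\sum_i Z_i$ has no ``carry'' that could let an algorithm hide its ignorance, outputting $Z$ exactly with constant probability forces resolving a constant fraction of the $k$ blocks, for a total of $\Omega\!\left(k\cdot n/\log n\right) = \Omega(n^2/\log^2 n)$ queries.

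Concretely the steps are: (1) reduce via Yao (and the averaging over the randomness of $G_p$) to a fixed deterministic algorithm against the $G_p$‑distribution; (2) prove the per‑block statement — an algorithm reading at most a $\tfrac1{100}$‑fraction of $B_i$ computes $Z_i$ correctly with probability at most $\tfrac12 + o(1)$ — by encoding $Z_i$ as the parity of the realized degree of a designated high‑degree vertex of the block, so that conditioned on any partial view missing even one of that vertex's edges, $Z_i$ is an unbiased coin; (3) a counting/anticoncentration lemma showing that, because the $Z_i$ are independent non‑degenerate bits, any algorithm right about $\sum_i Z_i$ with constant probability must be right about $Z_i$ for a $(1-o(1))$‑fraction of indices $i$, with probability bounded away from $0$ for each such $i$ — this is what yields the $\Omega(k)$ ``blocks to solve''; and (4) the embedding: realize $\mu(G_p) = M - \sum_i Z_i$ by attaching to each block a Hall‑tight critical set whose deficiency in $G_p$ equals exactly $Z_i$, while the remainder of $G$ contributes the query‑independent constant $M$ to every optimal matching/cover (this bookkeeping uses König duality together with a Chebyshev bound to argue the ``constant'' part really is essentially constant).

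I expect step (4), together with the choice of the right block gadget, to be the main obstacle, and the reason the bound is $n^2/\log^2 n$ rather than $n^2$. There is a genuine tension: to keep $Z_i$ (hence $\mu(G_p)$) \emph{uncertain} the block must be sparse around its critical set, since a critical set all of whose right‑neighbors have degree $\omega(1)$ is almost surely non‑deficient and makes $\mu(G_p)$ predictable with zero queries; but to force $\Omega(n/\log n)$ queries per block one needs $\Omega(n/\log n)$ ``essential'' edges in the block, which can coexist with the sparsity of the critical set only if those edges \emph{overlap} (e.g.\ all incident to one special vertex whose realized degree is the measured quantity) rather than forming a matching. Reconciling a low‑degree critical set, a high‑degree measured vertex, a clean parity interpretation of its degree, and a clean König certificate for the global matching number is exactly what pins the natural block scale at $\Theta(n/\log n)$ edges on $\Theta(n/\log n)$ vertices and produces the $\Theta(n/\log n)\times\Theta(n/\log n)$ product. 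The remaining ingredients — Yao's principle, the anticoncentration of a sum of independent bits, and the Chebyshev estimate for the fixed part of $\nu(G_p)$ — are routine and parallel the other lower bounds in Section~\ref{sec:lowerbounds}.
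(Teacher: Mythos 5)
Your plan is a genuinely different route from the paper's, but it has a gap at its core, and the specific gadget you propose cannot exist. You want $\mu(G_p)=M-\sum_i Z_i$ with each $Z_i$ a \emph{parity-type} statistic of the realized edges of its block (``the parity of the realized degree of a designated high-degree vertex''). This contradicts monotonicity: $\mu$ is a nondecreasing function of the realized edge set, so realizing one additional edge changes $\mu$ by $0$ or $+1$, whereas it always flips the parity $Z_i$ and would therefore have to change $\mu$ by $\pm 1$, with the $-1$ case occurring for roughly half the configurations. So no bipartite block can make $\mu$ track a parity, and with it falls the claim that every unqueried edge of $B_i$ leaves $Z_i$ an unbiased coin, which is what drove your $\Omega(|B_i|)$ per-block cost. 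You correctly flag step (4) as the main obstacle, but the obstacle is not merely unfinished bookkeeping --- the intended encoding is impossible, and replacing parity by a monotone statistic (a Hall deficiency, a threshold on a realized degree) forces you to re-derive the per-block query lower bound from scratch, since monotone balanced statistics of $m$ independent $p$-biased edges generally do \emph{not} require $\Omega(m)$ queries to predict with constant advantage (e.g., ``is $v$ isolated'' is balanced only when $\deg v=\Theta(\log n)$ and is then resolvable with $O(\log n)$ queries). Your steps (1) and (3) (Yao plus Littlewood--Offord-type anticoncentration for the sum of unresolved independent bits) are fine, but they are scaffolding around a missing center.

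For contrast, the paper's proof locates the hardness elsewhere: it takes the graph $S(N,s,N)$ of Definition~\ref{def:kjierrfjk} with $s=\log_{1-p}(1/N)$, so that with constant probability exactly one vertex on each side of the complete bipartite core $B$ has no realized pendant edge; the exact answer then hinges on a \emph{single} edge of $B$ whose identity is uniformly random over the $N^2=\Omega(n^2/\log^2 n)$ edges of $B$ and hence is unqueried with probability $1-o(1)$ under $o(n^2/\log^2 n)$ queries. The adversary's uncertainty is about \emph{which} edge matters, not about a statistic spread over many edges, which sidesteps the monotonicity obstruction entirely; the paper then amplifies the resulting fixed constant failure probability to an arbitrary constant by taking $O(1)$ disjoint copies (Lemma~\ref{lemma:kjnrjf}), rather than aiming directly for $1-o(1)$ failure as your block-sum argument does. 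If you want to pursue your decomposition, you would need a monotone block statistic with a provable $\Omega(n/\log n)$ query cost --- something closer to ``a uniformly hidden critical edge inside each block'' than to a parity --- at which point you have essentially rediscovered the paper's construction one block at a time.
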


\begin{theorem}\label{theorem:ourfiuer}
	Finding a maximal matching of $n$-vertex stochastic graphs with a constant realization probability $p\in (0, 1)$ requires $\Omega(n \log_b n)$ total queries for $b = \frac{1}{1-p}$.
\end{theorem}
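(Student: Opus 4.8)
I would prove the bound for one concrete instance, $G=K_n$ with realization probability $p$. Since the input distribution (the realization of $G_p$) is fixed, Yao's principle lets me assume the algorithm is deterministic; so suppose its non-adaptive query set $Q$ satisfies $|Q|\le c\,n\log_b n$ for a small absolute constant $c$ (say $c=\tfrac{1}{16}$), and I will show it returns a maximal matching of $G_p$ only with probability $o(1)$. The factor $\log_b n$ enters through the identity $(1-p)^{\log_b n}=1/n$ (valid since $\ln b=-\ln(1-p)$), so that $(1-p)^{k\log_b n}=n^{-k}$.

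\emph{Step 1: a sparse $Q$ leaves polynomially many ``blind'' vertices.} Because $\sum_v\deg_Q(v)=2|Q|\le 2c\,n\log_b n$, at least $n/2$ vertices $v$ have $\deg_Q(v)\le 4c\log_b n$; call this set $L$, and for $v\in L$ let $B_v$ be the event that every queried edge at $v$ is unrealized, so $\Pr[B_v]=(1-p)^{\deg_Q(v)}\ge n^{-4c}$. Then $X:=|\{v\in L:B_v\}|$ satisfies $\E[X]\ge\tfrac{1}{2} n^{1-4c}$. The $B_v$ are positively correlated (two of them share the queried edge joining the two vertices, when present), but a direct second-moment estimate controls this: $\Cov(\mathbf{1}_{B_u},\mathbf{1}_{B_v})=(1-p)^{\deg_Q(u)+\deg_Q(v)}\bigl((1-p)^{-\mathbf{1}[(u,v)\in Q]}-1\bigr)$ vanishes unless $(u,v)\in Q$, whence $\Var[X]\le \E[X]+\tfrac{p}{1-p}\sum_w\deg_Q(w)(1-p)^{\deg_Q(w)}\le n+\tfrac{p}{1-p}\cdot\tfrac{n}{e\ln b}=O(n)$, using $x(1-p)^x\le \tfrac{1}{e\ln b}$. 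With $c=\tfrac{1}{16}$ we get $\E[X]\ge\tfrac{1}{2} n^{3/4}\gg\sqrt{\Var[X]}$, so Chebyshev gives $X\ge\tfrac{1}{4} n^{3/4}$ with probability $1-o(1)$.

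\emph{Step 2: the blind vertices witness non-maximality.} A valid output $M$ must be a matching contained in $G_p$; the only edges certified realized are those of $Q\cap G_p$, and each un-queried edge placed in $M$ must itself turn out realized, which happens with probability $p$ independently of the queried edges. Hence either $M$ uses only $O_p(1)$ un-queried edges, or it already fails with high probability; in the former case the $M$-uncovered set $U$ contains all but $O(1)$ of the $X\ge\tfrac14 n^{3/4}$ blind vertices, so $|U|\ge\tfrac{1}{5} n^{3/4}$ with high probability. The pairs inside $U$ that were \emph{not} queried number at least $\binom{|U|}{2}-|Q|=\Omega(n^{3/2})-O(n\log n)=\Omega(n^{3/2})$, each realized in $G_p$ independently with probability $p$ and independently of everything that determined $U$; so with probability $1-(1-p)^{\Omega(n^{3/2})}=1-o(1)$ one of them is a realized edge of $G_p$ with both endpoints uncovered by $M$, contradicting maximality. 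Combining the $o(1)$-probability failure events shows the algorithm errs with probability $1-o(1)$, so $|Q|=\Omega(n\log_b n)$.

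The step I expect to be the main obstacle is the concentration in Step 1: since the events $B_v$ are not independent, the expectation bound $\E[X]=n^{1-o(1)}$ alone is not enough, and one needs the variance estimate together with the freedom to make $c$ small (so that $\E[X]$ is a large enough polynomial in $n$) for Chebyshev to convert it into a high-probability lower bound on $X$. The rest is bookkeeping — that an un-queried edge in the output is merely a ``bet'' paying off with probability $p$, so the output effectively covers only $O_p(1)$ of the blind vertices, and that a polynomially large uncovered set must span an un-queried, hence freshly random, pair. If one prefers genuine independence, the same argument run separately inside each block of a disjoint union of $\sqrt n$ cliques on $\sqrt n$ vertices gives the identical $\Omega(n\log_b n)$ bound.
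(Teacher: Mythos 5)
Your proof is correct and attacks the same hard instance as the paper ($G=K_n$), with the same core idea: a query budget of $o(n\log_b n)$ forces polynomially many vertices to have no realized queried edge, and such vertices cannot all be left unmatched because they are too numerous to sit inside an independent set of $G_p$. The paper packages the second half by citing that the independence number of $G(n,p)$ is $\Theta(\log_b n)$ and compares expectations, whereas you directly exhibit a realized un-queried edge inside the uncovered set using the fresh randomness of un-queried pairs, and you prove concentration of the number of blind vertices via a second-moment bound rather than stopping at an expectation. Your version is in fact more careful than the paper's in two respects: it handles the possibility that the algorithm outputs un-queried edges as a ``bet'' (the paper implicitly assumes the output matching lies in $Q_p$), and it converts the expectation statements into high-probability ones, which is needed for a clean contradiction. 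Both refinements check out; the variance computation and the $p^K$ argument bounding the number of un-queried output edges are sound.
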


\begin{theorem}\label{theorem:oieoi2owa}
There are absolute constants $p_0, \epsilon_0 \in (0, 1)$ such that for any $p \leq p_0$ and $\epsilon \leq \epsilon_0$, finding a $(1-\epsilon)$-approximate maximum matching for a bipartite stochastic graph $G_p$, requires querying a subgraph of maximum degree $\Omega(\frac{1}{\epsilon p})$.
\end{theorem}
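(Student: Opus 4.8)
A non‑adaptive algorithm commits to a query subgraph $Q\subseteq G$ with $\Delta(Q)\le d$, sees $Q_p:=Q\cap E_p$, and must output a matching $M$ of $G_p$. Since the only edges it is ever certain belong to $G_p$ are the realized queried ones, $M$ is a subgraph of $Q_p$, so $\E|M|\le \E[\mu(Q_p)]$. Hence it suffices to build, for every $d=o(1/(\epsilon p))$, a bipartite graph $G$ on arbitrarily many vertices with
\[
 \E[\mu(Q_p)] < (1-\epsilon)\,\E[\mu(G_p)] \qquad\text{for every }Q\subseteq G\text{ with }\Delta(Q)\le d,
\]
and the theorem then follows by contradiction. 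The plan is to take $G$ to be a disjoint union of many identical gadgets, so that the displayed inequality reduces to a single gadget‑local numerical estimate.

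\textbf{The gadget and the per‑gadget estimate.} First I would try the simplest candidate: $G$ a disjoint union of $n$ stars $K_{1,D}$ with $D=\Theta(1/(\epsilon p))$, i.e.\ $n$ centers each with $D$ private leaves. A star contributes $1$ to a matching exactly when at least one of its edges is present, so $\E[\mu(G_p)]=n\bigl(1-(1-p)^{D}\bigr)$, while any $Q$ with $\Delta(Q)\le d$ touches at most $d$ edges at each center, whence $\E[\mu(Q_p)]\le n\bigl(1-(1-p)^{d}\bigr)$ (balancing the budget equally across stars is optimal since $k\mapsto 1-(1-p)^{k}$ is concave). The whole problem then collapses to verifying
\[
 1-(1-p)^{d} \;<\; (1-\epsilon)\bigl(1-(1-p)^{D}\bigr)
\]
for $d=o(1/(\epsilon p))$ and $p\le p_0,\ \epsilon\le\epsilon_0$. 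I would prove this by choosing $D$ so that $(1-p)^{D}$ falls below a small absolute constant and then lower‑bounding $(1-p)^{d}$, treating the ``small $dp$'' and ``moderate $dp$'' regimes separately, and recording that the conclusion is robust (up to a constant in $d$) to uneven query allocations across gadgets.

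\textbf{Main obstacle: the $\epsilon$‑dependence.} The real difficulty is making the constant‑versus‑$\epsilon$ trade‑off sharp. With the plain star the estimate above only forces $d=\Omega\!\bigl(\tfrac{\log(1/\epsilon)}{p}\bigr)$, because the event ``all $d$ queried edges at a center are unrealized'' has probability $(1-p)^{d}\approx e^{-pd}$, which decays exponentially in $d$. To reach the claimed $\Omega(1/(\epsilon p))$ I would instead engineer a gadget in which an individual edge at the high‑degree vertex is \emph{useful for the optimal matching} only with probability $\Theta(\epsilon p)$ rather than $\Theta(p)$ — so that the relevant survival probability becomes $(1-\epsilon p)^{d}$ and the same computation forces $d=\Omega(1/(\epsilon p))$. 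A natural attempt is to hang from each leaf an auxiliary structure (a short path, or a bundle of $\Theta(\tfrac{\log(1/\epsilon)}{p})$ pendants) that already matches that leaf with probability $1-\Theta(\epsilon)$, so the center's edge matters only on the remaining $\Theta(\epsilon)$‑fraction of realizations. The hard part will be doing this while keeping $\E[\mu(G_p)]$ from being dominated by these auxiliary ``bulk'' edges — otherwise losing the high‑degree vertex's contribution is only an $o(\epsilon)$‑fraction loss and the argument collapses — and while preserving bipartiteness and the clean additive decomposition of $\mu$ over the gadgets; getting all three of these to hold at once is, I expect, the crux of the whole proof.
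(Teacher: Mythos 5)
Your reduction and your diagnosis of the star construction are both correct, and your key idea --- decorate each vertex with a bundle of $\Theta(\log(1/\epsilon)/p)$ pendants so that it is already matched with probability $1-\Theta(\epsilon)$, making each queried ``interesting'' edge useful only with probability $\Theta(\epsilon p)$ --- is exactly the mechanism the paper uses. But the step you flag as the crux is a genuine gap, and the specific gadget you sketch (a star whose leaves carry pendant bundles) does fail for precisely the reason you fear: each gadget's optimal matching has size $\Theta(D)$ coming from the pendant bundles, while the center contributes at most $1$ edge, so an algorithm that ignores the center entirely loses only an $O(1/D)=O(\epsilon p)$ fraction and is already a $(1-\epsilon)$-approximation. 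No choice of $D$ fixes this as long as the ``interesting'' structure is a single high-degree vertex per gadget.

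The paper's resolution (Definition~\ref{def:kjierrfjk} with $B=K_{N,N}$, i.e.\ the graph $S(N,s,N)$ with $s=\lfloor\log_{1-p}10\epsilon\rfloor$) is to make the interesting structure \emph{balanced and dense}: put the pendant bundles on \emph{both} sides of a complete bipartite graph $K_{N,N}$. Each side then has $\Theta(\epsilon N)$ ``surviving'' vertices (no realized pendant edge), and these survivors form a dense random bipartite graph $H_p$ that with constant probability has a near-perfect matching of size $\Theta(\epsilon N)$ --- crucially, with the constant tuned (survival probability $\approx 10\epsilon$, hence $\E[\mu(H_p)]\ge 4\epsilon N$) to exceed the total loss budget $\epsilon\cdot\E[\mu(G_p)]\le 2\epsilon N$ of a $(1-\epsilon)$-approximation (Lemma~\ref{lemma:matchhoiefhk}). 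So the algorithm is forced to match a constant fraction of survivors \emph{to each other}, and since a queried $B$-edge at a survivor lands on another survivor and is realized with probability only $O(\epsilon p)$, each vertex needs $\Omega(1/(\epsilon p))$ queried $B$-edges. In short: the missing idea is not a further decoration of the star but the replacement of the star by a balanced dense core, so that the $\Theta(\epsilon)$-fraction of exposed vertices can pair up among themselves and their contribution is a strictly-larger-than-$\epsilon$ fraction of the optimum rather than the $O(\epsilon p)$ fraction your gadget yields.
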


\begin{theorem}\label{theorem:elijfoew}
	There are absolute constants $p_0, \epsilon_0 \in (0, 1)$ such that for any $p \leq p_0$ and $\epsilon \leq \epsilon_0$, finding a $(1+\epsilon)$-approximate minimum vertex cover for a bipartite stochastic graph $G_p$, requires querying a subgraph of maximum degree $\Omega(\frac{1}{\epsilon p})$.
\end{theorem}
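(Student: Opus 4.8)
The plan is to build a single hard distribution over bipartite stochastic graphs, argue against deterministic non-adaptive algorithms, and then invoke Yao's minimax principle. Throughout I use the observation already exploited in the proof of Theorem~\ref{thm:bipartitevc}: once the query set $Q$ is fixed, the smallest cover the algorithm can safely output is a minimum vertex cover of $H := Q_p\cup S$ with $S := E\setminus Q$, since every edge of $S$ must be covered regardless of whether it is realized. Hence it suffices to produce a base graph $G$ such that, for every $Q$ whose maximum degree is below $c/(\epsilon p)$ for a suitable absolute constant $c$, one has $\E[\nu(Q_p\cup S)]\ge (1+\epsilon)\,\E[\nu(G_p)]$; König's theorem ($\nu=\mu$) will let us track $\nu(Q_p\cup S)$ through the matching induced by the pessimistic completion.

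Construction. $G$ is a disjoint union of many identical gadgets. Each gadget has a single left ``hub'' $w$ of degree $k=\Theta(1/(\epsilon p))$ in $G$, joined to private right-vertices $u_1,\dots,u_k$, and each $u_i$ carries a small private ``anchor'' (for instance $\Theta(\log(1/\epsilon)/p)$ pendant edges, so that $u_i$ acquires a realized anchor edge — and is therefore weakly forced into any minimum cover — with a prescribed probability $q$). The parameters are tuned so that: the expected minimum vertex cover of one gadget is $\Theta(1/\epsilon)$, dominated by its anchored neighbors, so that a single superfluous vertex per gadget is already a $\Theta(\epsilon)$-relative overhead; with constant probability a gadget has \emph{no} realized edge at $w$ to an un-anchored $u_i$, so its true optimal cover pays nothing for $w$; and whenever such a ``dangerous'' edge does exist, the optimal cover must pay one extra unit (either $w$ itself, or the corresponding $u_i$). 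Adding all of $S$ back re-inserts essentially every hub edge the algorithm did not query.

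The crux is that non-adaptive querying forbids the algorithm from first discovering which $u_i$ are un-anchored and then probing exactly those hub edges: to be confident it has not missed a dangerous edge it must query a constant fraction of $w$'s $\Theta(1/(\epsilon p))$ edges, and if it queries only $o(k)$ of them, then on a constant-probability event the only valid near-optimal completion $Q_p\cup S$ forces either $w$ into the cover (wasteful when there is no dangerous edge), or an $\Omega(1/(\epsilon p))$-sized block of the $u_i$'s; either way the expected cover of that gadget exceeds its optimum by a $(1+\Omega(\epsilon))$ factor. Averaging over the many gadgets, and then over the input distribution via Yao's principle, gives the claimed $\Omega(1/(\epsilon p))$ degree lower bound. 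This is the vertex-cover counterpart of Theorem~\ref{theorem:oieoi2owa}, with which it is essentially dual through König's theorem, so the two proofs can share most of the construction.

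The main obstacle is exactly this calibration. The per-gadget optimum must be $\Theta(1/\epsilon)$ — large enough that the relative gap is of the right order, yet small enough that the size-one ``commit the hub'' strategy is genuinely suboptimal on a constant-probability event — and the anchors must provably force their $u_i$ into near-optimal covers with the intended probability while themselves remaining low-degree and hence cheap to resolve. Getting the dangerous-edge probability to be $\Theta(\epsilon p)$ per hub edge (so that $k=\Theta(1/(\epsilon p))$ hub edges are needed before one appears) while keeping the anchored fraction just right, and then arguing that it is \emph{non-adaptivity}, rather than the raw count of realized edges, that forces the extra $1/\epsilon$ factor (an adaptive algorithm would get away with $O(1/p)$ hub queries), is where the real work lies; the remaining estimates are routine binomial-tail and Chebyshev computations of the kind already used in Sections~\ref{sec:HSML} and \ref{sec:vertexindependent}.
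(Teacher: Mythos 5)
There is a genuine gap, and it sits exactly where you park it as ``the real work'': the calibration you need is unattainable with a single-hub gadget. With hub degree $k=\Theta(\frac{1}{\epsilon p})$ and each neighbor $u_i$ anchored with probability $1-\Theta(\epsilon)$ (which is what you need so that a \emph{realized} hub edge to an un-anchored neighbor appears with only constant probability), the expected number of anchored neighbors is $\Theta(k)=\Theta(\frac{1}{\epsilon p})$, so the per-gadget optimum is $\Theta(\frac{1}{\epsilon p})$, not the $\Theta(\frac{1}{\epsilon})$ you require. Meanwhile, the extra cost that unqueried hub edges can ever impose on the pessimistic completion $Q_p\cup S$ is at most $1$ per gadget: all of $w$'s unqueried edges are covered by committing $w$ itself. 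So the gap you can certify is $1+O(\epsilon p)$, not $1+\Omega(\epsilon)$. The tension is not resolvable within your design: shrinking the anchored fraction to bring the optimum down to $\Theta(\frac{1}{\epsilon})$ inflates the un-anchored fraction, so a realized hub edge to an un-anchored neighbor appears with probability $1-o(1)$ and the true optimum also pays for the hub, killing the gap. Any star-shaped ``dangerous'' region is coverable by one vertex, so the penalty per gadget is capped at one unit regardless of how many edges go unqueried.

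The missing idea is that the unqueried dangerous edges must themselves form a subgraph with \emph{no small vertex cover}, i.e., a large (fractional) matching, which forces them to be spread over a near-regular bipartite core rather than concentrated at hubs. This is what the paper's proof does: it takes the $S(d,s,N)$-graph of Definition~\ref{def:kjierrfjk} with a $d$-regular core $B$, $d=\Theta(\frac{1}{\epsilon p})$, and $s$ pendant anchors per core vertex tuned so that a $\Theta(\epsilon)$ fraction of core vertices are un-anchored. Lemma~\ref{lemma:vertexhoiefhk} caps the number of cover vertices a $(1+\epsilon)$-approximation may spend inside the un-anchored core $H$ at $\nu(H_p)+2N\epsilon$, where $\nu(H_p)=O(N\epsilon^2/p\cdot p)=O(N\epsilon^2)$ is bounded by the expected number of realized edges of $H$; the contradiction comes from exhibiting a fractional matching of size $\Omega(N\epsilon)$ on the \emph{unqueried} edges of $H$ whenever each vertex leaves $\Omega(\frac{1}{\epsilon p})$ core edges unqueried, so by weak duality $\nu(H')=\Omega(N\epsilon)$ exceeds the cap. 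Your Lemma-level scaffolding (pessimistic completion, K\"onig duality, the duality with Theorem~\ref{theorem:oieoi2owa}) is the right frame, but the construction has to make the penalty per vertex, not per gadget, of order $\Omega(\epsilon)$, and that requires the regular-core structure.
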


\begin{theorem}\label{theorem:oirwfno3iw}
	Finding a constant approximation of minimum vertex cover of $n$-vertex stochastic bipartite graphs with realization probability $p$ requires $\Omega(\frac{n}{p})$ total queries.
\end{theorem}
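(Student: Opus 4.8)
The plan is to exhibit a single $n$-vertex bipartite base graph $G$ for which every algorithm that queries $o(n/p)$ edges is forced, in expectation, to output a vertex cover that is a super-constant factor larger than $\E[\nu(G_p)]$. Fix a small constant $C=C(\beta)$ to be chosen as a function of the target approximation ratio $\beta$, set $k:=\lceil C/p\rceil$, and (assuming for simplicity $2k\mid n$) let $G$ be the disjoint union of $m:=\frac{n}{2k}$ copies of $K_{k,k}$, with left vertex set $L$. Then $|E(G)|=mk^2=\tfrac{nk}{2}=\Theta(n/p)$ and $\nu(G)=mk=n/2$, since each block contains a perfect matching. The intuition is that each block of $G$ is ``large'' ($\Theta(1/p)$ vertices per side), so a small global query budget cannot learn much inside a typical block, yet each \emph{realized} block is extremely sparse (expected degree $C\ll 1$), so its minimum vertex cover collapses.

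The first step is an easy upper bound on $\E[\nu(G_p)]$. Every edge of $G_p$ has its left endpoint non-isolated in $G_p$, so the set of non-isolated left vertices is a vertex cover of $G_p$; hence $\nu(G_p)\le\#\{v\in L:\deg_{G_p}(v)\ge 1\}$, and by linearity and a union bound $\E[\nu(G_p)]\le\sum_{v\in L}\Pr[\deg_{G_p}(v)\ge1]\le\sum_{v\in L}p\deg_G(v)=p\,|E(G)|=\tfrac{npk}{2}\le\tfrac{n(C+p)}{2}$. So for small $p$ this is about $\tfrac{C}{2}n$, which we will make as small as we wish by shrinking $C$.

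The second and main step is a robustness bound: for any set $Q$ of queried edges with $|Q|\le\epsilon n/p$, one has $\mu(G\setminus Q)\ge\tfrac{n}{2}-\tfrac{|Q|}{k}\ge\bigl(\tfrac12-\tfrac{\epsilon}{C}\bigr)n$. This follows by showing that a single block $K_{k,k}$ with $q_b$ edges deleted retains a matching of size at least $k-q_b/k$ and summing over blocks. For one block I would invoke the deficiency form of Hall's (K\"onig's) theorem: the damaged block has matching number $k-\max_{A\subseteq L}(|A|-|N(A)|)$, where $N(\cdot)$ is taken in the damaged block. If $|A|=a$, a right vertex can leave $N(A)$ only when all $a$ of its edges into $A$ were deleted, and these $a$-element edge-sets are disjoint across right vertices, so at most $q_b/a$ right vertices are lost; a one-line convexity argument over $a\in\{1,\dots,k\}$ bounds the deficiency by $q_b/k$. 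Since the algorithm never learns the realization status of any unqueried edge, for each fixing of its coins and of the query answers there is a realization of $S:=E\setminus Q$ consistent with those answers in which any prescribed $e\in S$ is present; hence the reported set $C$ must cover all of $S$, so $|C|\ge\nu(S)\ge\mu(S)=\mu(G\setminus Q)$ deterministically, giving $\E|C|\ge\bigl(\tfrac12-\tfrac{\epsilon}{C}\bigr)n$.

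Finally, combine the two bounds. If the algorithm is a $\beta$-approximation then $\E|C|\le\beta\,\E[\nu(G_p)]$, which together with the previous two displays forces $\bigl(\tfrac12-\tfrac{\epsilon}{C}\bigr)n\le\beta\cdot\tfrac{n(C+p)}{2}$, i.e.\ $\tfrac12-\tfrac{\epsilon}{C}\le\tfrac{\beta(C+p)}{2}$. Taking $\epsilon:=C/8$ and restricting to $p\le C$, the left-hand side is at least $\tfrac38$ while the right-hand side is at most $\beta C$, so this would require $C\ge\tfrac{3}{8\beta}$. Thus, choosing the constant $C$ strictly below $\tfrac{3}{8\beta}$ at the outset yields a contradiction, so no $\beta$-approximate algorithm can query at most $\epsilon n/p=\Omega_\beta(n/p)$ edges; this is the claimed $\Omega(n/p)$ total-query lower bound, and since only $|Q|$ was used it holds against adaptive algorithms as well. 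The delicate point is the block-robustness estimate together with the calibration of the block size $k\asymp C/p$: if $C$ is too large (e.g.\ $C=1$) then each realized block still has a matching of size $\Theta(k)$, so $\E[\nu(G_p)]=\Theta(n)$ and the trivial zero-query algorithm that outputs one side of $G$ is already a constant-factor approximation, defeating the argument.
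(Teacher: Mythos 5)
Your proof is correct and follows essentially the same route as the paper's: a $\Theta(1/p)$-regular bipartite instance, the upper bound $\E[\nu(G_p)] = O(p\,|E(G)|)$ via non-isolated vertices, and the observation that the output must deterministically cover every unqueried edge. The only real difference is how the cover of the unqueried subgraph is lower-bounded: the paper just counts (each vertex covers at most $d=\Theta(1/p)$ edges, so $\Omega(dn)$ unqueried edges force a cover of size $\Omega(n)$), whereas your Hall-deficiency robustness bound for damaged $K_{k,k}$ blocks is a correct but heavier way to reach the same conclusion.
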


A graph that is particularly useful for our lower bounds is illustrated in Figure~\ref{fig:LB} and  defined formally in Definition~\ref{def:kjierrfjk}.

\begin{figure}[hbt]
  \centering
  \includegraphics{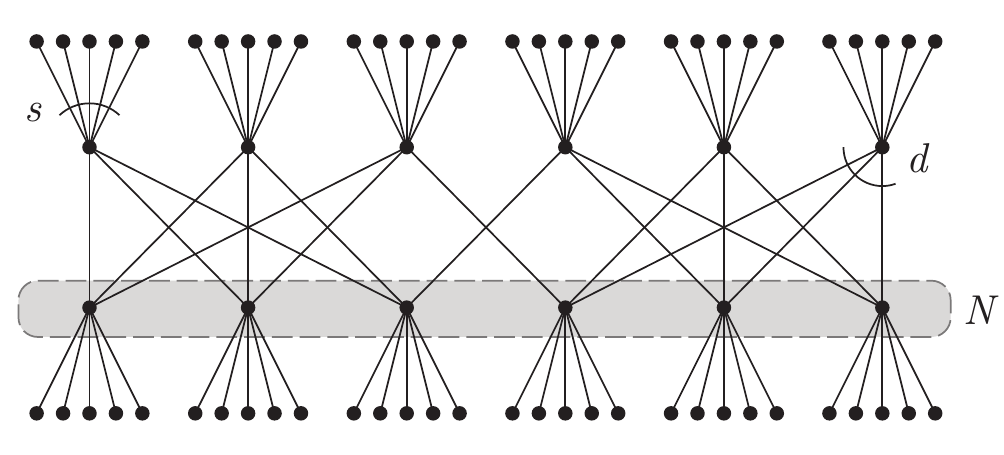}
  \caption{An example of graph $S(d, s, N)$ formalized in Definition~\ref{def:kjierrfjk} for $d = 3, s = 5, $ and $N = 6$.}
  \label{fig:LB}
\end{figure}
\begin{definition}[$S(d, s, N)$-graphs --  Figure~\ref{fig:LB}] \label{def:kjierrfjk} For positive integers $d, s, $ and $N$, an $S(d, s, N)$-graph is defined on $2N(s+1)$ vertices. Of these nodes, $2N$ form an induced $d$-regular bipartite graph $B$, with $N$ nodes in each partition. In addition, each vertex of $B$ is also connected to $s$ vertices outside $B$, each with degree exactly 1. We use $S$ to denote the set of vertices outside $B$.
\end{definition}
In proving our theorems we use,  two essential properties of $S(d, s, N)$-graphs which we state below as Lemma~\ref{lemma:matchhoiefhk} and Lemma~\ref{lemma:vertexhoiefhk}.

\begin{lemma}\label{lemma:matchhoiefhk}

Given a stochastic $S(d, s, N)$-graph $G$ with realization probability $p$, consider its subgraphs  $B$ and $S$ from Definition~\ref{def:kjierrfjk} and let $H$ be the induced subgraph of vertices in $B$ that do not have any realized edges to vertices in $S$. If $M$ is a $(1-\epsilon)$-approximate maximum matching of $G_p$, it contains a matching of expected size at least $|\mu(H_p)| - 2N\epsilon$ from $H$.
	
\end{lemma}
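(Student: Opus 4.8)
The plan is to decompose the approximate matching $M$ according to which core vertices of $B$ receive a realized pendant edge. Fix a realization of $G_p$, and let $W \subseteq V(B)$ denote the set of vertices of $B$ incident to at least one realized edge into $S$; by definition of $H$ we then have $V(H) = V(B) \setminus W$. The key structural observation is that \emph{no realized edge of $G_p$ joins a vertex of $V(H)$ to a vertex of $S$} (such an edge would be a realized pendant edge at a vertex of $V(H)$, which is exactly what the definition of $V(H)$ excludes), and that $S$ is an independent set since each of its vertices has degree $1$ in $G$. Consequently, every edge of $M$ that does not have both endpoints in $V(H)$ — equivalently, every edge of $M$ that is not an edge of $H_p$ — must have an endpoint in $W$. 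Since $M$ is a matching, at most $|W|$ of its edges can touch $W$, and therefore, in every realization,
\[
|M \cap E(H_p)| \;\ge\; |M| - |W|.
\]

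Next I would prove the per-realization lower bound $\mu(G_p) \ge |W| + \mu(H_p)$. For each $w \in W$, pick one realized pendant edge incident to $w$; these edges are pairwise vertex-disjoint (distinct core endpoints, and the pendant endpoints have degree $1$ so are not shared), hence form a matching of size $|W|$ whose vertex set lies in $W \cup S$ and is therefore disjoint from $V(H)$. Unioning this with a maximum matching of $H_p$ gives a matching of $G_p$ of size $|W| + \mu(H_p)$, which proves the bound. Combining this with the previous display, taking expectations, and using the definition of a $(1-\epsilon)$-approximate stochastic matching, namely $\E|M| \ge (1-\epsilon)\,\E[\mu(G_p)]$, we get
\begin{align*}
\E\bigl[\,|M \cap E(H_p)|\,\bigr] &\ge \E|M| - \E|W| \;\ge\; (1-\epsilon)\bigl(\E|W| + \E[\mu(H_p)]\bigr) - \E|W| \\
&= \E[\mu(H_p)] - \epsilon\bigl(\E|W| + \E[\mu(H_p)]\bigr).
\end{align*}

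It remains to bound the error term $\epsilon(\E|W| + \E[\mu(H_p)])$ by $2N\epsilon$. Since $H$ is an induced subgraph of the bipartite graph $B$, it is bipartite, so in every realization $\mu(H_p) \le |V(H)|/2$; together with $|V(H)| + |W| = |V(B)| = 2N$ this yields $\mu(H_p) + |W| \le |V(H)|/2 + |W| = 2N - |V(H)|/2 \le 2N$. Taking expectations gives $\E|W| + \E[\mu(H_p)] \le 2N$, and substituting into the inequality above yields $\E[\,|M \cap E(H_p)|\,] \ge \E[\mu(H_p)] - 2N\epsilon$, i.e.\ $M$ contains a matching of $H$ of expected size at least $\E[\mu(H_p)] - 2N\epsilon$, as claimed.

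The argument is short, and there is no serious obstacle; the one point to get right is the first structural observation — that vertices of $V(H)$ emit no realized edge to $S$ — since this is precisely what lets us charge each edge of $M$ outside $H$ to a distinct vertex of $W$, and it is also what makes the $2N\epsilon$ slack essentially tight (via the bipartiteness bound $\mu(H_p) \le |V(H)|/2$ together with $|V(H)| + |W| = 2N$). One should also note, for consistency of the two uses of $H_p$, that the edges of $M$ with both endpoints in $V(H)$ are exactly realized core edges among $V(H)$, which under the same realization constitute $H_p$.
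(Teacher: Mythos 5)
Your proof is correct and follows essentially the same route as the paper's: you construct the same comparison matching (one realized pendant edge per vertex of $W$ plus a maximum matching of $H_p$), use the same charging argument that every edge of $M$ outside $H_p$ must touch $W$, and bound the slack by $2N\epsilon$ via $|W|+\mu(H_p)\le 2N$. Your write-up is if anything slightly more careful than the paper's, in making the $\le 2N$ step and the passage to expectations explicit.
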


\begin{proof}
Let $M_{apx}$ be an arbitrary $(1-\epsilon)$-approximate maximum matching of $G_p$. To prove this lemma, we will show $$|M_{apx}\cap H| \geq |\mu(H_p)| - 2N\epsilon.$$
We start by giving a lower bound for $|M_{apx}|$. We do so by constructing a matching of graph $G_p$ which we denote by $M_1$. In $M_1$, we match any vertex $v\in B$ that has a realized edge to a vertex in $S$ (i.e., any vertex in $B$ that is not in $H$) using one of these edges. We also add a maximum matching of subgraph $H_p$ to $M_1$. Let  $V_H$ and $V_B$ respectively denote the vertex sets of graphs $H$ and $B$. Observe that we have $|M_1|  = |V_B-V_H| + |\mu(H_p)|$. Knowing that $M_{apx}$ is a $(1-\epsilon)$-approximate maximum matching, gives us:
$$|M_{apx}|\geq  (1-\epsilon)( |V_B| - |V_H| + |\mu(H_p)|) \geq |V_B| - |V_H| + |\mu(H_p)| - \epsilon2N .$$
We complete the proof by noting that the expected size of the maximum matching on $G_p\setminus H_p$ is upper bounded by $|V_B| - |V_H|$ since any edge in $G_p\setminus H_p$ has an endpoint in $V_B\setminus V_H$ and $|V_B\setminus V_H| = |V_B| - |V_H|.$ This completes the proof since it indicates that the remaining $|\mu(H_p)| - \epsilon2N$ edges of matching $M_{apx}$ come from $H_p$.
\end{proof}

\begin{lemma}\label{lemma:vertexhoiefhk}

Given a stochastic $S(d, s, N)$-graph $G$ with realization probability $p$, consider its subgraphs  $B$ and $S$ from Definition~\ref{def:kjierrfjk} and let $H$ be the induced subgraph of vertices in $B$ that do not have any realized edges to vertices in $S$. Any ($1+\epsilon$)-approximate minimum vertex cover of $G_p$, in expectation includes at most $|\nu(H_p)| + 2N\epsilon$ vertices  from $H$.
	
\end{lemma}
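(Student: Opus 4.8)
This will mirror the proof of Lemma~\ref{lemma:matchhoiefhk}, with vertex covers playing the role of matchings. Fix a $(1+\epsilon)$-approximate minimum vertex cover $C_{apx}$ of $G_p$, so that $\E|C_{apx}| \le (1+\epsilon)\E[\nu(G_p)]$, and write $V_B$ and $V_H$ for the vertex sets of $B$ and $H$; recall $|V_B| = 2N$ and that, by definition of $H$, no vertex of $V_H$ has a realized edge into $S$. The goal is to show $\E|C_{apx}\cap V_H| \le \E[\nu(H_p)] + 2N\epsilon$.

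First I would upper bound $\nu(G_p)$. Consider $C_1 := (V_B\setminus V_H)\cup C_H$, where $C_H$ is a minimum vertex cover of $H_p$. Every realized edge of $G_p$ is either internal to $B$ or a $B$-$S$ edge; a realized $B$-$S$ edge has its $B$-endpoint in $V_B\setminus V_H$ (otherwise that endpoint would lie in $V_H$ yet have a realized $S$-edge), while a realized internal edge of $B$ is covered by $V_B\setminus V_H$ unless both its endpoints lie in $V_H$, in which case it belongs to $H_p$ and is covered by $C_H$. Hence $C_1$ is a vertex cover of $G_p$, giving $\nu(G_p) \le |V_B\setminus V_H| + \nu(H_p)$. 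I would also record the trivial bound $\nu(G_p)\le |V_B| = 2N$, since $V_B$ itself is a vertex cover of $G_p$.

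Next I would lower bound $|C_{apx}\setminus V_H|$. Since every vertex of $S$ has degree exactly one in $G$, each realized $B$-$S$ edge has both of its endpoints outside $V_H$ (one in $V_B\setminus V_H$ and one in $S$), so $C_{apx}\setminus V_H$ must contain an endpoint of each such edge, i.e.\ it is a vertex cover of the graph formed by the realized $B$-$S$ edges. That graph is a disjoint union of stars, one centered at each vertex of $V_B\setminus V_H$ (each of which has at least one realized pendant edge to $S$), so its minimum vertex cover has size exactly $|V_B\setminus V_H|$; therefore $|C_{apx}\setminus V_H| \ge |V_B\setminus V_H|$.

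Combining the two steps with the approximation guarantee,
\begin{align*}
\E|C_{apx}\cap V_H| = \E|C_{apx}| - \E|C_{apx}\setminus V_H| &\le (1+\epsilon)\E[\nu(G_p)] - \E|V_B\setminus V_H|\\
&\le \E[\nu(G_p)] + 2N\epsilon - \E|V_B\setminus V_H| \le \E[\nu(H_p)] + 2N\epsilon,
\end{align*}
where the second inequality uses $\nu(G_p)\le 2N$ and the third uses $\nu(G_p)\le |V_B\setminus V_H| + \nu(H_p)$. I expect the crux to be the lower bound on $|C_{apx}\setminus V_H|$: one has to observe that precisely because the $S$-vertices are degree-one, any vertex cover of $G_p$ must spend at least $|V_B\setminus V_H|$ vertices outside $V_H$, and this is exactly the quantity that cancels the corresponding slack in the upper bound on $\nu(G_p)$, leaving only the $2N\epsilon$ loss coming from the approximation factor.
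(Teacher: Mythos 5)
Your proof is correct and follows essentially the same route as the paper's: upper-bound $\nu(G_p)$ by the cover $(V_B\setminus V_H)\cup C_H$, lower-bound the number of cover vertices spent outside $V_H$ by $|V_B\setminus V_H|$ using the degree-one $S$-vertices, and subtract. Your disjoint-stars formulation of the second step is in fact slightly more careful than the paper's (which loosely counts cover vertices \emph{in} $V_B\setminus V_H$ rather than cover vertices outside $V_H$, ignoring that a cover may use the $S$-endpoints instead), but the underlying idea is identical.
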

\begin{proof}
Let $\nu_{apx}$ be an arbitrary $(1+\epsilon)$-approximate minimum vertex cover of  $G_p$. To prove this lemma, we will show $$|\nu_{apx}\cap H_p| \leq \nu(H_p) +2N\epsilon.$$ 
We start by giving an upper bound for $|\nu_{apx}|$. We do so by constructing a vertex cover of graph $G_p$ which we denote by $\nu_1$. This vertex cover includes any vertex $v\in B$ that has a realized edge to a vertex in $S$ (i.e., any vertex in $B$ that is not in $H$). This covers all the edges that are not in $H$. Thus, we complete $\nu_1$ by adding a minimum vertex cover of $H_p$ to it.
 Let  $V_H$ and $V_B$ respectively denote the vertex sets of graphs $H$ and $B$. Observe that we have $|\nu_1|  = |V_B-V_H| + |\nu(H_p)|$. As a result of $\nu_{apx}$ being a $(1+\epsilon)$-approximate minimum vertex cover we get:
$$|\nu_{apx}| \leq   (1+\epsilon)( |V_B| - |V_H| + |\nu(H_p)|) \leq |V_B| - |V_H| + |\nu(H_p)| + \epsilon2N .$$
We complete the proof by noting that $|V_B| - |V_H|$ is a lower bound for the number of vertices in $V_B\setminus V_H$ that are in vertex cover since any vertex in $V_B\setminus V_H$  is connected to at least a degree-one vertex.
This concludes the proof as it indicates that at most $|\mu(H_p)| + \epsilon2N$ vertices of $\nu_{apx}$ can come from $H_p$.
\end{proof}

\subsection{Proof of Theorem~\ref{theorem:lowekmwj}}
We start the proof by the following Lemma.
\begin{lemma} \label{lemma:kjnrjf}
For any constant $p<1$, there exist an $n$-vertex bipartite stochastic graph $G$ with realization probability $p$ and a constant $c\geq 0$ such that finding an exact maximum matching or an exact minimum vertex cover of $G$ with probability at least $c$ requires querying $\Omega(\frac{n^2}{\log^2 n})$ edges of this graph.
\end{lemma}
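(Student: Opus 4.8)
The plan is to exhibit a dense bipartite base graph $G$ — for concreteness an $S(d,s,N)$-graph from Definition~\ref{def:kjierrfjk} with $d$ and $s$ tuned so that $G$ has $\Theta(n^2/\log^2 n)$ edges while a non-negligible share of the randomness of $G_p$ is re-exposed in the residual core $H$ of Lemma~\ref{lemma:matchhoiefhk} and Lemma~\ref{lemma:vertexhoiefhk} — and to argue that no query set of size $o(n^2/\log^2 n)$ leaves the exact value of $\mu(G_p)=\nu(G_p)$, or an optimal matching/cover, determined with probability $c$. Setting $\epsilon=0$ in those two lemmas, an exact maximum matching of $G_p$ must induce a maximum matching of $H_p$ on $H$, and an exact minimum vertex cover of $G_p$ must induce a minimum vertex cover of $H_p$ on $H$; so it suffices to make exactly solving the residual problem on $H_p$ cost $\Omega(n^2/\log^2 n)$ queries. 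Since we only care about non-adaptive algorithms, by averaging over the algorithm's internal randomness we may fix its query set $Q$ in advance, and the whole argument becomes a ``fooling'' statement about a fixed $Q$.

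To flesh this out I would: (i) condition on the realization of the ``cheap'' edges of $G$ (in the $S(d,s,N)$ instance, the degree-one pendant edges), which fixes which $B$-vertices land in $V_H$; (ii) show that with probability at least $c$ over this conditioning, $H$ is large enough that $Q$ misses $\Omega(n^2/\log^2 n)$ of its edges, and that the realizations of the missed edges are independent of everything the algorithm has learned; (iii) decompose a constant fraction of $H$ — using only missed edges — into vertex-disjoint small ``fragile'' pieces whose contributions to $\mu(H_p)$ are mutually independent nontrivial random integers and each of which cannot be pinned down without querying a constant fraction of its edges; and (iv) sum the per-piece obstructions and union-bound over the $O(1)$-probability bad conditionings to conclude that the reported answer is wrong with probability at least $c$. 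For the minimum-vertex-cover version I would additionally invoke the observation that a valid cover must cover \emph{every} unqueried edge of $G$ (worst case over its realization), so that a missed critical pair inside a fragile piece forces the reported cover to strictly exceed $\nu(G_p)$; this lets the same decomposition drive both problems. Finally, taking many disjoint copies of the instance and noting that an exact solver must solve every copy amplifies the constant $c$ to the arbitrary constant success probability claimed in Theorem~\ref{theorem:lowekmwj}.

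The main obstacle is step (iii): reconciling the density needed for the $n^2/\log^2 n$ in the bound with the fragility needed for query-hardness. A dense random bipartite graph has a near-perfect matching with overwhelming probability, so its matching number is concentrated and can be read off from only $\widetilde{O}(n)$ queries unless the core is engineered to embed many mutually independent local subproblems whose exact answers each depend on an $\Omega(1)$-fraction of their local edges \emph{and} which survive inside the random residual $H$ with good probability. Designing such an embedding — so that the pieces simultaneously tile a constant fraction of $H$, have independent random contributions to $\mu(H_p)$, and are individually query-rigid — is the technical heart of the proof; once it is in place, the rest is a routine entropy/fooling calculation combined with the two reduction lemmas and the disjoint-copies amplification.
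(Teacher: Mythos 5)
Your scaffolding matches the paper's — the $S(d,s,N)$ gadget, the reduction via Lemmas~\ref{lemma:matchhoiefhk} and~\ref{lemma:vertexhoiefhk}, conditioning on the pendant edges to determine $V_H$, and the disjoint-copies amplification for Theorem~\ref{theorem:lowekmwj} — but the step you yourself flag as "the technical heart" (your step (iii)) is exactly the step you have not supplied, and the direction you are heading there is not how the $\Omega(n^2/\log^2 n)$ bound actually arises. You are trying to tile a constant fraction of $H$ with many mutually independent, individually query-rigid gadgets and then sum per-gadget costs. The obstacle you correctly identify — that a dense random bipartite core has a concentrated, easily certified matching number — is fatal to that plan as stated, and the paper does not overcome it; it sidesteps it entirely by making $H$ tiny rather than a constant fraction of $B$.

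Concretely, the paper takes $d=N$ (so $B$ is the complete bipartite graph $K_{N,N}$, giving the required $N^2=\Omega(n^2/\log^2 n)$ edge count) and $s=\log_{1-p}(1/N)$, so each core vertex survives into $H$ independently with probability exactly $(1-p)^s=1/N$. With probability at least $(N\cdot\frac1N(1-\frac1N)^{N-1})^2\ge e^{-2}$, \emph{exactly one} vertex survives on each side, and the entire residual problem collapses to a single stochastic edge $(u,v)$ sitting at an essentially uniformly random position among the $N^2$ edges of $B$. A non-adaptive query set of size $o(N^2)$ misses that position with probability $1-o(1)$, and then the exact answer hinges on an unqueried $p$-coin: if $(u,v)$ is realized but unqueried the reported matching cannot be maximum, and if it is unrealized but unqueried the reported cover must still pay for it and cannot be minimum. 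So the hardness is a needle-in-a-haystack statement about \emph{where} one fragile edge is, not a sum of per-piece query costs — and since the lemma only demands some constant $c>0$, a single such piece occurring with probability $e^{-2}$ suffices; no tiling, independence across pieces, or entropy accounting is needed. To repair your write-up, replace step (iii) with this choice of parameters and the one-survivor-per-side event; the rest of your outline then goes through.
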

\begin{proof}
To prove this lemma, we start by an $n$-vertex graph $G$, a $S(N, s, N)$-graph for $s=\log_{1-p} 1/N$. We then prove the existence of a constant $c$ for which the statement of the lemma holds. Particularly, we will show that it is not possible to find an exact maximum matching/minimum vertex cover of $G_p$ with probability at least $c$ using only $o(n^2/\log^2 n)$ queries. 

Consider subgraphs  $B$ and $S$  of $G$ from Definition~\ref{def:kjierrfjk} and let $H$ be the induced subgraph of vertices in $B$ that do not have any realized edges to vertices in $S$. Based on Lemma~\ref{lemma:matchhoiefhk}, any maximum matching of $G_p$ should include a maximum matching of $H_p$.  Also, based on Lemma~\ref{lemma:vertexhoiefhk}, any minimum vertex cover of $G_p$, includes exactly $|\nu(H_p)|$ vertices from $H_p$. Thus, to be able to find an exact minimum vertex cover/maximum matching of $G_p$ we need to find a minimum vertex cover/maximum matching of $H_p$. Thus, we will focus on graph $H_p$.

Since $s = \Theta (\log N)$, we have $n = \Theta(N\log N)$ and $N = \Omega(n/\log n)$ (based on Definition~\ref{def:kjierrfjk}). This implies that if at most $o(n^2/\log^2 n)$ edges are queried from $G$, then any edge $e$ chosen uniformly at random from $B$ (defined in Definition~\ref{def:kjierrfjk}) is queried with probability $o(1)$ (since $B$ has $N^2 = \Omega(n^2/\log^2 n)$ edges). Therefore, it suffices to show that if any random edge from $B$ is queried with probability $o(1)$, then it is not possible to find an exact maximum matching/minimum vertex cover of $H_p$ with probability at least $c$ for a constant $c$.
 
 We claim that with a constant probability subgraph $H_p$ has only one vertex in each part. Let us denote this event by $E$ and compute its probability. Since each vertex $v \in B$ is in $H$ independently with probability $(1-p)^s = 1/N$, we have
   $$ \Pr[E] = (N (1/N) (1-1/N)^{N-1} )^2 \geq 1/e^2.$$
 When event $E$ happens; i.e., $H$  has only one vertex in each part, to be able to find its maximum matching, if edge $(u,v)$ is realized it also should be queried. This edge is realized with probability $p$, however, as discussed above, probability of this edge being queried is $o(1)$.  This implies that with probability at least $1/e^2 (p- o(1))$, which is a constant, the queried edges do not contain a maximum matching of $G_p$. Similarly, if edge $(u,v)$ is not queried, to have a valid vertex cover, we need to put one of its endpoints in the vertex cover, however, this vertex cover is not minimum if edge $(u,v)$ is not realized. This event happens with probability $1/e^2(1-p-o(1))$ which is again a constant. As a result, setting $c= min(1/e^2(1-p-o(1)), 1/e^2(p- o(1)))$ completes the proof.
\end{proof}

Given Lemma~\ref{lemma:kjnrjf}, we are now ready to prove Theorem~\ref{theorem:lowekmwj}.

\begin{proof}[Proof of Theorem~\ref{theorem:lowekmwj}]
We use proof by contradiction. We first assume the existence of a constant $\alpha$ such that given any $n$-vertex stochastic graph $G$, it is possible to find an exact minimum vertex cover/maximum matching of this graph with probability at least $\alpha$ using only $o(n^2/\log^2 n)$ queries. We then show that this results in a contradiction. Based on Lemma~\ref{lemma:kjnrjf}, for any $N$ there exists an $N$-vertex bipartite stochastic graph $G'$ with realization probability $p$ and a constant $c\geq 0$ such that finding an exact minimum vertex cover or an exact maximum matching of $G'$ with probability at least $c$ requires $\Omega(N^2/\log^2 N)$ queries. Define graph $G$ to include $s=\lceil\log_{1-c} (1-\alpha) \rceil$ copies of $G'$ with $N=n/s$. Since $n=\Theta(N)$, querying $o(\frac{n^2}{\log^2 n})$ edges of $G$, means queries  $o(\frac{N^2}{\log^2 N})$ edges of each subgraph. Observe that finding an exact maximum matching/minimum vertex cover of $G$ means finding  an exact maximum matching/minimum vertex cover of all these $s$ subgraphs.  This implies that the queried edges of $G$ contain an exact maximum matching/minimum vertex cover of $G_p$ with probability at most $1- (1-c)^s \leq 1-(1-\alpha) \leq \alpha$. We conclude the proof of this theorem by noting that this is a contradiction with the initial assumption.
\end{proof}

\subsection{Proof of Theorem~\ref{theorem:ourfiuer}}
\begin{proof}[Proof of Theorem~\ref{theorem:ourfiuer}]
Consider an $n$-vertex clique $G$ with realization probability $p$. It is known that for any constant $p\in (0,1)$, the expected size of the maximum independent set in $G_p$  is $\Theta(\log_b n)$ for $b=1/(1-p)$. (See Theorem 7.3 in Book~\cite{frieze2016introduction}.) This is useful for giving a lower bound for the size of any maximal matching $M$ of $G_p$ since vertices that are not  in $M$ form an independent set. This implies $2|M|\geq n - |\text{MIS}(G_p)|$ where $\text{MIS}(G_p)$ is the maximum independent set of $G_p$. Let $Q$ be the subgraph we choose to query. If $Q_p$ contains a maximal matching then its maximum independent set  should not be larger than that of $G_p$. Therefore, to complete the proof, it suffices to show that if \begin{equation} \label{eq:lijfhhra}|Q_p| = o(n \log_b n) < \frac{n \log_b n}{4},\end{equation}
 then $\E[|\text{MIS}(Q_p)|] = \omega(\log_b n)$. To prove this, we will show that the expected number of singleton vertices in $Q_p$ is $\omega(\log_b n)$.
Note that based on (\ref{eq:lijfhhra}), at least half of vertices have degree at most $\frac{\log_b n}{2}$ in $Q$. For any such vertex $v$ we have:
$$\Pr[v \text{ is singleton in } Q_p] \geq (1-p)^{\frac{\log_b n}{2}} = (1/b)^{\frac{\log_b n}{2}} = (\frac{1}{n})^\frac{1}{2}.$$ For any constant $p\in (0,1)$, this gives us:
$$\E[|\text{MIS}(Q_p)|] \geq \E[\text{number of singleton vertices in } Q_p] \geq \frac{n}{2}\cdot (\frac{1}{n})^\frac{1}{2} = \frac{\sqrt{n}}{2} = \omega (\log_b n),$$
and completes the proof.\end{proof}

\subsection{Proof of Theorem~\ref{theorem:oieoi2owa}}
\begin{proof}[Proof of Theorem~\ref{theorem:oieoi2owa}]
Consider the bipartite graph $G = S(N, s, N)$ for $s=\lfloor \log_{1-p} 10\epsilon  \rfloor$ (given that $10\epsilon \leq 1-p$ holds for small enough values of $\epsilon$ and $p$), and let $n$ be the total number of vertices in $G$.  To prove this theorem, we show that finding a $(1-\epsilon)$-approximate maximum matching of $G_p$ requires $\Omega(\frac{1}{\epsilon p})$ queries for a vertex. Consider subgraphs  $B$ and $S$  of $G$ from Definition~\ref{def:kjierrfjk} and let $H$ be the induced subgraph of vertices in $B$ that do not have any realized edges to vertices in $S$. Based on Lemma~\ref{lemma:matchhoiefhk}, any  ($1-\epsilon$)-approximate maximum matching of $G_p$, includes a matching of expected size at least $|\mu(H_p)| - 2N\epsilon$ from $H$. Let $X_1$ and $X_2$ with be the set of vertices in two parts of the bipartite graph $H$ with $|X_1|\leq |X_2|$.
Note that the expected number of vertices in each part is 
$$(1-p)^{\lfloor \log_{1-p} 10\epsilon  \rfloor} N \geq 10\epsilon N;$$
thus, $\E[|X_1|] \geq 5\epsilon N$.  
It is well-known that such random graph $H_p$ that has an edge between any two vertices in $X_1$ and $X_2$ with a constant probability $p$ has a matching of size $|X_1|$ with high probability. However, for the sake of this proof we only use the fact that a matching of size $|X_1|$ exists with probability at least $\frac{4}{5}$. This gives us $  \E[|\mu{(H_p)}|] \geq 4\epsilon N$.  As a result, to find a matching of size at least $|\mu{(H_p)}| - 2\epsilon N$ in $H_p$, any randomly chosen vertex $v\in X_1$ should have at least one queried edge in $H_p$ with probability at least $\frac{1}{2}$. Let $g_v$ be the expected number of edges queried for vertex $v$. Since the other endpoint of any of these queried edges is in $H_p$ with probability  $$(1-p)^{\lfloor \log_{1-p} 10\epsilon  \rfloor} \leq 10\epsilon/(1-p),$$ and $e$ is realized with probability $p$, we should have $\frac{10\epsilon pg_v}{1-p} \geq 1/2$ which for any $p\leq 0.99$ gives us $g_v \geq \frac{1-p}{20\epsilon p} = \Omega(\frac{1}{\epsilon p})$ and concludes the proof.
\end{proof}

\subsection{Proof of Theorem~\ref{theorem:elijfoew}}
\begin{proof}[Proof of Theorem~\ref{theorem:elijfoew}]
\newcommand{\avar}[0]{96}
\newcommand{\bvar}[0]{2^{15}}
\newcommand{\cvar}[0]{2^{16}}
Consider the bipartite graph $G = S(d, s, N)$ for $$s=\lfloor \log_{1-p} \avar \epsilon \rfloor  \quad \quad \text{ and } \quad \quad d = 1/(\bvar \epsilon p),$$ and let $n$ be the total number of vertices in $G$. 
To prove this theorem, we show that it is not possible to find a $(1+\epsilon)$-approximate minimum vertex cover of $G_p$ using only $o(1/\epsilon p)$ queries per vertex. 
 We use proof by contradiction. We assume the existence of a $(1+\epsilon)$-approximate minimum vertex $\nu_{apx}$, achieved by querying only $o(1/\epsilon p)$ edges per vertex and show that it results in a contradiction.  Consider subgraphs  $B$ and $S$  of $G$ from Definition~\ref{def:kjierrfjk}, and let $H$ be the induced subgraph of vertices in $B$ that do not have any realized edges to vertices in $S$. Based on Lemma~\ref{lemma:vertexhoiefhk}, any  ($1+\epsilon$)-approximate minimum vertex cover of $G_p$, includes at most $|\nu(H_p)| + 2N\epsilon$ vertices from $H$, which means $$|\nu_{apx} \cap H| \leq|\nu(H_p)| + 2N\epsilon.$$ Let us define a subgraph $H'$ of $H$ to contain all the edges of $H$ that are not queried. To be a valid vertex cover, $\nu_{apx}$  should include a minimum vertex cover of $H'$; i.e., $$|\nu(H')| \leq |\nu_{apx} \cap H|.$$  Thus, to achieve a contradiction, it suffices to prove that the following equation does not hold if any vertex has $d'=1/(\cvar \epsilon p) +1$  edges in $B$ that are not queried: \begin{equation}|\nu(H')| \leq |\nu(H_p)| + 2N\epsilon. \label{eq:kirfbjher}\end{equation}
Let us assume w.o.l.g., that exactly $d'$ edges are not queried for any vertex in $B$ as it only decreases the size of the minimum vertex cover. We start by giving a lower bound for $|\nu(H')|$. We do so by constructing a fractional matching $M$  on $H'$. Since by weak duality, the minimum vertex cover of a graph is at least as large as any fractional matching of the graph, this will also be a lower bound for $|\nu(H')|$.  Define $x:=(d'-1)(1-p)^s+1$.
 For any edge $e$, let $m_e$ be the fractional value we assign to edge $e$ in matching $M$. We set $m_e := 1/x$  for any edge $e\in H'_p$ iff both its endpoints have degree at most $x$ in $H'_p$ and zero otherwise. $M$ is obviously a valid fractional matching since the sum of values assigned to the edges around any vertex is at most one. 
 Also, for any edge $e = (u, v)\in B$, we have \begin{equation}\label{eq:rfkjkr} \E[m_e] =  \Pr[e\in H']\cdot\Pr[d_{v, H'}\leq x \;|\; e\in H']\cdot\Pr[d_{u, H'}\leq x \;|\; e\in H'],\end{equation} where $d_{v, H'}$ is the degree of vertex $v$ in graph $ H'$.
 To compute $\Pr[e\in H']$, note that each endpoint of this edge is in $H$ with probability $(1-p)^s$. Combining this with $s=\lfloor \log_{1-p} \avar \epsilon \rfloor$ gives us 
 \begin{equation}\label{eq:ofnrfg}(\avar\epsilon)^2 \leq \Pr[e\in H'].\end{equation}
 Consider any vertex $v\in H'$ and any of its edges $e\in H'$, and let $d_{v, e, H'}$ denote the degree of vertex $v$ in  $H'\setminus \{e\}$.
 Since $v$ has $d'-1$ edges in $B \setminus \{e\}$, we have 
  $\E[ d_{v, e, H'}]= (d'-1)(1-p)^s = x-1$. Moreover, we claim that as a result of  $d_{v, e, H'}$ being sum of independent Bernoulli random variables, we have $\Pr[ d_{v, e, H'} > x-1] < 0.75$. This can be achieved using a simple application of Chernoff bound, and implies
 $$\Pr[ d_{v, H'} \leq x \;|\; e\in H'] \geq \frac{1}{4}.$$ Combining this with (\ref{eq:ofnrfg}) and (\ref{eq:rfkjkr}) gives us:
$$|M| \geq Nd' \frac{1}{x}(\avar \epsilon)^2 \frac{1}{16} \geq \frac{N\avar^2\epsilon^2}{\cvar \epsilon p x 16}.$$
Observe that for a small enough $p$, we have $$x= (d'-1)(1-p)^s+1 \leq  \frac{\avar}{\cvar p (1-p)} +1 \leq  \frac{2\cdot \avar}{\cvar p (1-p)}.$$ This implies:
\begin{equation}|M| \geq \frac{N\avar^2\epsilon^2}{16 \cdot \cvar \epsilon p (1-p)}\frac{\cvar p(1-p)}{2\cdot \avar} \geq \frac{N\avar\epsilon}{32}.\end{equation}
After finding a lower bound for $|M|$ which is also a lower bound for $|\nu(H')|$ we need to find an upper bound for $|\nu(H_p)|$. Recall that  $H$ is the induced subgraph of vertices in $B$ that do not have any realized edges to vertices in $S$. To give an upper bound for $|\nu(H_p)|$, we start by computing the expected number of edges in this graph which we denote by $|H_p|$. Each edge $e\in B$ is in $H_p$ with probability $p(1-p)^2s \leq p \avar^2 \epsilon^2/(1-p)^2$. Since $B$ has $N/(\bvar \epsilon p)$ edges, for a large enough $(1-p)$ we get $$|H_p| \leq \frac{N\avar^2\epsilon}{\bvar(1-p)^2}\leq \frac{2N\avar^2\epsilon}{\bvar}.$$ This is also an upper bound for $|\nu(H_p)|$ since minimum vertex cover is not larger than the total number of edges.
Based on (\ref{eq:kirfbjher}), to complete the proof we need to show that 
$$\frac{N\avar\epsilon}{32} \geq \frac{2N\avar^2\epsilon}{\bvar} + 2N\epsilon.$$
If we simplify both sides of the equation we get 
$$\frac{N\avar\epsilon}{32} = 3N\epsilon \geq 2.6 N\epsilon \geq \frac{2N\avar^2\epsilon}{\bvar} \geq + 2N\epsilon,$$
 which concludes the proof.
\end{proof}

\subsection{Proof of Theorem~\ref{theorem:oirwfno3iw}}
\begin{proof}[Proof of Theorem~\ref{theorem:oirwfno3iw}]
Suppose we want to get a $c$-approximate vertex cover for a $c>1$. For any $p<\frac{1}{3c}$, we construct a graph $G$ for which finding such an approximation needs at least $\Omega(n/p)$ queries. Graph $G$ is an arbitrary $d$-regular bipartite graph with $d =  \frac{1}{3cp}$.  (Here we assume for simplicity that $d$ is an integer.) In $G_p$, any vertex $v$ is a singleton with probability at least $1-dp$ which means $dnp/2$ is an upper bound for the expected size of the vertex cover. 

On the other hand, if we do not query an edge, we have to put one of its endpoints in the vertex cover. We claim that to get a $c$-approximate at least half of the edges ($dn/4$) should be queried. Since any vertex can cover at most $d$ edges, not querying  $dn/4$ edges results in a vertex cover of size at least $n/4$. Thus, if at most $dn/4$ edges are queried, the approximation factor of the algorithm would be at least $\frac{n/4}{dnp/2} = \frac{1}{2dp} = \frac{3c}{2} \geq c$. To conclude, to be able to get a $c$-approximate at least $\frac{n}{6cp} = \Omega(n/cp)$ (half the number of edges) queries are needed.
\end{proof}

\subsection{Why Random Queries Do Not Work}\label{sec:randomqueries}
One simple approach that, in the first sight, might seem appealing  for finding  approximate vertex covers of a stochastic graph $G$ is simply picking a random subset of edges of any vertex to form a subgraph $Q$, query these edges and find the vertex cover based on that. In this section, we will show that this approach, in fact, does not have a good performance for some instances of the problem. Namely, we show that there exists a graph $G$ such that querying a subgraph $Q$ of $G$ obtained via sampling $s= o(n)$ edges per any vertex does not give us better than a $(\frac{1}{p} - o(1))$-approximate vertex cover of $G_p$. Note that to find a valid vertex cover, any edge in $G$ that is not queried should be covered; i.e., edges in $S := G \setminus Q$. Therefore, the final vertex cover found via this sampling technique has size at least $|\nu(S \cup Q_p)|$ in expectation.  

We construct an $n$-vertex bipartite graph $G$ as follows. For a number $N$ with $N=o(n)$ and $N=\omega(s)$, graph $G$ contains a complete bipartite graph $H$ of $N$ vertices and a matching $M$ of size $(n-N)/2$. Moreover, for any $e=(u,v)\in M$,  vertex $u$ has edges to all the vertices in one part of $H$ and $v$ has edges to all the vertices in the other part of $H$. Note that such a number $N$ exists since $s = o(n)$. Let us first give an upper bound for $|\nu (G_p)|$ by constructing a vertex cover $C$ of this graph. $C$ contains all the vertices of graph $H$ and an endpoint of any edge in $M$ that is realized (an endpoint of any edge in $M \cup G_p$). This set clearly covers all the edges of $G_p$. Since any edge in $M$ is realized with probability $p$, we have 
\begin{align} \label{eq:wfgpwdpqgf} \E[|\nu(G_p)|] \leq \E[|C|]= N + \frac{p(n-N)}{2} \leq pn/2  + N/2 = pn/2 +o(n).\end{align}
Having established this upper bound, the next step is to give a lower bound for $|\nu(S \cup Q_p)|$. Consider an edge  $e = (u, v)\in M$, and let us compute $\Pr[e\in Q]$. Both end pints of this edge have degree $N/2 +1$ in graph $G$. Since each vertex randomly chooses $s$ edges, for the probability of $e$ being sampled we have: 
$$\Pr[e\in Q] \leq \frac{2 s}{N/2 +1} = o(1).$$
As a result the expected number of edges sampled from matching $M$ is at most $(1-o(1))|M|.$ Since any edge that is not queried (sampled) should be covered, any valid vertex cover of $S\cup Q_p$ should contain at least one end point of edges in $M$ that are not sampled. Hence, we have  $$\E[|\nu(S \cup Q_p)|] \geq (1-o(1))|M|= (1-o(1))(n-N)/2  = n/2 - o(n).$$ Combining this with (\ref{eq:wfgpwdpqgf}) gives us:
$$\frac{\E[|\nu(S \cup Q_p)|]}{\E[|\nu(G_p)|]} \geq\frac{n/2-o(n)}{pn/2 + o(n)} = \frac{1}{p} - o(1). $$
This proves our claim as it means that by randomly sampling $o(n)$ edges for any vertex in graph $G$, one cannot construct a valid vertex cover of $G_p$ with an approximation ratio smaller than $(1/p- o(1))$.

\bibliographystyle{plain}
\bibliography{refs}
	
\appendix
\section{Deferred Proofs}\label{sec:proofs}

\subsection{Polynomial-Time Implementation of Algorithm~\ref{alg:bipartite}}\label{apx:polytime}

Algorithm~\ref{alg:bipartite} which proves Theorem~\ref{thm:bipartitevc} relies on a partitioning of Lemma~\ref{lem:vcpartition}. The described proof for this lemma, as stated above, is not through a polynomial-time construction. In this section, we address this issue and explain how the guarantee of Lemma~\ref{lem:vcpartition} can also be achieved in polynomial (randomized) time, which leads to Algorithm~\ref{alg:bipartite} running in polynomial-time.

The reason that our algorithm for finding the partitioning $(Q, S)$ of Lemma~\ref{lem:vcpartition} is not polynomial-time, is that we assume each matching algorithm $\SM{i}$ maximizes the quadratic objective (\ref{eq:objective}) which is not clear how to do in polynomial time. Here, however, we show how it is possible to get around this barrier and achieve the same guarantee in polynomial time too.

Suppose that we define partitionings $(Q_i, S_i)$ as before, but instead of using a matching algorithm $\SM{i}$ for each which maximizes $\Phi_i(\SM{i})$, we use simply an arbitrary maximum matching algorithm $\mc{M}_i'$. What can go wrong? We used the assumption that $\SM{i}$ maximizes $\Phi_i(\SM{i})$ in Claim~\ref{cl:matchingsdecreasing} which led to Claim~\ref{cl:interval}, proving existence of $\Omega(\epsilon^6 p^3 k)$ partitionings where every pair of them achieve the same objective up to an additive $O(\epsilon^6 p^3 \mu(G))$ difference.  It was also used crucially in the proof of Claim~\ref{cl:Dicontinuestocontribute}, where we constructed an algorithm $\SM{i, j}$ which returns either the output of $\SM{i}(H_i)$ or that of $\SM{j}(H_j)$ each with probability $1/2$. There we argued that if the desired bound of Claim~\ref{cl:Dicontinuestocontribute} does not hold, then $\Phi_i(\SM{i, j}) > \Phi_i(\SM{i}) + 0.01 \epsilon^6 p^3 \mu(G)$ (see (\ref{eq:htcr1823973})),  contradicting the assumption that $\Phi_i(\SM{i})$ is the maximum possible achievable objective. However, in order for this to lead to a contradiction, we do not necessarily need $\Phi_i(\SM{i})$ to have the maximum possible value. Rather, it is sufficient to merely guarantee $\Phi_i(\SM{i, j}) < \Phi_i(\SM{i}) + 0.01 \epsilon^6 p^3 \mu(G)$ for all $i < j$. Also in order to guarantee Claim~\ref{cl:interval}, it suffices to have, say, $\Phi_i(\SM{i}) \geq \Phi_j(\SM{j}) \pm O((\epsilon p)^{10})$ for all $i < j$.

To achieve this guarantee, we first present in Claim~\ref{cl:estimator} a polynomial-time randomized algorithm for estimating $\Phi_i(\mc{M}')$ with $O(1)$ additive error, for any given polynomial-time algorithm $\mc{M}'$.

\begin{claim}\label{cl:estimator}
	Given any matching algorithm $\mc{M}'$, it is possible to estimate the value of $\Phi_i(\mc{M}')$ (for any $i$) with $O(1)$ additive error in polynomial time, with high probability.
\end{claim}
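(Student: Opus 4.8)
The plan is to estimate $\Phi_i(\mc{M}')$ by a straightforward Monte Carlo scheme applied to the edge marginals. Writing $x_e := \Pr_{\mc{M}', H_i}[e \in \mc{M}'(H_i)]$, recall that $\Phi_i(\mc{M}') = \sum_{e \in E}(x_e - \epsilon x_e^2)$, so it is enough to estimate every $x_e$ accurately. Since the partition $(Q_i, S_i)$ is explicitly available, I would draw $T$ independent copies of $H_i$ (keeping every edge of $S_i$, and including each edge of $Q_i$ independently with probability $p$), run the polynomial-time algorithm $\mc{M}'$ on each copy with fresh internal randomness, and set $\hat x_e$ to be the fraction of the $T$ runs in which $e$ appears in the output matching. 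The estimator is then $\hat\Phi_i := \sum_{e \in E}(\hat x_e - \epsilon \hat x_e^2)$.

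The next step is the accuracy analysis. For a fixed edge $e$, the indicator of $e \in \mc{M}'(H_i)$ in a single run is Bernoulli with mean $x_e$ --- its randomness is exactly the realization of the $Q_i$-edges together with the coins of $\mc{M}'$, so $\E[\hat x_e]=x_e$ and the $T$ indicators are i.i.d. Hoeffding's inequality gives $\Pr[|\hat x_e - x_e| > \delta] \le 2e^{-2T\delta^2}$, and a union bound over the at most $\binom{n}{2}$ edges shows that $T = O(\delta^{-2}\log n)$ samples suffice to guarantee $\max_e |\hat x_e - x_e| \le \delta$ with high probability. On this event, since $x_e, \hat x_e \in [0,1]$ we have $|x_e^2 - \hat x_e^2| \le 2|x_e - \hat x_e|$, hence
\[
	|\Phi_i(\mc{M}') - \hat\Phi_i| \;\le\; \sum_{e\in E}|x_e-\hat x_e| + \epsilon\sum_{e\in E}|x_e^2-\hat x_e^2| \;\le\; (1+2\epsilon)\,|E|\,\delta.
\]
Taking $\delta = \Theta(1/|E|)$ (so $\delta = \Theta(1/n^2)$ in the worst case) makes the right-hand side $O(1)$ while keeping $T = O(n^4\log n)$; generating each sample and running $\mc{M}'$ on it costs $\poly(n)$ time, so the whole procedure is polynomial-time, which completes the plan.

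I do not expect a genuine obstacle here. The two points that need care are $(i)$ that we are estimating $|E|$ marginals \emph{simultaneously}, so the per-edge accuracy $\delta$ must be pushed down to $\Theta(1/|E|)$ --- harmless, since $|E| \le n^2$ keeps the sample count polynomial --- and $(ii)$ that the contribution of the quadratic term $\epsilon \hat x_e^2$ to the error is handled through the same per-edge deviations via $|a^2-b^2|\le 2|a-b|$ on $[0,1]$, so no separate estimator is needed for it. I would also note that the identical scheme with $\delta = \Theta(\eta/|E|)$ estimates $\Phi_i(\mc{M}')$ to within any additive $\eta = 1/\poly(n)$; in particular, taking $\eta$ of order $(\epsilon p)^{10}$, which is what the downstream arguments (Claims~\ref{cl:interval} and \ref{cl:Dicontinuestocontribute}) actually require, still needs only polynomially many samples.
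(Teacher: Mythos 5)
Your proposal is correct and matches the paper's proof essentially verbatim: both estimate each marginal $\Pr[e \in \mc{M}'(H_i)]$ empirically from polynomially many independent runs of $\mc{M}'$ on fresh samples of $H_i$, apply a Chernoff/Hoeffding bound plus a union bound over the at most $O(n^2)$ edges to get per-edge accuracy $O(1/n^2)$, and then sum the per-edge errors (including the quadratic term) to obtain $O(1)$ total additive error. The only differences are cosmetic (the paper fixes $t=n^6$ samples rather than $O(n^4\log n)$, and your explicit $|a^2-b^2|\le 2|a-b|$ bound for the quadratic term is a nice touch the paper leaves implicit).
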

\begin{proof}
	For each edge $e$, let $p_e := \Pr[e \in \mc{M}'(H_i)]$ and recall from (\ref{eq:objective}) that 
	$
	\Phi_i(\mc{M}') = \sum_{e \in E} p_e - \epsilon p_e^2.
	$
	We do not know the value of $p_e$ but can estimate it by random sampling.  Take $t = n^6$ independent outputs $M_1, \ldots, M_t$ of the random matching $\mc{M}'(H_i)$. Let $X_e := \sum_{i = 1}^t \pmb{1}(e \in M_i)$ be the number of these matchings that include $e$. Since $\E X_e = t \cdot p_e$, a simple application of Chernoff bound gives
	$$
		\Pr\left[|X_e - t p_e| \geq \sqrt{2 t \ln n}\right] \leq 2e^{-4 \ln n} \leq 2n^{-4}.
	$$
	Therefore, by letting $q_e := \frac{1}{t} X_e$ and a union bound over the less than $n^2/2$ choices of $e$, we get that with probability at least $1- n^{-2}$, for any edge $e$ it holds that $|q_e - p_e| < \frac{\sqrt{2t \ln n}}{t} < \frac{1}{n^2}$.
	
	Using this estimator $q_e$ instead of $p_e$, with probability $1-n^{-2}$ we get the following estimator with constant additive error
	$$
		\sum_{e \in E} q_e - \epsilon q_e^2 =  \Phi_i(\mc{M}') \pm O\left(n^2 \cdot \frac{1}{n^2}\right) = \Phi_i(\mc{M}') \pm O(1),
	$$
	completing the proof.
\end{proof}

Having this estimator, we then use an arbitrary maximum matching algorithm $\mc{M}_i'$ for each partitioning $(Q_i, S_i)$ and based on that construct the next partitioning $(Q_{i+1}, S_{i+1})$. Then for a margin $\delta = \Theta((\epsilon p)^{10} \mu(G))$, if it happens for some $j > i$, that our estimator predicts $\Phi_j(\mc{M'}_j) > \Phi_i(\mc{M'}_i) + \delta$ or $\Phi_i(\mc{M'}_{i, j}) > \Phi_i(\mc{M'}_i) + \delta$ (where $\Phi_i(\mc{M'}_{i, j})$ returns the matching $\Phi_i(\mc{M'}_{i})$ with probability $1/2$ and $\Phi_j(\mc{M'}_{j})$ otherwise), we simply use that algorithm instead of $\mc{M'}_i$ for $i$. Note that using this new algorithm may cause $D_i$ to change and so we may need to re-construct the the partitionings $(Q_{i+1}, S_{i+1}), \ldots, (Q_k, S_k)$. 

Finally, we argue why this process stops after polynomially many iterations. Every time that we change the matching algorithm of a partitioning $(Q_i, S_i)$, its objective $\Phi_i$ (and not just its estimation) increases by $O((\epsilon p)^{10}\mu(G))$. On the other hand, by definition (\ref{eq:objective}), the value of $\Phi_i$ for any $i$ is upper bounded by the maximum matching $\mu(G)$ of $G$. Hence, this $\Theta((\epsilon p)^{10}\mu(G))$ increase in $\Phi_i$ can only happen for at most $O((\epsilon p)^{-10})$ steps for any fixed partitioning $(Q_i, S_i)$. This, in turn, implies that the total number of changing the matching algorithm for any of the $k$ partitionings is bounded by $k^{O(1/\epsilon^{10}p^{10})} = O_{\epsilon, p}(1)$. As a result, the whole process takes $O_{\epsilon, p}(1) \cdot \poly(n)$ time.

\subsection{Proof of Claim~\ref{cl:yv-concentrate}}

In this section, we prove Claim~\ref{cl:yv-concentrate}. We start with the notation we use in the proof.

\smparagraph{Notation:} We fix an arbitrary vertex $v \in A$ and prove Claim~\ref{cl:yv-concentrate} for it. We use $u_1, \ldots, u_d$ to denote the neighbors of $v$ in $S$, and denote $e_j = (u_j, v)$. We also let $A' = \{v_1, \ldots, v_{n'}\} = A \setminus \{v\}$ be the set of all vertices in $A$ excluding $v$.

We start with an auxiliary claim that will be helpful both in bounding the expected value of random variable $(y_v \mid \notvprop)$ and also proving a concentration bound for it.

\begin{claim}\label{cl:lllcccgg-h1230}
	It holds that $\frac{1}{\Pr[\notvprop]} \sum_{i=1}^d q_{e_i} \leq q_v$.
\end{claim}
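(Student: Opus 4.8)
The plan is to reuse, essentially verbatim, the reasoning already carried out in Claim~\ref{cl:u-frac-exp} for vertices $u \in B$: the inequality here is nothing but the ``numerator vs.\ denominator'' estimate from that computation, transplanted to a vertex $v \in A$. So no new idea is needed, only a careful bookkeeping of which probabilities and $\b{q}$-masses play which role.

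First I would fix the notation. Write $q^S_v := \sum_{e \ni v,\, e \in S} q_e$ and $q^Q_v := \sum_{e \ni v,\, e \in Q} q_e$, so that $q_v = q^S_v + q^Q_v$ since every edge incident to $v$ lies in exactly one of $Q$ and $S$. Because $u_1,\dots,u_d$ are by definition precisely the $S$-neighbors of $v$ and $e_i = (u_i, v)$, we get $\sum_{i=1}^d q_{e_i} = q^S_v$. Next I would read off $\Pr[\notvprop]$ from Corollary~\ref{cor:dchuaoe18279312873}$(ii)$: there $\Pr[\vprop] = (1-\epsilon)\Pr[v \in \mc{M}(Q_p)] = (1-\epsilon)\, q^Q_v$ (the last equality using assumption~1 of Lemma~\ref{lem:appvil}), hence $\Pr[\notvprop] = 1 - (1-\epsilon)\, q^Q_v \ge 1 - q^Q_v$.

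If $q^S_v = 0$ the claim is trivial, so assume $q^S_v > 0$; then $q_v - q^Q_v = q^S_v > 0$, which keeps all denominators below positive. The chain of elementary manipulations to carry out is
\[
\frac{1}{\Pr[\notvprop]}\sum_{i=1}^{d} q_{e_i}
= \frac{q^S_v}{\Pr[\notvprop]}
\le \frac{q^S_v}{1 - q^Q_v}
= \frac{q_v\, q^S_v}{q_v - q_v q^Q_v}
\le \frac{q_v\, q^S_v}{q_v - q^Q_v}
= \frac{q_v\, q^S_v}{q^S_v}
= q_v ,
\]
where the second inequality uses $q_v \le 1$ (so $q_v q^Q_v \le q^Q_v$), which is exactly the same step as in Claim~\ref{cl:u-frac-exp}.

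I do not expect any real obstacle: every line above is a one-step algebraic rewrite. The only two points demanding a moment of care are (a) extracting $\Pr[\notvprop]$ from Corollary~\ref{cor:dchuaoe18279312873}$(ii)$ rather than directly from the description of Algorithm~\ref{alg:B} (the down-sampling to $M_\mc{A}$ introduces the benign $(1-\epsilon)$ factor, which only helps), and (b) disposing of the degenerate case $q^S_v = 0$ before dividing by it — a case that is silently glossed over in the proof of Claim~\ref{cl:u-frac-exp} but costs nothing to mention.
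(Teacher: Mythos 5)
Your proof is correct and follows essentially the same route as the paper's: extract $\Pr[\notvprop]\ge 1-q^Q_v$ from Corollary~\ref{cor:dchuaoe18279312873}$(ii)$ together with assumption~1 of Lemma~\ref{lem:appvil}, then run the identical algebraic chain $\frac{q^S_v}{1-q^Q_v}\le\frac{q_v q^S_v}{q_v-q^Q_v}=q_v$ using $q_v\le 1$. Your explicit treatment of the degenerate case $q^S_v=0$ is a harmless (and slightly more careful) addition, not a different argument.
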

\begin{proof}
	Observe from  Corrolary~\ref{cor:dchuaoe18279312873} Property $(ii)$ that $\Pr[\vprop] = (1-\epsilon) \Pr[v \in \mc{M}(Q_p)]$. Thus,
	$$
	\Pr[\notvprop] = 1-\Pr[\vprop] = 1 - (1-\epsilon) \Pr[v \in \mc{M}(Q_p)] \geq 1 - \Pr[v \in \mc{M}(Q_p)].
	$$
	As a result, we get
	$$
		\frac{\sum_{i=1}^d q_{e_i}}{\Pr[\notvprop]} \leq \frac{\sum_{i=1}^d q_{e_i}}{1 - \Pr[v \in \mc{M}(Q_p)]}.
	$$
	Let $q^S_v$ denote the sum of $q_e$'s written on edges $e \in S$ connected to $v$ and let $q^Q_v$ denote the same but on edges of $v$ in $Q$. Observe that the nominator of the fraction above is exactly $q_v^S$ and the denominator is $1-q^Q_v$ by the first assumption of Lemma~\ref{lem:appvil}. Combined with $q^S_v + q^Q_v = q_v$ and $q_v \leq 1$ (since $\b{q}$ is a valid fractional matching by  Lemma~\ref{lem:appvil}) we get
	\begin{equation}
\frac{\sum_{i=1}^d q_{e_i}}{1 - \Pr[v \in \mc{M}(Q_p)]} =  \frac{q^S_v}{1-q^Q_v} = \frac{q_v \cdot q^S_v}{q_v (1-q_v^Q)} = \frac{q_v \cdot q^S_v}{q_v-q_v \cdot q_v^Q}  \leq \frac{q_v \cdot q^S_v}{q_v - q^Q_v} = \frac{q_v \cdot q^S_v}{q^S_v} = q_v,
\end{equation} 
	which is the stated bound.
\end{proof}

Let us first bound the expected value of $y_v$ conditioned on event $(\notvprop)$.

\begin{claim}\label{cl:expyv}
	Let $v$ be as above. Then $\E[y_v \mid \notvprop] \leq q_v$.
\end{claim}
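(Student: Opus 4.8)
The plan is to expand $y_v$ edge by edge, just as in the proof of Claim~\ref{cl:u-frac-exp} for the $B$-side, and reduce everything to the already-established Claim~\ref{cl:lllcccgg-h1230}. Fix $v\in A$ with neighbors $u_1,\dots,u_d$ in $S$ and write $e_j=(u_j,v)$.

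First I would observe that once we condition on $(\notvprop)$, vertex $v$ is necessarily unmatched in $M_\mc{B}$: in Algorithm~\ref{alg:B} a vertex of $A$ can only be matched if it proposes. Hence, by the definition (\ref{eq:def-y}) of $\b{y}$, every edge $e\in Q$ incident to $v$ has $y_e=0$ in this conditional world, and the only edges contributing to $y_v$ are the $S$-edges $e_1,\dots,e_d$, so $y_v=\sum_{j=1}^d y_{e_j}$.

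Next I would compute $\E[y_{e_j}\mid\notvprop]$ for a fixed $j$. By (\ref{eq:def-y}), conditioned on $(\notvprop)$ the variable $y_{e_j}$ equals $q_{e_j}/(p\,\Pr[\notvprop]\,\Pr[u_j\notin M_\mc{B}])$ on the event $\{e_j\in S_p,\ u_j\notin M_\mc{B}\}$ and is $0$ otherwise. Now $\{e_j\in S_p\}$ is independent of the construction of $M_\mc{B}$ because $M_\mc{B}$ is built without inspecting the realization of $S$; and $\{u_j\notin M_\mc{B}\}$ is determined by the proposals of $u_j$'s neighbors in $Q$, which do not include $v$ since $(u_j,v)\in S$, so by Corollary~\ref{cor:dchuaoe18279312873}$(iv)$ it is independent of $(\notvprop)$. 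Therefore
$$
\Pr[e_j\in S_p,\ u_j\notin M_\mc{B}\mid\notvprop]=\Pr[e_j\in S_p]\cdot\Pr[u_j\notin M_\mc{B}]=p\cdot\Pr[u_j\notin M_\mc{B}],
$$
which gives
$$
\E[y_{e_j}\mid\notvprop]=\frac{q_{e_j}}{p\,\Pr[\notvprop]\,\Pr[u_j\notin M_\mc{B}]}\cdot p\cdot\Pr[u_j\notin M_\mc{B}]=\frac{q_{e_j}}{\Pr[\notvprop]}.
$$

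Summing over $j$ and invoking Claim~\ref{cl:lllcccgg-h1230} then finishes the proof:
$$
\E[y_v\mid\notvprop]=\frac{1}{\Pr[\notvprop]}\sum_{j=1}^d q_{e_j}\le q_v.
$$
There is no real obstacle here; the only point requiring care is bookkeeping which randomness is frozen by conditioning on $(\notvprop)$, so that the two independence statements — edge realization versus the construction of $M_\mc{B}$, and $v$'s proposal versus $u_j$'s matched status — can both be applied cleanly.
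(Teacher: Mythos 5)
Your proof is correct and follows essentially the same route as the paper's: expand $\E[y_v\mid\notvprop]$ over the $S$-edges, use the independence of edge realization from the construction of $M_\mc{B}$ together with Corollary~\ref{cor:dchuaoe18279312873}$(iv)$ to get $\sum_j q_{e_j}/\Pr[\notvprop]$, and finish with Claim~\ref{cl:lllcccgg-h1230}. Your explicit remark that conditioning on $(\notvprop)$ zeroes out the $Q$-edge contributions to $y_v$ is a small justification the paper leaves implicit, but it changes nothing substantive.
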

\begin{proof}
	We have
	$$
		\E[y_v \mid \notvprop] = \sum_{i=1}^d \frac{q_{e_i}}{p \Pr[u_i \not\in M_\mc{B}] \Pr[\notvprop]} \cdot \Pr[e_i \in S_p, u \not\in M_\mc{B}, \notvprop \mid \notvprop].
	$$
	Event $e_i \in S_p$ is independent of $u \not\in M_\mc{B}, \notvprop$ as discussed before and $\notvprop$ and $u \not\in M_\mc{B}$ are also independent by Corollary~\ref{cor:dchuaoe18279312873} Property $(iv)$. This means 
	$$
		\E[y_v \mid \notvprop] = \sum_{i=1}^d \frac{q_{e_i}}{p \Pr[u_i \not\in M_\mc{B}] \Pr[\notvprop]} \cdot p \Pr[u_i \not\in M_\mc{B}] = \frac{1}{\Pr[\notvprop]} \sum_{i=1}^d q_{e_i}.
	$$
	Applying Claim~\ref{cl:lllcccgg-h1230} on the RHS concludes the claim.
\end{proof}

\begin{claim}\label{cl:covv}
	For any pair of edges $e_i=(v, u_i)$ and $e_j=(v, u_j)$ with $e_i \not= e_j$ and  $\{u_i,u_j\} \subseteq \{u_1, \dots, u_d\}$, let  $y'_{e_i} = (y_{e_i} \mid \notvprop)$ and $ y'_{e_j}=(y_{e_j} \mid \notvprop)$. We have $\Cov(y'_{e_i}, y'_{e_j})\leq 0$.
\end{claim}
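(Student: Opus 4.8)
The plan is to peel off the randomness in three stages: first the realization status of the two $S$-edges, then the conditioning on $(\notvprop)$, reducing the statement to a single inequality about the two "$u$ is unmatched in $M_\mc{B}$" indicators, which I would then prove by a direct product computation using that each $A$-vertex proposes to at most one neighbor.

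First I would set up notation. Since $e_i\neq e_j$ and both edges contain $v$, we have $u_i\neq u_j$. Write $c_i:=q_{e_i}/\big(p\,\Pr[\notvprop]\,\Pr[u_i\notin M_\mc{B}]\big)$, a constant (if $\Pr[u_i\notin M_\mc{B}]=0$ or $q_{e_i}=0$ the relevant variable is $0$ a.s.\ and the claim is trivial; by Corollary~\ref{cor:dchuaoe18279312873}$(iii)$ it is in fact $\geq\epsilon$). By the definition of $\b{y}$ on $S$, conditioned on $(\notvprop)$ we have $y'_{e_i}=c_i\,X_i\,W_i$ where $X_i:=\mathbf{1}[e_i\in S_p]$ and $W_i:=\mathbf{1}[u_i\notin M_\mc{B}]$, and likewise for $j$. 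The edges $e_i,e_j\in S$ are realized by independent $p$-coins that are untouched by the construction of $M_\mc{B}$ (which only looks at $Q_p$ and the proposals) and by $v$'s proposal; hence $(X_i,X_j)$ is independent of $(W_i,W_j,\mathbf{1}[\notvprop])$ and $X_i\perp X_j$. Therefore
\[
\E[y'_{e_i}y'_{e_j}] \;=\; c_ic_j\,\E[X_iX_j]\,\E[W_iW_j\mid\notvprop] \;=\; c_ic_j\,p^2\,\Pr[u_i\notin M_\mc{B},\,u_j\notin M_\mc{B}\mid\notvprop],
\]
and $\E[y'_{e_i}]=c_ip\,\Pr[u_i\notin M_\mc{B}]$, $\E[y'_{e_j}]=c_jp\,\Pr[u_j\notin M_\mc{B}]$, so $\Cov(y'_{e_i},y'_{e_j})=c_ic_jp^2\big(\Pr[u_i\notin M_\mc{B},\,u_j\notin M_\mc{B}\mid\notvprop]-\Pr[u_i\notin M_\mc{B}]\Pr[u_j\notin M_\mc{B}]\big)$.

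Next I would drop the conditioning. Since $e_i=(v,u_i)\in S$ and $e_j=(v,u_j)\in S$, neither $u_i$ nor $u_j$ is a $Q$-neighbor of $v$, so each event $\{u_i\notin M_\mc{B}\}$, $\{u_j\notin M_\mc{B}\}$ is a function of the proposals made by vertices in $A\setminus\{v\}$, while $(\notvprop)$ is a function of $v$'s proposal alone. As observed in the proof of Lemma~\ref{lem:vertexindependent}, the proposals of distinct $A$-vertices are mutually independent (each depends only on that vertex's $R_\cdot$ and its own coins). Hence $(\notvprop)$ is independent of the pair $\big(\mathbf{1}[u_i\notin M_\mc{B}],\mathbf{1}[u_j\notin M_\mc{B}]\big)$, so the conditional joint probability equals the unconditional one, and it remains to show
$\Pr[u_i\notin M_\mc{B},\,u_j\notin M_\mc{B}]\le\Pr[u_i\notin M_\mc{B}]\Pr[u_j\notin M_\mc{B}]$.

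Finally, for this product inequality: for $w\in A$ let $P_w$ be the neighbor $w$ proposes to (or $\bot$); the $\{P_w\}_{w\in A}$ are independent, and $u\in B$ is matched in $M_\mc{B}$ iff some $Q$-neighbor proposes to it, so $\{u_i\notin M_\mc{B}\}=\bigcap_{w\in N_Q(u_i)}\{P_w\neq u_i\}$ and similarly for $u_j$. For $w\in N_Q(u_i)\cap N_Q(u_j)$, since $\{P_w=u_i\}$ and $\{P_w=u_j\}$ are disjoint, $\Pr[P_w\neq u_i,\,P_w\neq u_j]=1-\Pr[P_w=u_i]-\Pr[P_w=u_j]\le(1-\Pr[P_w=u_i])(1-\Pr[P_w=u_j])$; for $w$ that is a $Q$-neighbor of exactly one of $u_i,u_j$ the matching factor is an equality. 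Multiplying these per-$w$ comparisons over $w\in N_Q(u_i)\cup N_Q(u_j)$, using independence of the $P_w$, gives $\Pr[u_i\notin M_\mc{B},\,u_j\notin M_\mc{B}]=\prod_w\Pr[E_w]\le\prod_w F_w=\Pr[u_i\notin M_\mc{B}]\Pr[u_j\notin M_\mc{B}]$, whence $\Cov(y'_{e_i},y'_{e_j})\le0$. I expect the only delicate step to be justifying that the conditioning on $(\notvprop)$ really drops out — that rests on the mutual independence of the $A$-side proposals together with $u_i,u_j\notin N_Q(v)$ — after which the computation is routine.
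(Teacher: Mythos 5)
Your proof is correct, but it takes a genuinely different route from the paper. The paper proves Claim~\ref{cl:covv} by invoking the general machinery of negatively associated (NA) random variables: it cites that Bernoulli variables summing to at most one are NA, that independent unions of NA families are NA, and that monotone functions on disjoint index sets preserve NA, then chains proposals $\to$ matched-indicators $z_1,\dots,z_d$ $\to$ the variables $y'_{e_1},\dots,y'_{e_d}$, and concludes nonpositive pairwise covariance from NA. You instead strip the problem down to the single inequality $\Pr[u_i\notin M_\mc{B},\,u_j\notin M_\mc{B}]\le\Pr[u_i\notin M_\mc{B}]\Pr[u_j\notin M_\mc{B}]$ and prove it bare-hands: factor over the independent proposers $w\in A\setminus\{v\}$ and use that $\{P_w=u_i\}$ and $\{P_w=u_j\}$ are disjoint (each $A$-vertex proposes at most once), so $1-a-b\le(1-a)(1-b)$ per proposer. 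This is exactly the combinatorial fact that drives the NA property, so the underlying phenomenon is the same, but your version is self-contained, avoids citing the NA literature, and is more careful on two points the paper glosses over: you explicitly justify why conditioning on $(\notvprop)$ can be dropped (because $u_i,u_j\notin N_Q(v)$ and the $A$-side proposals are mutually independent), and you avoid the paper's minor slip of calling $y'_{e_i}$ a \emph{nondecreasing} function of $z_i$ when it is in fact nonincreasing (harmless under NA closure for uniformly monotone maps, but your direct computation sidesteps the issue entirely). What the paper's approach buys in exchange is brevity and a stronger conclusion — full negative association of the entire family $y'_{e_1},\dots,y'_{e_d}$ — though only pairwise covariances are needed for Claim~\ref{cl:yv-concentrate}.
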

\begin{proof}
To prove this claim, we will use some known facts about negatively associated (NA) random variables. By definition, a set of random variables are NA, if any two monotone nondecreasing functions $f$ and $g$ defined on disjoint subsets of them satisfy $\E[f.g]\leq \E[g].\E[f].$ Below are three facts about negative association based on \cite{khursheed1981positive, joag1983negative, dubhashi1996balls}.
\begin{enumerate}
\item[(1)] Any set of Bernoulli random variables whose sum is upper-bounded by one are NA. 
\item[(2)] If $A$ is a set of NA random variables, $B$ is a set of NA random variables with $A$ and $B$ independent of each other, $A \cup B$ is also a set of NA random variables.	
\item[(3)] Let $X=\{x_1, \dots, x_m\}$ be a set of NA random variables. If $f_1, \dots, f_k$ are a set of monotone nondecreasing functions defined on disjoint subsets of $X$, then $f_1, \dots, f_k$ are NA.
\item[(4)] Let $\{x_1, \dots x_m\}$ be a set of NA random variables. Then, for any $i \not= j$, $\Cov(x_i, x_j)\leq 0.$
\end{enumerate}

We will start by showing that random variables $y'_{e_1}, \dots, y'_{e_d}$ are NA where for any $i\in [d]$, we define $ y'_{e_i}=(y_{e_i} \mid \notvprop)$. For the rest of the proof, we will omit the condition \notvprop{} from all the statements for simplicity.

For any pair of vertices $i\in A/\{v\}$ and $j\in B$, let us define Bernoulli random variable $x_{i,j}$ to be equal to one iff vertex $i$ sends a proposal to vertex $j$. Note that for any $i$, we have $\sum_{j\in B} x_{i,j}\leq 1$. Also, since vertices send their proposals independently from each other,  invoking the first two facts above implies $\{x_{i,j} : \forall i\in A/ \{v\}, j\in B\}$ is a set of NA Bernoulli random variables. Now, for any vertex $u_j\in\{u_1, \dots, u_d\}$, we define random variable $z_j$ to be equal to one if $u_j \in M_\mc{B}$. Note that we have $z_j=1$ iff $j$ receives at least one proposal from $A/\{v\}$ and so it is a monotone nondecreasing function of  $\{x_{i,j} : i\in A/\{v\}\}$. Since for any $z_j$ and $z_j'$ these subsets are disjoint, invoking the third fact implies $z_i, \dots, z_d$ are NA. 
Finally, for any edge, we know $y'_{e_i}$ is a monotone nondecreasing function of $z_i$ and whether $e_i$ is realized. Based on the third fact, this implies negative association of  $y'_{e_1}, \dots, y'_{e_d}$. Thus, by the fourth property $\Cov(y'_{e_i}, y'_{e_j})\leq 0$ for any $i \not= j$.
\end{proof}

We are now ready to prove Claim~\ref{cl:yv-concentrate}.

\begin{proof}
	As proved in Claim~\ref{cl:u-frac-exp}, $\E[y_v \mid \notvprop] \leq q_v$. We prove the desired inequality of the claim via a concentration bound on random variable $y'_v := (y_v \mid \notvprop)$. Let us for simplicity also define random variable $y'_{e_i} := (y_e \mid \notvprop)$ and observe that
	$y'_v = \sum_{i=1}^d y'_{e_i}.$ 
	
	By Claim~\ref{cl:covv} we know $\Cov(y'_{e_i}, y'_{e_j})\leq 0$ for any $i \not= j$. We can thus bound the variance of $y'_v$ as follows:
	\begin{align*}
		\Var[y'_v] &= \sum_{i=1}^d \Var[y'_{e_i}] + 2\sum_{1\leq i<j\leq d} \Cov(y'_{e_i}, y'_{e_j}) \\
		&\leq \sum_{i=1}^d \E[(y'_{e_i})^2] - \E[y'_{e_i}]^2 \leq \sum_{i=1}^d \E[(y'_{e_i})^2]  \stackrel{\text{see below}}{\leq} \tau \cdot \E[y'_v]  \stackrel{\text{Claim~\ref{cl:expyv}}}{\leq} \tau \cdot q_v,
	\end{align*}
	where $\tau$ is the maximum possible outcome of $y'_{e_i}$ for any $i \in [d]$. 
	
	Plugging this into Chebyshev's inequality, we get
	$$
		\Pr[y'_v > \E[y'_v]+\delta] \leq \frac{\Var[y'_v]}{\delta^2} \leq \frac{\tau q_v}{\delta^2}.
	$$
	Finally, for each edge $e_i$, by construction of $\b{y}$, $y'_{e_i}\leq y_{e_i} \leq q_{e_i}/(p\Pr[\notxprop{v}]\Pr[u_i \not\in M_\mc{B}]) \leq q_{e_i}/p\epsilon^2$ where the latter follows from Corollary~\ref{cor:dchuaoe18279312873}. Combined with $q_{e_i} \leq \epsilon^5 p$ by Lemma~\ref{lem:appvil} and $e_i \in S$, we get $\tau \leq \epsilon^3$. We thus get $\Pr[y'_v > 1+\epsilon] \leq \Pr[y'_v > \E[y'_v]+\epsilon] \leq \frac{\epsilon^3}{\epsilon^2} q_v$. This concentration of $y'_v$ implies $\E[y_v \mid y_v \leq 1+\epsilon,\notvprop] \leq \epsilon q_v$ and thus the stated bound of the claim.
\end{proof}

\end{document}